\newcommand{\RR}{\mathbb{R}}
\newcommand{\EE}{\mathbb{E}}
\newcommand{\GX}{\mathcal{G}}
\newcommand{\AX}{\mathcal{A}}
\newcommand{\NX}{\mathcal{N}}
\newcommand{\ZZ}{\mathbb{Z}}
\newcommand{\HH}{\mathbf{H}}
\newcommand{\cc}{\mathrm{C}}
\newcommand{\vv}{\mathrm{V}}
\newcommand{\Sonly}{\mathsf{S-only}}
\newtheoremstyle{theoremdd}
{\topsep}
{\topsep}
{\itshape}
{0pt}
{\fontfamily{cmss}\selectfont\bfseries}
{.}
{ }
{\thmname{#1}\thmnumber{ #2}\thmnote{ (#3)}}
\theoremstyle{theoremdd}
\newtheorem{theorem}{Theorem}
\newtheorem{lemma}{Lemma}
\newtheorem{definition}{Definition}
\newproof{pf1}{Proof of Theorem \ref{thm:mdp}}
\newproof{pf2}{Proof of Lemma \ref{Lem2}}
\newproof{pf}{Proof}
\newcommand{\thickhline}{%
    \noalign {\ifnum 0=`}\fi \hrule height 1.5pt
    \futurelet \reserved@a \@xhline
}
\newcolumntype{"}{@{\hskip\tabcolsep\vrule width 1pt\hskip\tabcolsep}}
\journal{Transportation Research Part E (VSI: ISTTT24 Collected Papers)}
\titleformat*{\section}{\fontfamily{cmss}\selectfont\large\bfseries}
\titleformat*{\subsection}{\fontfamily{cmss}\selectfont\normalsize\bfseries}
\titleformat*{\subsubsection}{\fontfamily{cmss}\selectfont\normalsize}
\let\today\relax
\def\ps@pprintTitle{%
	\let\@oddhead\@empty
	\let\@evenhead\@empty
	\def\@oddfoot{\footnotesize\itshape
		{This is a preprint of an article that was accepted for publication in Transportation Research Part E (VSI: ISTTT24 Collected Papers).} \hfill\today}%
	\let\@evenfoot\@oddfoot
}
\begin{document}
	
	\begin{frontmatter}
		
		
		\title{\fontfamily{cmss}\selectfont A real-time dispatching strategy for shared automated electric vehicles with performance guarantees}
		

		
\author[1,2]{Li Li}
\author[1]{Theodoros  Pantelidis}
\author[1]{Joseph Y.J. Chow}
\author[1,2]{Saif Eddin Jabari\corref{cor1}}
\cortext[cor1]{Corresponding author, e-mail: \url{sej7@nyu.edu}}
\address[1]{New York University Tandon School of Engineering, Brooklyn NY}
\address[2]{New York University Abu Dhabi, Saadiyat Island, P.O. Box 129188, Abu Dhabi, U.A.E.}

{ \fontfamily{cmss}\selectfont\large\bfseries		
\begin{abstract}
{ \normalfont\normalsize
	Car-sharing has emerged as a competitive technology for urban mobility. Combined with the upward trend in vehicle electrification and the promise of automation, it is expected that urban travel will change in fundamental ways in the near future.  Indeed, breakthroughs in battery technology and the incentive programs offered by governments worldwide have resulted in a continued increase in the market share of electric vehicles. Automation frees passengers from having to drive and seek parking, it also offers increased flexibility when selecting pick up locations.  These trends and incentives naturally suggest that shared automated electric vehicle (SAEV) systems will displace traditional gasoline-powered, human-driven car-sharing systems worldwide.
	
	Real-time vehicle dispatching operations in traditional car-sharing systems is an already computationally challenging scheduling problem. Electrification only exacerbates the computational difficulties as charge level constraints come into play.  To overcome this complexity, we employ an online \emph{minimum drift plus penalty} (\textsf{MDPP}) approach for SAEV systems that (i) does not require a priori knowledge of customer arrival rates to the different parts of the system (i.e. it is practical from a real-world deployment perspective), (ii) ensures the stability of customer waiting times, (iii) ensures that the deviation of dispatch costs from a desirable dispatch cost can be controlled, and (iv) has a computational time-complexity that allows for real-time implementation. Using an agent-based simulator developed for SAEV systems, we test the \textsf{MDPP} approach under two scenarios with real-world calibrated demand and charger distributions: 1) a low-demand scenario with long trips, and 2) a high-demand scenario with short trips. The comparisons with other algorithms under both scenarios show that the proposed online \textsf{MDPP} outperforms all other algorithms in terms of both reduced customer waiting times and vehicle dispatching costs.
}
\end{abstract}
}
		
\begin{keyword}
	Car-sharing \sep automated vehicles \sep electric vehicles \sep Lyapunov optimization \sep drift-plus-penalty  \sep vehicle recharging
\end{keyword}
		
\end{frontmatter}
	
	
	
\section{Introduction}
\label{S:intro}
Car-sharing is gaining popularity throughout the world but especially in big cities with dense populations, such as New York City, Tokyo, Moscow, and Shanghai. Customers of a car-sharing system have access to private cars without having to bear the costs and responsibilities of car ownership. The fees for using car-sharing services are usually much lower than taxis, e.g. customers only need to pay approximately one-third to one-half of the taxi fee to complete the same trip using EVCARD in Shanghai. Customers in a traditional car-sharing system search for nearby (available) vehicles through an app, book the vehicle they like, and then walk to the location of the vehicle that they booked. The vehicles can be picked up from and returned to any location in a one-way non-electric car-sharing system, such as the car2go in New York City.  In an electric car-sharing system like EVCARD in Shanghai, however, companies usually require customers to return the vehicles to charging stations and connect the vehicles to chargers before they leave to ensure they are recharged. Naturally, customers can only pick vehicles up from a charging station as a result. There is no vehicle-to-customer assignment optimization in traditional car-sharing systems; pick up and drop off location choices are left to the customers.  However, companies do need to resolve the potential imbalance (in vehicle distribution in the network) that results, and they do so with vehicle rebalancing schedules.  This also leads to complex staff rebalancing problems.  Examples in the literature on vehicle relocation problems for real-world car-sharing systems include \citep{smith2013rebalancing, barrios2014fleet, zhao2018integrated, xu2018electric, wang2019optimization}. These papers typically solve both vehicle relocation and staff rebalancing problems.

The potential future use of automated vehicles (AVs) in car sharing systems, \emph{shared automated vehicles} (SAVs), has also received attention in the literature \citep{zhang2015exploring, krueger2016preferences, liu2017tracking, levin2017congestion, ma2017designing,jung2019large}. SAVs have also been referred to as \emph{autonomous taxis} \citep{dandl2017microsimulation, burghout2015impacts, bischoff2016simulation} and \emph{automated mobility-on-demand} \citep{spieser2014toward, azevedo2016microsimulation}. With SAVs, companies no longer need human staff to relocate the vehicles. Instead of requiring customers to walk to the pick up locations, AVs can drive to the customer locations \citep{fagnant2014travel}. For such systems, vehicle-to-customer assignment becomes advantageous and, arguably, required.

Studies on SAVs have either focused on the vehicle-to-customer assignment problem \citep{seow2009collaborative, marczuk2015autonomous, bischoff2016simulation, hanna2016minimum, boesch2016autonomous, hyland2018dynamic} or vehicle relocation \citep{pavone2012robotic, volkov2012markov, marczuk2016simulation, zhang2016control, sayarshad2017non, wen2017rebalancing}. Others have considered the joint vehicle assignment and relocation problem \citep{fagnant2014travel, burghout2015impacts, fagnant2015operations, spieser2016vehicle, zhang2016model, gueriau2018samod}.  \cite{marczuk2015autonomous} assign the nearest available vehicle to customers, and customers are served on a first-come first-serve (FCFS) basis, while \cite{seow2009collaborative} aggregate the customer requests in a queue and dispatch the same number of vehicles to the queue when it reaches a threshold size.  Others propose the use of historical customer demand information (arrival rates) as part of rebalancing strategies. Among them, \cite{pavone2012robotic} propose a continuous-time fluid model for rebalancing that converges to a (stationary) system state where no customers are waiting and the number of vehicles used to rebalance the system is minimized. \cite{volkov2012markov} approach the rebalancing problem from an incident management perspective; their approach aims to ensure that the steady state service rate (``taxi'' assignment) exceeds or meets the steady state rate customer arrivals throughout the network.  \cite{zhang2016model} propose a model predictive control (MPC) approach for assignment of SAVs to customers and their rebalancing; their simulation results suggest that their approach outperforms the approaches in \citep{marczuk2015autonomous, seow2009collaborative, pavone2012robotic, volkov2012markov}. \cite{zhang2016model} also prove that their approach ensures stability of the queuing dynamics, i.e., that queues will not grow indefinitely. However, their MPC approach turns out to be a MILP problem which scales poorly to network size, and can only handle tens of nodes.

To cater to the increasing needs for electric vehicles brought by the breakthroughs in battery technology, \cite{zhang2016model} also develop a number of charging constraints with which the MPC could be integrated to deal with Shared Automated Electric Vehicles (SAEVs).  In spite of this, the charging is only considered as constraints and not optimized in the objective function. \cite{iacobucci2019optimization} extend the original MPC approach of \cite{zhang2016model} to minimize both waiting time and electricity cost through charging optimization. Results show that the modified model could significantly reduce charging cost without a big influence on the waiting time. However, the extended MPC is even more complicated and hence is still only feasible to small systems. There are some other literature dealing with big SAEV systems, and they usually use greedy/heuristic vehicle assigmnent/rebalancing algorithms and recharging rules in their simulations. For example, \cite{chen2016operations} use a greedy search algorithm to look for the closest available SAEV within a 5-minute travel time radius for each customer based on a FCFS rule. The simulation step is 5 minutes, and in each time step available vehicles are rebalanced if not assigned to customers, otherwise the system checks whether the vehicles have sufficient range to serve customers, those that do not have sufficient range are charged. A rebalancing algorithm used for SAVs in \cite{ fagnant2014travel} requiring knowledge of arrival rates is implemented in \cite{chen2016operations} along with a range check. While charging, vehicles are simply assumed to drive to the nearest charging stations and stay there until fully charged.

\cite{loeb2018shared} extend the agent-based SAV simulation tools presented in \citep{boesch2016autonomous,chen2016operations} to include a more precise monitoring of the real-time vehicle battery consumption,  hence simulating SAEVs.  Their framework assigns customers to the nearest available vehicles, and no rebalancing strategy is employed.  They introduce three conditions under which a vehicle recharges: (i) when the charge level falls below 5\%, (ii) when the vehicle is idle for more than 30 minutes, and (iii) when vehicles receive requests that they cannot fulfill because of low range and charge level is below 80\%. Moreover, unlike \citep{chen2016operations}, the framework in \citep{loeb2018shared} allows a charging vehicle to be assigned to customers but only when all other eligible vehicles are unavailable. Similarly, \cite{bauer2018cost} present a framework that allows vehicles to recharge in spurts in between trip requests; their approach treats charging vehicles and fully charged vehicles the same way.  This type of simulation setting is suitable for the scenario that is studied in \citep{bauer2018cost}: Manhattan taxi trips with an average distance of 3 kilometers. However, the rules that they use to decide where and when a vehicle should recharge depend on demand predictions, which is not realistic in practice.

A common feature of all of the approaches above is that they all treat vehicle recharging as an independent problem, independent of both the vehicle-to-customer assignment problem and the vehicle rebalancing problem.  In other words, vehicles are not permitted to charge en route to picking up customers or during rebalancing. This excludes the possibility of co-optimization for recharging, assignment and rebalancing.  Some recent papers \citep{ma2019optimal, li2019agent, pantelidis2020node} have addressed this limitation by allowing vehicles to recharge en route to their rebalancing destinations. However, these approaches either do not scale well computationally \citep{ma2019optimal, pantelidis2020node} or involve the use of heuristics and offer no guarantees of performance \citep{jung2014stochastic,li2019agent}. 

We propose a methodology that is particularly suitable for real-time operations of SAEVs and that comes with theoretical guarantees of performance.  We combine the computational simplicity of heuristic approaches with the mathematical rigor of optimization-based approaches. Specifically, we model the dispatching problem as a stochastic queuing network and employ Lyapunov optimization techniques to derive a policy that ensures stability of waiting times in the network while also accounting for dispatch costs.  We employ a minimum drift plus penalty (MDPP) framework \citep{neely2010stochastic}, in which the vehicle assignment and recharging problems are jointly optimized. The objective function seeks to minimize a combination of the vehicle dispatch cost and customer waiting times.  We provide a rigorous proof of the stability of customer waiting times within the network along with a theoretical bound on the deviation of vehicle dispatch costs from a desirable level.  Our approach is an online approach, which both simplifies the problem and has practical advantages (real-time operation). The system state (travel times) are calculated in a distributed manner by the unused vehicles in the system and updated on a periodic basis (e.g., once every 5 minutes), we show that the time complexity of these periodic operations are favorable.  The online scheduling approach has a time complexity that is linear in the number of customer queues in the system (number of waiting head-of-line customers).  The online algorithm does not require a priori knowledge of customer arrivals to the system, which renders it naturally applicable in real-world settings. We compare the proposed approach with several other algorithms from the literature using an agent-based simulator \citep{li2019agent}. Both a low-demand scenario with long trip distances and a high-demand scenario with short trip distances were tested. The simulation results show that the \textsf{MDPP} approach is superior to all other algorithms tested in terms of customer waiting times and dispatch costs with a proper choice of penalty constant. 

The rest of the paper is organized as follows: Sec.~\ref{sec:method} describes the queuing dynamics and formulates \textsf{MDPP}, and Sec.~\ref{sec:stability} provides formal proofs of the claimed performance guarantees of our model. A toy illustrative example is given in Sec.~\ref{S:example}, and numerical experiments and comparisons are provided in Sec.~\ref{S:simulation}. Sec.~\ref{S:Conc} concludes the paper.

\section{Methodology}
\label{sec:method}

\subsection{Network construction}
\label{SS:nc}
In a SAEV system, a customer enters the network, requests a vehicle, and waits for the system to assign a vehicle to them. Once a vehicle is assigned to the customer, the vehicle will automatically drive to the customer's pick up location and drive them to their destination. Once the assignment is made, the vehicle is committed to the customer over the duration of this process.  That is, the assignment is not changed if a new vehicle becomes available or a new customer enters the systems that might be closer to the customer or the assigned vehicle, respectively. We divide the network into small geographic zones. While customers may be picked up or dropped off in any zone, charging stations may only exist in some zones. We assume that those vehicles that drop customers off at zones with charging stations are automatically connected to chargers; while those vehicles that drop customers off at zones without charging stations will remain at the drop-off locations until re-assigned.

Let the graph $\GX = (\NX_{\cc}, \NX_{\vv}, \AX)$ represent the car-sharing network, where $\NX_{\cc}$ and $\NX_{\vv}$ are graph nodes that represent customers and vehicles, respectively, while $\AX$ is a set of arcs.  A customer node is defined as a zone with a specific charge level. Fig.~\ref{F:network_a} shows a simple network example with 10 zones. Each zone can be classified into multiple customer nodes with different charge levels; a 5-level example is shown in Fig.~\ref{F:network_b}.
\begin{figure}[h!]
	\centering
	\subfloat[]{\includegraphics[scale=0.45]{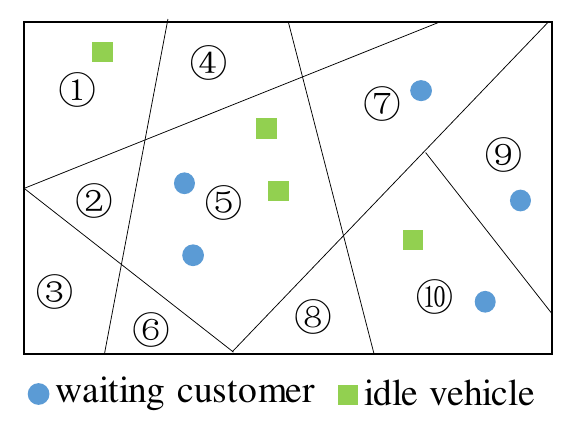}\label{F:network_a}} 
	\subfloat[]{\includegraphics[scale=0.45]{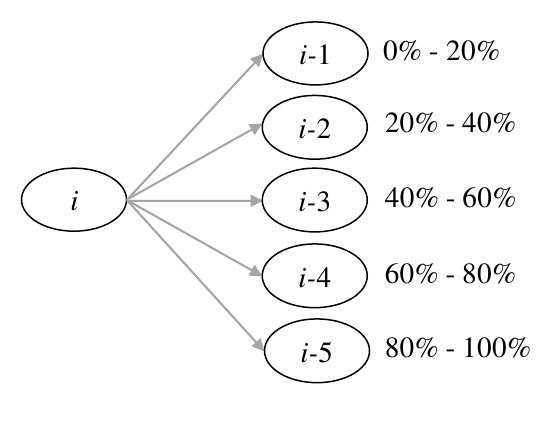}\label{F:network_b}}
	\caption{Network representation: (a) an example of a small network, and (b) conversion from a zone to multiple customer nodes.\label{F:network}}		
\end{figure}
An arc from a vehicle node $v \in \NX_{\vv}$ to a customer-charge node $c \in \NX_{\cc}$ only exists when vehicle $v$ is idle, as shown in Fig.~\ref{F:c2v_a}. The arc weight, denoted as $C_{vc}$, is the travel time (plus charge time, if charging is required) from $v$ to $c$. For example, if we assume that three of the zones in Fig.~\ref{F:network_a} include charging stations, as shown in Fig.~\ref{F:c2v_b}, then the arc weight for the $(v,c)$ pair in Fig.~\ref{F:c2v_b} equals to the travel time if $v$ has enough charge to serve customer-charge node $c$ and then drive to the nearest charging station after drop-off; otherwise $C_{vc}$ is equal to the travel time plus the required time to charge $v$.  With this representation, vehicle $v$ is considered to be idle whenever it is not being assigned to a customer, and a vehicle that is charging can be idle (as long as it is not assigned to a customer).  
\begin{figure}[h!]
	\centering
	\subfloat[]{\includegraphics[scale=0.42]{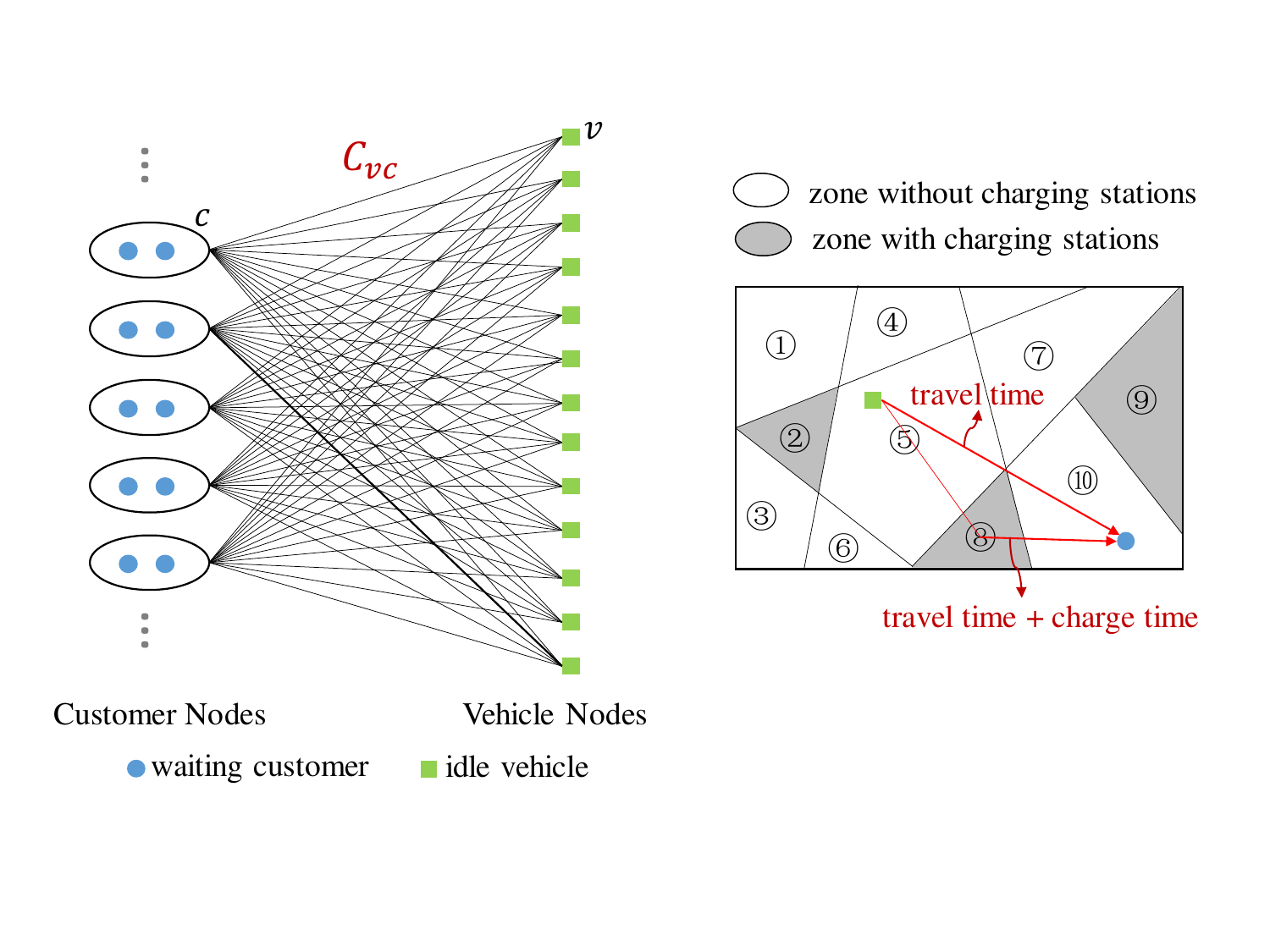}\label{F:c2v_a}} 
	\subfloat[]{\includegraphics[scale=0.42]{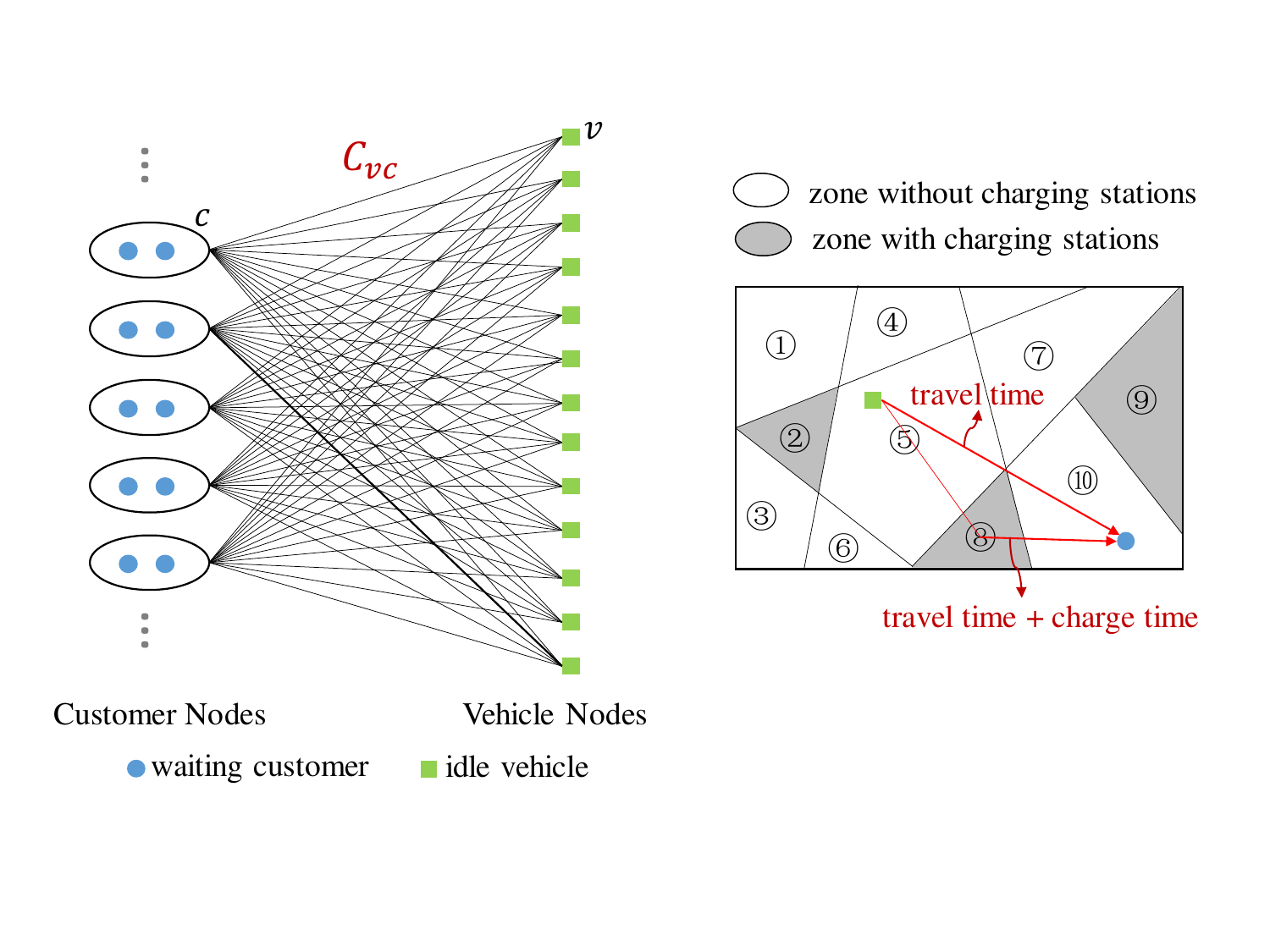}\label{F:c2v_b}}
	\caption{Arc weight: (a) a network with nodes connected by arcs, and (b) an example of arc weight calculation for a $(v,c)$ pair.\label{F:c2v}}		
\end{figure}

Note that the arcs in Fig.~\ref{F:c2v_b} do not represent the routes that the vehicles traverse, only the directed zone-to-zone end-points of those routes.  The real vehicle routes depend on the real-time traffic information from the road network, including the network road topology (physical distances) and traffic condition (travel times). We assume that every vehicle $v$ is capable of assessing $C_{vc}$ for every customer-charge level pair $c \in \NX_{\cc}$ and report the costs $\{C_{vc}\}_{c \in \NX_{\cc}}$ to the system operator. This is a reasonable assumption, given the capabilities of modern-day navigation systems.  From a computational standpoint, it is also feasible as we shall discuss next.  Let $\GX_{\mathrm{roads}} = (\mathcal{V},\mathcal{L})$ denote the road network with vertices $\mathcal{V}$ representing the network intersections and $\mathcal{L}$ the set of road segments in the network. We decompose the set $\mathcal{V}$ into two sets, $\mathcal{V}_{\mathrm{S}}$ and $\mathcal{V} \setminus \mathcal{V}_{\mathrm{S}}$, where $\mathcal{V}_{\mathrm{S}}$ are (the nearest intersections to) the network charging stations \emph{without customers}.  In the absence of charging requirements, the costs $\{C_{vc}\}_{c \in \NX_{\cc}}$ are determined by performing a \emph{single-source shortest-path} (SSSP) search. This can be achieved with a time complexity of $O(|\mathcal{L}|+|\mathcal{V}| \log \log \min\{|\mathcal{V}|,w^{\mathsf{max}}\})$, where $w^{\mathsf{max}}$ is the largest link weight (distance or cost) in the network \citep{thorup2004integer}.  We note that (i) this is a deterministic worst-case bound that (ii) does not make any assumptions about the structure of the network, and (iii) the underlying algorithm offers an exponential improvement over Dijkstra's algorithm.  There also exist speed-up techniques for road networks that are suitable for real-time implementations (see, e.g., \cite{madkour2017survey} and references therein).
	
For $(v,c)$ pairs that require charging, the shortest path from the vehicle to the location of the customer consists of a shortest path from the vehicle to a charging station (already determined by the SSSP for all nodes in $\mathcal{V}_{\mathrm{S}}$) and then a shortest path from the charging station to the customer.  One only needs to determine the charging station but this can be achieved with little overhead while searching for the shortest paths from the charging stations to the customers. Hence, employing the algorithm in \citep{thorup2004integer}, we have an overall complexity of $O(|\mathcal{V}_{\mathrm{S}}||\mathcal{L}|+|\mathcal{V}_{\mathrm{S}}||\mathcal{V}| \log \log \min\{|\mathcal{V}|,w^{\mathsf{max}}\})$ to determine $\{C_{vc}\}_{c \in \NX_{\cc}}$. Note that in most settings $|\mathcal{V}_{\mathrm{S}}| \ll |\mathcal{V}|$, rendering the resulting complexity quite favorable.  Note that, while $\{C_{vc}\}_{c \in \NX_{\cc}}$ vary with time, this variation is relatively slow.  In other words, in practice one only needs to perform periodic updates of the estimated travel costs (e.g., once every 5 minutes). 

\subsection{Queuing dynamics}
\label{SS:qd}

We discretize time into intervals of equal length, and use time $t$ to refer to the beginning of interval $[t, t+1)$. Let $H_c(t)$ denote the waiting time of the head-of-line (HOL) customer at (customer-charge) node $c \in \NX_{\cc}$ at time $t$.  The waiting time dynamics follow
\begin{equation}
\label{qd}
H_c(t+1) = \chi_c(t) \big[H_c(t) + 1 - x_c(t)\tau_c \big]^+ + \big( 1 - \chi_c(t) \big) A_c(t),
\end{equation}
where $[\cdot]^+ \equiv \max\{\cdot,0\}$, $\chi_c(t)$ is a binary variable equal to 0 if node $c$ is empty at time $t$ and 1 otherwise, $x_c(t)$ is a binary decision variable which is 1 if the HOL customer at node $c$ is served at time $t$ and 0 otherwise, $\tau_c$ is the inter-arrival time between the HOL customer and the following customer at node $c$, and $A_c(t) \in \{0,1\}$ is a (random) number of arrivals to node $c$ during time interval $[t, t+1)$.  We denote the average arrival rate to node $c$ by $\lambda_c$. Since arrivals are binary in each time step, $\lambda_c$ is also the probability of an arrival in a time step.  

A key advantage of the proposed approach is that \emph{stability of the waiting time process implies stability of the customer queuing process}.  We elaborate on this in Sec.~\ref{sec:stability} below. 
According to \eqref{qd}, a customer that arrives at time $t$  joins the queue at time $t+1$ with a waiting time of 1 [unit time].  We make the following observations about the dynamics:
\begin{enumerate}[label=(\roman*)]
	\setlength\itemsep{-.2em}
	\item If $\chi_c(t)=0$, then $H_c(t)=0$, $\tau_c=0$, and $H_c(t+1) = A_c(t)$.
	\item If $\chi_c(t)=1$, then $H_c(t) \in \ZZ_+$ and $\tau_c \in \ZZ_+$. It can be easily proven that inter-arrival times ($\tau_c$ in particular) are geometrically distributed with mean $\lambda_c^{-1}$.
	\item If $\chi_c(t) = 1$ and $x_c(t) = 0$, then $H_c(t+1) = H_c(t) + 1$. 
	\item If $\chi_c(t) = 1$ and $x_c(t) = 1$, then $H_c(t+1) = \big[H_c(t) + 1 - \tau_c \big]^+$.
\end{enumerate}
Observation (ii) also implies that the second moment of inter-arrival times is finite: 
\begin{equation}
	\EE \tau_c^2 = \frac{2 + \lambda_c}{\lambda_c^2}. \label{eq_secondMoment}
\end{equation}
This will be used to provide performance guarantees below.  
Observation (iii) says that if the HOL customer at node $c$ is not served at time $t$, then the HOL waiting time at $c$ is increased by 1 unit at time $t+1$. Point (iv) says that if the HOL customer at node $c$ is served at time $t$, then the HOL waiting time at node $c$ at time $t+1$ becomes the waiting time of the subsequent customer (which is the new HOL customer at time $t+1$) or 0 if the subsequent customer has not arrived (i.e., the inter-arrival time is greater than the waiting time of the HOL customer that is being served: $\tau_c > H_c(t) + 1$).  Equation \eqref{qd} and the observations (i)-(iv) above describe the dynamics of waiting \emph{before} customers enter service.  There is also a service time process, which depends on the time required for vehicles that have been assigned to reach their designated customers ($C_{vc}$ if $v$ is assigned to $c$).  We consider the ``service time'' in the assignment decisions below.

\subsection{Minimum drift plus penalty framework}
\label{ss:mdpp}
The minimum drift plus penalty (\textsf{MDPP}) framework proposed is applied as follows: At every time step $t$, observe the HOL waiting times $\HH(t) \equiv \{H_c(t): c \in \NX_{\cc}\}$ and solve the following optimization problem:
\begin{align}
\mbox{Minimize}  & \quad V \sum_{c \in \NX_{\cc}} \sum_{v \in \NX_{\vv}} C_{vc}(t) y_{vc}(t) - \sum_{c \in \NX_{\cc}} H_c(t) \sum_{v \in \NX_{\vv}} y_{vc}(t) \label{mdp_o} \\
\mbox{s.t.} \quad &  \quad \sum_{c \in \NX_{\cc}} y_{vc}(t) \leq 1, ~~ v \in \NX_{\vv} \label{mdp_c1} \\
&\quad \sum_{v \in \NX_{\vv}} y_{vc}(t) \leq 1, ~~  c \in \NX_{\cc} \label{mdp_c2} \\
&\quad y_{vc}(t) \in \{0,1\}, ~~ c \in \NX_{\cc}, v \in \NX_{\vv} \label{mdp_c3}
\end{align}
where $V \ge 0$ is a penalty constant associated with the dispatch cost and $y_{vc}(t)$ is a binary decision variable which equals 1 if vehicle $v$ is assigned to serve the HOL customer at node $c$ at time step $t$ and 0 otherwise. $C_{vc}(t)$ is the arc weight (cost) associated with arc $(v,c) \in \AX$, which is the travel time from $v$ to $c$ and (as described above) can include charging time, if charging is required before customer pick-up. Inequality \eqref{mdp_c1} ensures that one vehicle serves at most one customer in each time step and \eqref{mdp_c2} ensures that each customer is served by at most one vehicle in each time step.  The objective function \eqref{mdp_o} is interpreted as one that aims to assign vehicles to customers in a way that minimizes dispatch costs ($C_{vc}$), while giving priority to HOL customers with the highest waiting times.  We denote the solution of \eqref{mdp_o}--\eqref{mdp_c3} by $\{y^{\mathsf{MDPP}}_{vc}(t)\}$ and the binary assignment decisions obtained are 
\begin{equation}
\label{x_y}
x^{\mathsf{MDPP}}_c(t) = \sum_{v \in \NX_{\vv}} y^{\mathsf{MDPP}}_{vc}(t).
\end{equation}
Hence, the optimization problem \eqref{mdp_o} - \eqref{mdp_c3} produces the control variables $\{x^{\mathsf{MDPP}}_c(t)\}_{c \in \NX_{\cc}}$ which impact the network dynamics via \eqref{qd}. 

A disadvantage of making assignment decisions based on HOL waiting times is that this overlooks the possibility that a non-HOL customer at one station may have been waiting longer than a HOL customer at another. The formulation \eqref{mdp_o} - \eqref{mdp_c2} gives priority to the HOL customers regardless of how long non-HOL customers may have been waiting at other stations.  There also exist computational challenges: despite its tractable form\footnote{The constraints \eqref{mdp_c1} - \eqref{mdp_c3} have two features: (i) the right hand side constants are all integers (namely, ones) and (ii) the matrix formed from the left hand side coefficients is totally unimodular. Consequently, the linear relaxation $\mathbf{y}(t) \in [0,1]^N$, where $\mathbf{y}(t) \equiv \{y_{vc}(t)\}$ and $N \equiv |\NX_{\cc}| \times |\NX_{\vv}|$, produces binary solutions.}, the problem \eqref{mdp_o} - \eqref{mdp_c2} does not naturally decompose by node\footnote{Similar techniques have recently been employed in traffic signal control, there the problems naturally decompose by node \citep{li2019position} rendering them quite attractive in that setting.} so that for large networks, obtaining solutions in a real-time framework can be prohibitive.  We simultaneously overcome these two shortcomings by updating the solution frequently, i.e., an online approach.  The trick we employ is one where we update the solution in \emph{multiples of} a sufficiently small time increment $\Delta t$.  For example, setting
\begin{equation}
	\Delta t \equiv \Big( \sum_{c \in \NX_{\cc}} \lambda_c \Big)^{-1}
\end{equation}
we have that the probability that the number of arrivals to any node in the network exceeds 1 over a time interval of length $\Delta t$ is $o(\Delta t)$.  Thus, in each time step we can safely assume that either one customer arrives somewhere in the network or one vehicle is returned to the network (when one vehicle drops off its customer at the destination). In essence, we propose to operate in continuous time and we show below that the updates (naturally) need only take place when certain discrete events occur.

\textbf{Solution approach.} The main advantage of operating in continuous time is that \emph{two or more events occur simultaneously with probability zero}.  We exploit this below.  First, the objective function \eqref{mdp_o} can equivalently be written as a maximization objective:
\begin{equation}
	\mbox{Maximize } \sum_{c \in \NX_{\cc}} \sum_{v \in \NX_{\vv}} \big(H_c(t) - VC_{vc}(t) \big) y_{vc}(t),
\end{equation}
and we immediately see that $y_{vc}(t) = 0$ whenever $H_c(t) < VC_{vc}(t)$. This is interpreted as vehicle $v$ not being assigned to customer-charge node $c$ if the penalized dispatch cost exceeds the current waiting time of the customer.  Hence, \emph{the zeros in the solution are determined without solving the problem.} (When $H_c(t) = VC_{vc}(t)$ it is reasonable to set $y_{vc}(t) = 0$ to allow for the possibility that a better assignment can present itself at time $t + \Delta t$. This comes at a negligible cost since $\Delta t$ is small.)  One can interpret a viable assignment as one where the dispatch cost is directly proportional to the waiting time of the customer, with $V$ as the constant of proportionality.  When $VC_{vc}(t) > H_c(t)$, the assignment $(v,c)$ is deemed nonviable and customer $c$ might be better off waiting a little longer, allowing for the possibility of another vehicle being returned to the system that can be dispatched to $c$ at lower cost.  The variable $V$ should be fine-tuned; we compare the performance of the system under different values of $V$ in Sec.~\ref{S:simulation}.

Suppose that customers begin to arrive to the system at time $t=0$ and let time $t = t^{\mathsf{fp}} > 0$ mark the instant at which  $H_c(t^{\mathsf{fp}}) > VC_{vc}(t^{\mathsf{fp}})$ for some node $c$ (a \emph{first passage time}).  Prior to time $t^{\mathsf{fp}}$ no vehicles are in use in the system; hence, this passage event is triggered by the waiting time of some customer exceeding the threshold.  Since two or more events occur simultaneously with probability zero, the set
\begin{equation}
	\{c \in \NX_{\cc}: H_c(t^{\mathsf{fp}}) > VC_{vc}(t^{\mathsf{fp}}) \mbox{ for any } v \in \NX_{\vv}\}
\end{equation}
is a singleton set. Denoting this singleton node by $c^*$, the assignment problem \eqref{mdp_o} - \eqref{mdp_c2} at time $t^{\mathsf{fp}}$ simplifies to a 0-1 knapsack problem:
\begin{align}
\underset{y_{vc^*}(t^{\mathsf{fp}}) \in \{0,1\},~ v \in \NX_{\vv}}{\mbox{Maximize}}  & \quad \sum_{v \in \NX_{\vv}} \Big(H_{c^*}(t^{\mathsf{fp}}) - VC_{vc^*}(t^{\mathsf{fp}}) \Big) y_{vc^*}(t^{\mathsf{fp}}) \label{ks_o} \\
\mbox{s.t.} \qquad \quad & \quad \sum_{v \in \NX_{\vv}} y_{vc^*}(t^{\mathsf{fp}}) \le 1.  \label{ks_c} 
\end{align}
The problem \eqref{ks_o} - \eqref{ks_c} is solved by simply sorting the positive objective function coefficients ($H_{c^*}(t^{\mathsf{fp}}) - VC_{vc^*}(t^{\mathsf{fp}})$) and setting $y_{vc^*}(t^{\mathsf{fp}}) = 1$ for the element with the largest objective function coefficient.  This corresponds to assigning the vehicle with smallest dispatch cost, that is, letting $v^*$ be the optimal vehicle to assign to node $c^*$, we have that $y_{v^*c^*}(t^{\mathsf{fp}}) = 1$, where
\begin{equation}
\label{solveV*}
	v^* = \underset{v \in \NX_{\vv}}{\arg \min} ~ C_{vc^*}(t^{\mathsf{fp}}).
\end{equation}
This is even easier than sorting with a linear time complexity of $O(|\NX_{\vv}|)$, but since the system is tracking the passage events, it is already known which vehicle $v^*$ is and the time-complexity of this operation is practically $O(1)$.   It is important to note that it may appear that the assignment is decided entirely by the dispatch costs $\{C_{vc^*}(t^{\mathsf{fp}})\}_{v \in \NX_{\vv}}$ but this is not true. $H_{c^*}(t^{\mathsf{fp}})$ and $V$ still have an important role to play in deciding which vehicles to consider in the assignment, those that result in non-positive objective coefficients are not considered in the optimization problem.
	
The instant that customer $c^*$ enters service, the system reverts to one where $H_c(t) < VC_{vc}(t)$ and a certain amount of time will need to elapse (albeit small for a congested network) until the next event (e.g., a new arrival or a vehicle being returned to the system).  The event of interest is when the system makes the next passage into a state where $H_c(t) > VC_{vc}(t)$ 
for some $(v,c)$-pair.  Denote this subsequent passage time by $t^{\mathsf{sp}}$. The subsequent passage event can be triggered by either a passage of the waiting time of some HOL customer or a new vehicle being returned to the system.  In the first case, the time-complexity is $O(1)$ as in the first passage event.  If this new passage time is triggered by a new vehicle entering the system, $v^*$ is known and is the new vehicle in the system (again, since two or more events occur simultaneously with probability zero). In other words, 
\begin{equation}
	\{v^* \in \NX_{\vv}: H_c(t^{\mathsf{sp}}) > VC_{v^*c}(t^{\mathsf{sp}}) \mbox{ for any } c \in \NX_{\cc}\}
\end{equation}
is a singleton set and $v^*$ is known.  The assignment problem \eqref{mdp_o} - \eqref{mdp_c2} at time $t^{\mathsf{sp}}$ simplifies to a 0-1 knapsack problem:
\begin{align}
\underset{y_{v^*c}(t^{\mathsf{sp}}) \in \{0,1\},~ c \in \NX_{\cc}}{\mbox{Maximize}}  & \quad \sum_{c \in \NX_{\cc}} \Big(H_c(t^{\mathsf{sp}}) - VC_{v^*c}(t^{\mathsf{sp}}) \Big) y_{v^*c}(t^{\mathsf{sp}}) \label{kss_o} \\
\mbox{s.t.} \qquad \quad & \quad \sum_{c \in \NX_{\cc}} y_{v^*c}(t^{\mathsf{sp}}) \le 1.  \label{kss_c} 
\end{align}
This is also a 0-1 knapsack problem, and since the right-hand side of \eqref{kss_c} is 1, it can be simply solved by setting $y_{v^*c^*}(t^{\mathsf{sp}})=1$ for customer $c^*$ with the largest objective coefficient:
\begin{equation}
\label{solveC*}
	c^* = \underset{c \in \NX_{\cc}}{\arg \max} ~ \big(H_c(t^{\mathsf{sp}}) - VC_{v^*c}(t^{\mathsf{sp}}) \big).
\end{equation}
This operation has a time-complexity of $O(|\NX_{\cc}|)$.  This is also the maximal time-complexity of any assignment operation , since (in general) $|\NX_{\cc}| > |\NX_{\vv}|$. We summarize the overall framework of the MDPP solution approach as a flowchart shown in Fig.~\ref{F:flowchart}.

It is worth mentioning that we do not consider vehicle rebalancing in the MDPP framework. We leave vehicle rebalancing to future research, and focus on the joint optimization of vehicle assignment and recharging in this paper.

\begin{figure}[h!]
	\scriptsize
	\tikzstyle{startstop} = [rectangle, rounded corners, draw,thin,align=center,text centered,minimum width=1cm,minimum height=0.6cm]
	\tikzstyle{format}=[rectangle,draw,thin,align=center,minimum width=1cm,minimum height=0.6cm]%
	\begin{center}
		\begin{tikzpicture}[node distance=2cm,
		auto,>=latex',
		thin,
		start chain=going below,
		every join/.style={norm},]
		\node[draw, diamond, aspect=3.5, text width=2.9cm, align=center, yshift= 0.5cm](C){\footnotesize $\exists~(v,c)$-pair: $H_c(t) > VC_{vc}(t)$?};
		\node[draw, diamond, aspect=2.0,below of=C,text width=2.5cm,align=center,yshift=-0.5cm] (D){\footnotesize Is $v$ a newly returned vehicle?};
		\node[below of=D,format, text width=4.2cm, yshift=-0.15cm, xshift = -3cm](E){\footnotesize Mark the customer $c$ as $c^*$, and solve \eqref{solveV*} to get $v^*$};
		\node[below of=D,format, text width=4.0cm,yshift=-0.15cm,xshift = 3cm](F){\footnotesize Mark the vehicle $v$ as $v^*$, and solve \eqref{solveC*} to get $c^*$};
		\node[below of=D,yshift=-1.5cm,circle, draw,align=center](G){};
		\node[below of=G,format,yshift=1.2cm](H){\footnotesize Assign $v^*$ to $c^*$};
		\node[below of=H,format,yshift=1.0cm](I){\footnotesize $t \mapsfrom$ time of next event};
		\draw[->] (C.south) -- node[anchor=west]{\footnotesize yes} (D.north);
		\draw[->] (C.west) -- node[anchor=south]{\footnotesize no}(-5.7,0.5) -- (-5.7,-7.3) -- (I.west);
		\draw[->] (D.west) -- node[anchor=south]{\footnotesize no}(-3,-2.0) -- (E.north);
		\draw[->] (D.east) -- node[anchor=south]{\footnotesize yes}(3,-2.0) -- (F.north);
		\draw[->] (E.south) -- (-3,-5.5) -- (G.west);
		\draw[->] (F.south) -- (3,-5.5) -- (G.east);
		\draw[->] (G.south) -- (H.north);
		\draw[->] (H.south) -- (I.north);
		\draw[->] (I.east) -- node[anchor=south]{\footnotesize}(5.7,-7.3) -- (5.7,0.5) -- (C.east);
		\end{tikzpicture}
	\end{center}
	\caption{Flowchart of the MDPP solution approach.}\label{F:flowchart}
\end{figure}
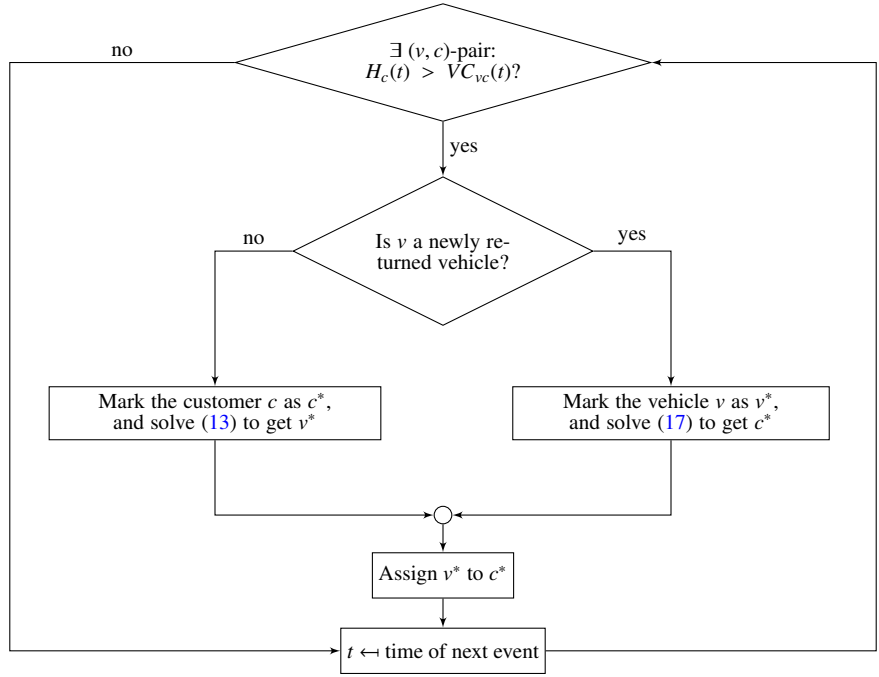

\section{Performance guarantees} \label{sec:stability}
This section provides guarantees of performance of the proposed approach in terms of stability of waiting times and a bound on the deviation from optimal time-averaged dispatch costs.  Here, stability is interpreted as customer waiting times not growing without bounds in the limit.  We formally define it next.

\begin{definition}[Stability] \label{def_stability}
The SAEV network is said to be \textit{strongly stable} if 
\begin{equation}
	\underset{T \rightarrow \infty}{\lim \sup} \frac{1}{T}\sum_{t=0}^{T-1}\sum_{c \in \NX_{\cc}}\EE |H_c(t)| < \infty. \label{eq_stability}
\end{equation}
\end{definition}
\noindent Since $H_c(t) \ge 0$ with probability 1, we shall drop the modulus ($|\cdot|$) below.

\begin{definition}[Maximal Throughput Region] \label{d:mtr}
	The maximal throughput region of the network, denoted by $\boldsymbol{\Lambda}$, is defined as the closure, i.e., the convex hull, of the set of all arrival rate vectors $\boldsymbol{\lambda} \in \RR_+^{|\NX_{\cc}|}$ for which there exists a stabilizing scheduling algorithm.
\end{definition}

Definition \ref{d:mtr} implies that if an arrival rate vector lies outside of $\boldsymbol{\Lambda}$, then there does not exist a scheduling algorithm capable of stabilizing the network. It is hence natural to assume that the arrival rate lies in $\boldsymbol{\Lambda}$. In the present system, \emph{service} is determined by the SAEV vehicle fleet.  For any customer demand, $\boldsymbol{\lambda}$, as long as there exists a way to assign vehicles in the fleet to satisfy the demand, then $\boldsymbol{\lambda} \in \boldsymbol{\Lambda}$.  Otherwise, there does not exist an algorithm or schedule capable of satisfying such demand and $\boldsymbol{\lambda} \not\in \boldsymbol{\Lambda}$.  This implies that $\boldsymbol{\Lambda}$ is \emph{not} empty as long there are vehicles in the fleet.  In other words, $\boldsymbol{\Lambda}$ is the set of all demands that can be served by the SAEV vehicle fleet.
Denote by $\textbf{y}^{\mathsf{MDPP}}(t)$ the solution vector produced by our algorithm, i.e., the solution of \eqref{mdp_o} - \eqref{mdp_c3} and let $C^{\mathsf{MDPP}}(t)$ denote the network-wide vehicle dispatch costs under the \textsf{MDPP} algorithm at time $t$, $C^{\mathsf{MDPP}}(t)$ is expressed as:
\begin{equation}
C^{\mathsf{MDPP}}(t)=\sum_{c \in \NX_{\cc}} \sum_{v \in \NX_{\vv}} C_{vc}(t)y_{vc}^{\mathsf{MDPP}}(t).
\end{equation}
%
	Our main result will be proved using Lyapunov stability techniques. We first define the vector of HOL waiting times, the Lyapunov function ($L$), and the Lyapunov drift ($\triangle$), respectively as
	\begin{equation}
	\HH(t) = [H_1(t) \cdots H_c(t) \cdots H_{|\NX_{\cc}|}(t)]^{\top},
	\end{equation}
	\begin{equation}
	\label{lyap}
	L\big(\HH(t)\big) \equiv \frac{1}{2}\sum_{c \in \NX_{\cc}}\lambda_c H_c(t)^2,
	\end{equation}
	and 
	\begin{equation}
	\label{drift}
	\triangle (\HH(t)\big) \equiv \EE \Big(L\big(\HH(t+1)\big) - L\big(\HH(t)\big) \big| \HH(t) \Big).
	\end{equation}
	We introduce the \textsf{S-only} algorithm \citep{neely2006energy,neely2013delay} as a baseline algorithm for our proofs.  We denote by $\mathbf{C}(t)$ the vector of travel costs for all $(v,c)$ pairs at time $t$. The 'S' in \textsf{S-only} refers to `service times' and  \textsf{S-only} algorithms assume that the system knows the entire probability distribution of the network service times for all $t$. That is, the probability distribution of network travel costs, represented by the density function $\pi_{\mathbf{C}}(\mathbf{c})$, is assumed to be known to the system operator, in contrast to the assumption that only point estimates are known to the operator in the proposed \textsf{MDPP}\footnote{The latter are easier to obtain: navigation systems that are (ubiquitously) available in vehicles and mobile devices today are capable of producing such estimates.}. \textsf{S-only} algorithms produce randomized scheduling policies that can stabilize the system.  That is, they seek to determine an \emph{optimal} probability distribution over the set of assignment decisions $\mathcal{X} \subseteq \{0,1\}^{\NX_{\cc}}$, denoted $\pi^*(\mathbf{x}|\mathbf{c})$, and find a constant $\epsilon^* \ge 0$ so that for any $\boldsymbol{\lambda} \in \boldsymbol{\Lambda}$
	\begin{equation}
		\EE \mathbf{x}^{\mathsf{S-only}}(t) = \sum_{\mathbf{x} \in  \mathcal{X}} \int_{\mathbf{c}}\mathbf{x} \pi^*(\mathbf{x}|\mathbf{c}) \pi_{\mathbf{C}}(\mathbf{c}) \mathrm{d}\mathbf{c} \ge \boldsymbol{\lambda} + \epsilon^* \mathbf{1}_{|\NX_{\cc}|}, \label{Tpf1}
	\end{equation}
	where the inequality is component-wise and $\mathbf{1}_{|\NX_{\cc}|}$ is a column vector of ones of size $|\NX_{\cc}|$.  If $\pi^*(\mathbf{x}|\mathbf{c})$ and $\epsilon^* \ge 0$ exist, then an \textsf{S-only} algorithm that observes $\mathbf{C}(t)$ and randomly selects a decision vector by sampling from $\pi^*\big(\mathbf{x} \big| \mathbf{c} = \mathbf{C}(t)\big)$ is guaranteed to stabilize the network waiting times \citep{neely2006energy,neely2010stochastic}.  We define the total system cost that is achieved with the \textsf{S-only} algorithm as
	\begin{equation}
		C^{\Sonly}(t) = \sum_{c \in \NX_{\cc}} \sum_{v \in \NX_{\vv}} C_{vc}^{\Sonly}(t)y_{vc}^{\Sonly}(t)
	\end{equation}
	and the long-run mean system cost under \textsf{S-only} scheduling as 
	\begin{equation}
		\overline{C}^{\Sonly} = \underset{T \rightarrow \infty}{\lim} \frac{1}{T} \sum_{t=0}^{T-1} \EE C^{\Sonly}(t). \label{eq_sonly}
	\end{equation}
	The total system cost $\overline{C}^{\Sonly}$ is interpreted as the system cost that can be achieved if (i) the entire probability distribution $\pi_{\mathbf{C}}$ was known to the operator (not just an estimate), and (ii) one is capable of finding an \emph{optimal} probability distribution from which to sample.  Both of these assumptions are prohibitive from a practical standpoint, especially for purposes of real-time scheduling.  While $\overline{C}^{\Sonly}$ may not be the globally minimal mean system cost that can be achieved by any scheduling algorithm, it is known to produce the smallest \emph{average} systems cost over all stabilizing policies \citep{neely2006energy}. It, thus, serves as an attractive target system cost to achieve.  
	
	\begin{theorem}[Stability of MDPP]
		\label{thm:mdp}
		Assume that arrival rates at time $t$, $\boldsymbol{\lambda}(t)$, lie in $\boldsymbol{\Lambda}$ and that $\EE L(\HH(0)) < \infty$.  Then applying the solution to \eqref{mdp_o} - \eqref{mdp_c3}, there exist finite constants $0 < K < \infty$ and $0 < \epsilon^* < \infty$ such that the time-averaged expected vehicle dispatch cost and waiting time of HOL customers in the network satisfy
		\begin{flalign}
		\mbox{(i) } \qquad &  \underset{T \rightarrow \infty}{\lim \sup} \frac{1}{T} \sum_{t=0}^{T-1} \EE C^{\mathsf{MDPP}}(t) \leq \overline{C}^{\Sonly} + \frac{K}{V}, \mbox{ and} \label{eq_thm1}&&\\
		\mbox{(ii) } \qquad &  \underset{T \rightarrow \infty}{\lim \sup} \frac{1}{T}\sum_{t = 0}^{T-1} \sum_{c \in \NX_{\cc}} \EE H_c(t) \leq \frac{K +  (\overline{C}^{\Sonly} - C^\mathsf{min}) V }{\epsilon^*}, \label{eq_thm2}&&
		\end{flalign}
		where $C^\mathsf{min}$ is the minimum system dispatch cost that can be achieved by any scheduling policy.
	\end{theorem}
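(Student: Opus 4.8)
The plan is to follow the standard Lyapunov drift-plus-penalty argument, using the $\Sonly$ policy as the comparison benchmark. First I would bound the one-step Lyapunov drift $\triangle(\HH(t))$ defined in \eqref{drift}. Starting from the dynamics \eqref{qd}, I would square $H_c(t+1)$ and exploit the elementary inequality $([z]^+)^2 \le z^2$ together with the case split (i)--(iv): when $\chi_c(t)=0$ the increment equals $A_c(t)^2$ and contributes only a bounded term, while when $\chi_c(t)=1$ I would use $H_c(t+1)^2 - H_c(t)^2 \le (1 - x_c(t)\tau_c)^2 + 2H_c(t)(1 - x_c(t)\tau_c)$. Multiplying by $\tfrac{1}{2}\lambda_c$, summing over $c$, and taking $\EE(\cdot \mid \HH(t))$, I would use independence of $\tau_c$ from the (state-based) decision, $\EE\tau_c = \lambda_c^{-1}$, and the finite second moment \eqref{eq_secondMoment} to collect all $x_c$-independent and variance terms into a single finite constant $B$. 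The crucial feature is that the $\lambda_c$-weighting in \eqref{lyap} is chosen precisely so that the cross term $\EE(x_c(t)\tau_c\mid\HH(t)) = \EE(x_c(t)\mid\HH(t))\,\lambda_c^{-1}$ makes the $\lambda_c$ cancel, leaving a clean linear term:
\[
\triangle(\HH(t)) \le B + \sum_{c\in\NX_{\cc}}\lambda_c H_c(t) - \sum_{c\in\NX_{\cc}} H_c(t)\,\EE\big(x_c(t)\mid\HH(t)\big).
\]

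Next I would form the drift-plus-penalty quantity by adding $V\,\EE(C^{\mathsf{MDPP}}(t)\mid\HH(t))$ to both sides. The key observation is that the MDPP objective \eqref{mdp_o}, after the substitution \eqref{x_y}, is exactly $V C^{\mathsf{MDPP}}(t) - \sum_c H_c(t) x_c^{\mathsf{MDPP}}(t)$; since MDPP minimizes this over the feasible set \eqref{mdp_c1}--\eqref{mdp_c3} at every $t$, it also minimizes it in conditional expectation over all randomized feasible policies, in particular the $\Sonly$ policy. I would therefore upper-bound the MDPP contribution by the corresponding $\Sonly$ one. Because $\Sonly$ samples from $\pi^*(\mathbf{x}\mid\mathbf{c})$ independently of $\HH(t)$, I can replace $\EE(x_c^{\Sonly}(t)\mid\HH(t))$ by its unconditional mean and invoke the slackness bound \eqref{Tpf1}, $\EE x_c^{\Sonly}(t) \ge \lambda_c + \epsilon^*$. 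At this point the $\sum_c \lambda_c H_c(t)$ contributions cancel exactly, yielding the fundamental inequality
\[
\triangle(\HH(t)) + V\,\EE\big(C^{\mathsf{MDPP}}(t)\mid\HH(t)\big) \le B + V\,\EE\, C^{\Sonly}(t) - \epsilon^* \sum_{c\in\NX_{\cc}} H_c(t).
\]

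Finally I would take total expectations, sum over $t = 0,\dots,T-1$, and telescope the drift so the left side yields $\EE L(\HH(T)) - \EE L(\HH(0))$. For part (i) I would discard the nonnegative $\EE L(\HH(T))$ and $\epsilon^* \sum_c \EE H_c(t)$ terms, divide by $VT$, and let $T\to\infty$; the initial-condition term vanishes since $\EE L(\HH(0))<\infty$ and $\tfrac1T\sum_t \EE C^{\Sonly}(t)\to\overline{C}^{\Sonly}$, giving \eqref{eq_thm1} with $K = B$. For part (ii) I would instead retain the $\epsilon^*$ term, use $\EE C^{\mathsf{MDPP}}(t)\ge C^{\mathsf{min}}$ to lower-bound the penalty, divide by $\epsilon^* T$, and take $\limsup$, which produces \eqref{eq_thm2} with the same constant $K = B$.

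I expect the main obstacle to be the drift computation itself: carefully handling the $[\cdot]^+$ operator and the empty/nonempty case split in \eqref{qd}, and verifying that the cross term factors as $\EE(x_c(t)\tau_c\mid\HH(t)) = \EE(x_c(t)\mid\HH(t))\,\lambda_c^{-1}$ so that the $\lambda_c$-weighted linear term collapses to the displayed form. Everything downstream --- the minimization property of MDPP, the $\Sonly$ comparison via \eqref{Tpf1}, and the telescoping --- is then routine once this bound is established.
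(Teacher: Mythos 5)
Your proposal is correct and follows essentially the same route as the paper: the drift bound you derive (with the $\lambda_c$-weighted Lyapunov function making the cross term $\EE(x_c(t)\tau_c \mid \HH(t)) = \EE(x_c(t)\mid\HH(t))\lambda_c^{-1}$ collapse cleanly) is exactly the paper's Lemma~\ref{Lem2}, and the subsequent steps --- adding the penalty, invoking the MDPP minimization property against the \textsf{S-only} benchmark, applying \eqref{Tpf1}, telescoping, and the two different rearrangements for \eqref{eq_thm1} and \eqref{eq_thm2} --- mirror the paper's argument step for step. The constant you call $B$ is the paper's $K$, computed explicitly there via \eqref{eq_secondMoment} as $K = \frac{1}{2}\sum_{c}(\lambda_c^2 + \lambda_c + \frac{2}{\lambda_c} + 1)$.
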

	
	The second part of Theorem \ref{thm:mdp}, \eqref{eq_thm2}, states that as long as there exists a way to stabilize the network (the arrival rates are in $\boldsymbol{\Lambda}$), setting $V<\infty$ and solving the optimization problem \eqref{mdp_o} - \eqref{mdp_c3}, one ensures stability of the waiting times in the network (see Definition \ref{def_stability}).  The theorem also says that there exists a compromise between the dispatch cost and the queuing stability that depends on the value of $V$.  As $V$ gets large, the dispatch cost associated with the \textsf{MDPP} policy approaches the target dispatch cost $\overline{C}^{\Sonly}$ but the long-run expected waiting times may increase as a result.  Decreasing $V$ has the opposite effect.  Also, the proposed policy does not require knowledge of the demands or even the arrival rates; as long the latter lie in the maximal throughput region of the network ($\boldsymbol{\Lambda}$), stability of the waiting times (and the system travel costs) is ensured.  One only requires knowledge of the present state of the system and estimates of system travel costs, which is information that is typically available to SAEV operators.  
	
	The minimum dispatch cost $C^\mathsf{min}$ is one that would be achieved under any information setting (e.g., known demands, full history, and even future arrivals).  If target cost $\overline{C}^{\Sonly}$ approaches $C^\mathsf{min}$, one can set $V$ to a larger value while maintaining low waiting times, that is, in this case, there is no compromise between travel/dispatch costs and customer waiting times.
	
	Since the random arrival processes $\{A_c\}_{c \in \NX_{\cc}}$ have at most one arrival per time step, the customer queue sizes are always no greater than the waiting times of their corresponding HOL customers. Therefore, the stability of the HOL customers' waiting times implies the stability of customer queues. In other words, the proof of \eqref{eq_thm2} also proves the stability of customer queues throughout the network. The proof of Theorem \ref{thm:mdp} relies on an inequality, which we state as a Lemma (Lemma \ref{Lem2}) next.  Note that the inequality holds for any scheduling policy and any arrival pattern. 
	
	\begin{lemma}
		\label{Lem2}
		There exists a constant $0<K<\infty$ such that
		\begin{equation}
		\label{lem2_eq}
		\triangle (\HH(t)\big) \leq K - \sum_{c \in \NX_{\cc}} H_c(t) \EE \big( x_c(t) - \lambda_c \big| \HH(t) \big)
		\end{equation}
		holds for all $t > 0$.
	\end{lemma}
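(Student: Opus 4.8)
The plan is to establish \eqref{lem2_eq} termwise, bounding each summand $\tfrac12\lambda_c\,\EE\big(H_c(t+1)^2 - H_c(t)^2 \mid \HH(t)\big)$ of the drift separately and then collecting the $H_c(t)$-free remainders into $K$. Since $\chi_c(t)\in\{0,1\}$ partitions the state, I would split into the empty-node case $\chi_c(t)=0$ and the nonempty case $\chi_c(t)=1$. In the empty case, observation (i) gives $H_c(t)=0$ and $H_c(t+1)=A_c(t)$, so the summand equals $\tfrac12\lambda_c\,\EE\big(A_c(t)^2\mid\HH(t)\big)=\tfrac12\lambda_c^2$, using $A_c(t)\in\{0,1\}$ (hence $A_c(t)^2=A_c(t)$) and $\EE A_c(t)=\lambda_c$. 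This is a pure constant, and since $H_c(t)=0$ the target term $-H_c(t)\EE(x_c(t)-\lambda_c\mid\HH(t))$ also vanishes, so this case needs nothing further.

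For the nonempty case the first step is to discard the projection via $([z]^+)^2\le z^2$, valid because $[z]^+\in\{0,z\}$. Applying this with $z=H_c(t)+1-x_c(t)\tau_c$ yields
$$H_c(t+1)^2 - H_c(t)^2 \le 2H_c(t)\big(1-x_c(t)\tau_c\big) + \big(1-x_c(t)\tau_c\big)^2,$$
and expanding while using $x_c(t)^2=x_c(t)$ turns the right-hand side into $2H_c(t) - 2H_c(t)x_c(t)\tau_c + 1 - 2x_c(t)\tau_c + x_c(t)\tau_c^2$. I would then take the conditional expectation given $\HH(t)$. The crucial modeling input is that $\tau_c$, being a future inter-arrival time, is independent of the current scheduling decision $x_c(t)$ given $\HH(t)$, so that $\EE\big(x_c(t)\tau_c\mid\HH(t)\big)=\EE\big(x_c(t)\mid\HH(t)\big)\,\EE\tau_c$ and likewise for $\tau_c^2$. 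Substituting $\EE\tau_c=\lambda_c^{-1}$ and $\EE\tau_c^2=(2+\lambda_c)/\lambda_c^2$ from \eqref{eq_secondMoment}, the only term still multiplying $H_c(t)$ becomes $-2H_c(t)\EE\big(x_c(t)\mid\HH(t)\big)\lambda_c^{-1}$.

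The payoff of the $\lambda_c$-weighting in \eqref{lyap} now appears: multiplying the nonempty-case bound by $\tfrac12\lambda_c$ converts $2H_c(t)$ into $\lambda_c H_c(t)$ and $-2H_c(t)\EE(x_c(t)\mid\HH(t))\lambda_c^{-1}$ into exactly $-H_c(t)\EE(x_c(t)\mid\HH(t))$, which together reproduce the target term $-H_c(t)\EE(x_c(t)-\lambda_c\mid\HH(t))$. The remaining pieces, $\tfrac12\lambda_c$, the nonpositive $-\EE(x_c(t)\mid\HH(t))$, and $\tfrac{\EE(x_c(t)\mid\HH(t))(2+\lambda_c)}{2\lambda_c}$, are all free of $H_c(t)$; bounding $\EE(x_c(t)\mid\HH(t))\le 1$ and dropping the negative term shows each summand is at most a finite constant $K_c := \tfrac12\lambda_c^2 + \tfrac12\lambda_c + (2+\lambda_c)/(2\lambda_c)$ that dominates both cases. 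Setting $K=\sum_{c\in\NX_{\cc}}K_c$ then yields \eqref{lem2_eq} for all $t$.

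The main obstacle I anticipate is the independence/conditioning step: one must justify that $\tau_c$ retains the geometric moments of \eqref{eq_secondMoment} after conditioning on $\HH(t)$ and is independent of the (possibly cost-dependent) decision $x_c(t)$, which rests on the arrival process being exogenous to the policy and on serving a customer not altering future arrival epochs. A secondary point is checking $K<\infty$, which requires $\lambda_c>0$ at every active node so the $(2+\lambda_c)/(2\lambda_c)$ term stays finite; nodes with $\lambda_c=0$ contribute nothing and may be excluded. Everything else is routine algebra.
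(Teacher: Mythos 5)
Your proposal is correct and follows essentially the same route as the paper's proof: square the waiting-time recursion, drop the $[\cdot]^+$ projection, exploit the binariness of $\chi_c(t)$, $x_c(t)$, $A_c(t)$, invoke independence of the scheduling decision from the inter-arrival time together with $\EE\tau_c=\lambda_c^{-1}$ and \eqref{eq_secondMoment}, and absorb the $H_c(t)$-free remainders into $K$ (your $K=\sum_c K_c$ is in fact identical to the paper's constant $\frac{1}{2}\sum_c(\lambda_c^2+\lambda_c+2/\lambda_c+1)$). The only difference is organizational --- you split explicitly on $\chi_c(t)\in\{0,1\}$ where the paper carries $\chi_c(t)$ through the algebra and lets the cross term vanish --- which is cosmetic.
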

	\begin{pf}
		From \eqref{qd}, we have that
		\begin{equation}
		\label{Lem1_p1}
		H_c(t+1)^2 \leq \chi_c(t)^2\big(H_c(t)+1-x_c(t)\tau_c\big)^2 +\big(1-\chi_c(t)\big)^2 A_c(t)^2  
		+ 2\chi_c(t) \big[H_c(t)+1- x_c(t)\tau_c\big]^+ \big(1-\chi_c(t)\big)A_c(t).
		\end{equation}
		Since $\chi_c(t)$ is a binary variable, the third term of the right side of \eqref{Lem1_p1} is equal to 0,  $\chi_c(t)^2 = \chi_c(t)$, and $(1-\chi_c(t))^2 = 1-\chi_c(t)$. Since $A_c(t)$ is also binary, $A_c(t)^2 = A_c(t)$. Moreover, we have that $\chi_c(t)H_c(t) = H_c(t)$. Hence, 
		\begin{equation}
		\label{Lem1_p1_1}
		H_c(t+1)^2  
		\le H_c(t)^2 + \chi_c(t)\big(1-x_c(t)\tau_c\big)^2 - 2H_c(t)\big(x_c(t)\tau_c-1\big) +\big(1-\chi_c(t)\big)A_c(t).
		\end{equation}	
		Then, from \eqref{lyap} we have that
		\begin{multline}
		\label{Lem1_p2}
		L\big(\HH(t+1)\big) - L\big(\HH(t)\big) = \frac{1}{2}\sum_{c \in \NX_{\cc}}\lambda_c H_c(t+1)^2 - \frac{1}{2}\sum_{c \in \NX_{\cc}}\lambda_c H_c(t)^2  \\
		\le \frac{1}{2}\sum_{c \in \NX_{\cc}}\lambda_c\Big(\chi_c(t)\big(1-x_c(t)\tau_c\big)^2 + \big(1-\chi_c(t)\big)A_c(t)\Big) - \sum_{c \in \NX_{\cc}}\lambda_c H_c(t)\big(x_c(t)\tau_c-1\big).
		\end{multline}
		Conditioning on $\mathbf{H}(t)$ and taking expectations, we have from definition \eqref{drift} that
		\begin{equation}
		\label{Lem1_p3}
		\triangle (\HH(t)\big) \le \EE \big(K(t) \big| \HH(t) \big) - \sum_{c \in \NX_{\cc}} \lambda_c H_c(t) \EE \big(x_c(t) \tau_c - 1 \big| \HH(t) \big),
		\end{equation}
		where 
		\begin{equation}
		\label{Lem1_p4}
		K(t) \equiv \frac{1}{2} \sum_{c \in \NX_{\cc}} \lambda_c \chi_c(t) \big( 1 - x_c(t) \tau_c \big)^2 + \frac{1}{2} \sum_{c \in \NX_{\cc}} \lambda_c \big(1-\chi_c(t)\big) A_c(t).
		\end{equation}
		Since $x_c(t)$ and $\chi_c(t)$ are both binary, we have that
		\begin{equation}
		K(t) \le \frac{1}{2} \sum_{c \in \NX_{\cc}} \lambda_c \big( 1 + \tau_c^2 + A_c(t)\big). \label{eq_bound}
		\end{equation}
		Recalling the finite second moment of inter-arrival times and \eqref{eq_secondMoment}, we define the finite constant
		\begin{equation}
		K \equiv \frac{1}{2} \sum_{c \in \NX_{\cc}} \lambda_c \EE \big( 1 + \tau_c^2 + A_c(t)\big) = \frac{1}{2} \sum_{c \in \NX_{\cc}} \Big( \lambda_c^2 + \lambda_c + \frac{2}{\lambda_c} + 1 \Big).
		\end{equation}
		That $0 < K < \infty$ follows since $0 < \lambda_c < 1$ for all $c$. Then from \eqref{eq_bound}, we have that
		\begin{equation}
		\EE \big(K(t) \big| \HH(t) \big) \le K. \label{eq_bound1}
		\end{equation}
		Turning to the second term on the right-hand side of \eqref{Lem1_p3}, we have that
		\begin{equation}
		\lambda_c \EE \big(x_c(t) \tau_c - 1 \big| \HH(t) \big) = \EE \big(x_c(t)  - \lambda_c \big| \HH(t) \big), \label{eq_bound2}
		\end{equation}
		since the decisions $\{x_c(t)\}$ are made independently of the inter-arrival times $\{\tau_c\}$ and $\EE \tau_c = \lambda_c^{-1}$. Finally from \eqref{eq_bound1} and \eqref{eq_bound2}, we have that \eqref{Lem1_p3} implies that
		\begin{equation}
		\triangle (\HH(t)\big) \le K - \sum_{c \in \NX_{\cc}} H_c(t) \EE \big(x_c(t) - \lambda_c \big| \HH(t) \big).
		\end{equation}
		This completes the proof. \qed
	\end{pf}
	 
\begin{pf1}
	Since the MDPP algorithm observes $\HH(t)$ and then selects an assignment strategy, $\mathbf{x}^{\mathsf{MDPP}}(t)$, that minimizes 
	\begin{equation}
		VC(t)-\sum_{c \in \NX_{\cc}}H_c(t)x_c(t),
	\end{equation}
	we have, for each time step $t$, that
	\begin{equation}
	\label{Tpf3}
	V\EE\big(C^{\mathsf{MDPP}}(t) \big| \HH(t) \big) - \sum_{c \in \NX_{\cc}} \EE\big(H_c(t)x_c^{\mathsf{MDPP}}(t) \big| \HH(t) \big) 
	\le \EE\big(VC^{\Sonly}(t) \big| \HH(t) \big) - \sum_{c \in \NX_{\cc}} \EE\big(H_c(t)x_c^{\Sonly}(t) \big| \HH(t) \big).
	\end{equation}
	Since the \textsf{S-only} algorithm makes decisions independently of the $\HH(t)$ and by the properties of conditional expectation, we can rewrite \eqref{Tpf3} as
	\begin{equation}
	\label{Tpf4}
	V\EE\big(C^{\mathsf{MDPP}}(t) \big| \HH(t) \big) - \sum_{c \in \NX_{\cc}} H_c(t) \EE\big(x_c^{\mathsf{MDPP}}(t) \big| \HH(t) \big)  
	\le V\EE C^{\Sonly}(t)  - \sum_{c \in \NX_{\cc}}H_c(t)\EE x_c^{\Sonly}(t).
	\end{equation}
	Since \eqref{lem2_eq} holds for any policy, adding $V\EE\big(C^{\mathsf{MDPP}}(t) \big| \HH(t) \big)$ to both sides of the inequality yields
	\begin{equation}
	\label{Tpf5}
	\triangle (\HH(t)\big)+V\EE\big(C^{\mathsf{MDPP}}(t) \big| \HH(t) \big) 
	\le K+V\EE\big(C^{\mathsf{MDPP}}(t) \big| \HH(t) \big) - \sum_{c \in \NX_{\cc}}H_c(t)\EE\big(x_c^{\mathsf{MDPP}}(t)-\lambda_c \big| \HH(t) \big).
	\end{equation}
	From \eqref{Tpf4} and \eqref{Tpf5}, and since $\lambda_c$ is independent of $\HH(t)$, we have that
	\begin{equation} 
	\triangle (\HH(t)\big)+V\EE\big(C^{\mathsf{MDPP}}(t) \big| \HH(t) \big) 
	\le K +  
	V\EE C^{\Sonly}(t)  - \sum_{c \in \NX_{\cc}}H_c(t)\EE \big(x_c^{\Sonly}(t) - \lambda_c \big). \label{Tpf6}
	\end{equation}
	Hence,
	\begin{equation}
	\label{Tpf7}
	\triangle (\HH(t)\big)+V\EE\big(C^{\mathsf{MDPP}}(t) \big| \HH(t) \big)
	 \le K + V\EE C^{\Sonly}(t)  - \epsilon^* \sum_{c \in \NX_{\cc}}H_c(t)
	\end{equation}
	by appeal to \eqref{Tpf1}.  Expanding the Lyapunov drift on the left-hand side of \eqref{Tpf7}, taking expectations, and summing both sides over $t=0,\hdots,T-1$: 
	\begin{equation}
	\label{eq_mainBound}
	\EE L\big(\HH(T)\big) - \EE L\big(\HH(0)\big) + \sum_{t=0}^{T-1}V \EE C^{\mathsf{MDPP}}(t)
	\le TK + V \sum_{t=0}^{T-1} \EE C^{\Sonly}(t)  - \epsilon^* \sum_{t=0}^{T-1} \sum_{c \in \NX_{\cc}} \EE H_c(t).
	\end{equation}
	(Recall that $\EE \EE( C^{\mathsf{MDPP}}(t)| \HH(t)) = \EE C^{\mathsf{MDPP}}(t)$ by the properties of conditional  expectation). We rearrange terms in \eqref{eq_mainBound} and divide both sides by $VT$ to obtain
	\begin{equation}
		\frac{1}{T} \sum_{t=0}^{T-1}\EE C^{\mathsf{MDPP}}(t) \le \frac{1}{T} \sum_{t=0}^{T-1} \EE C^{\Sonly}(t) + \frac{K}{V} + \frac{1}{VT}\EE L\big(\HH(0)\big),
	\end{equation}
	where we have dropped the first term on the left-hand side and the last term on the right-hand side since they are both non-negative and the inequality, thus, holds without them.  Taking the $\lim \sup$ as $T \rightarrow \infty$, we get the first result \eqref{eq_thm1}. 
	Next, we rearrange the terms in \eqref{eq_mainBound} in a different way and divide by $T\epsilon^*$ to get
	\begin{equation}
		\frac{1}{T} \sum_{t=0}^{T-1} \sum_{c \in \NX_{\cc}} \EE H_c(t)
		 \le \frac{K}{\epsilon^*} + \frac{V}{T\epsilon^*} \sum_{t=0}^{T-1} \EE \big( C^{\Sonly}(t) - C^{\mathsf{MDPP}}(t) \big) + \frac{1}{T\epsilon^*} \EE L\big(\HH(0)\big).
	\end{equation}
	Noting that $\frac{1}{T} \sum_{t=0}^{T-1} \EE C^{\mathsf{MDPP}}(t) \ge C^{\mathsf{min}}$ 
	for all $T$ and from \eqref{eq_sonly}, upon taking the $\lim \sup$ as $T \rightarrow \infty$ on both sides, we get the second result \eqref{eq_thm2}, which completes the proof. 
	\qed
\end{pf1}

\section{Illustrative examples}
\label{S:example}
Consider a simple grid network with two charging stations as shown in Fig.~\ref{F:ieV01}. Vehicles with insufficient charge will charge at the suitable charging station en route to pick up the customer.  Consider the following setting: Two vehicles are in the system at time $t=0$.  Vehicle 1 has 55\% charge and Vehicle 2 has 60\% charge.  The first customer arrives at time $t=0$ and requests a vehicle with a charge level of 30\% (or more).  Two other customers arrive at $t=5$ and $t=13.6$ minutes, and request vehicles with charge levels of (no less than) 45\% and 80\%, respectively. Three vehicles are returned to the system at $t=5.6$, $t=15.8$, and $t=18$ minutes, their charge levels are 52\%, 93\%, and 90\%, respectively.  The dispatch costs $\mathbf{C} = \{C_{vc}\}$ are constant and given (in minutes) by
\begin{equation}
	\mathbf{C} = \begin{bmatrix}
	30 & 25 & 40 \\
	15 & 20  & 35 \\
	 18 & 4  &  NA \\
	 NA & NA & 22  \\
	 NA & NA &  3
	\end{bmatrix}.
\end{equation}
``NA'' are costs that are not needed for this example.  The sequence of events are shown in Fig.~\ref{F:ieV01} when we set $V = 0.1$.
\begin{figure}[h!] 
	\centering
	
	\resizebox{1.0\textwidth}{!}{%
		\includegraphics[scale=0.17]{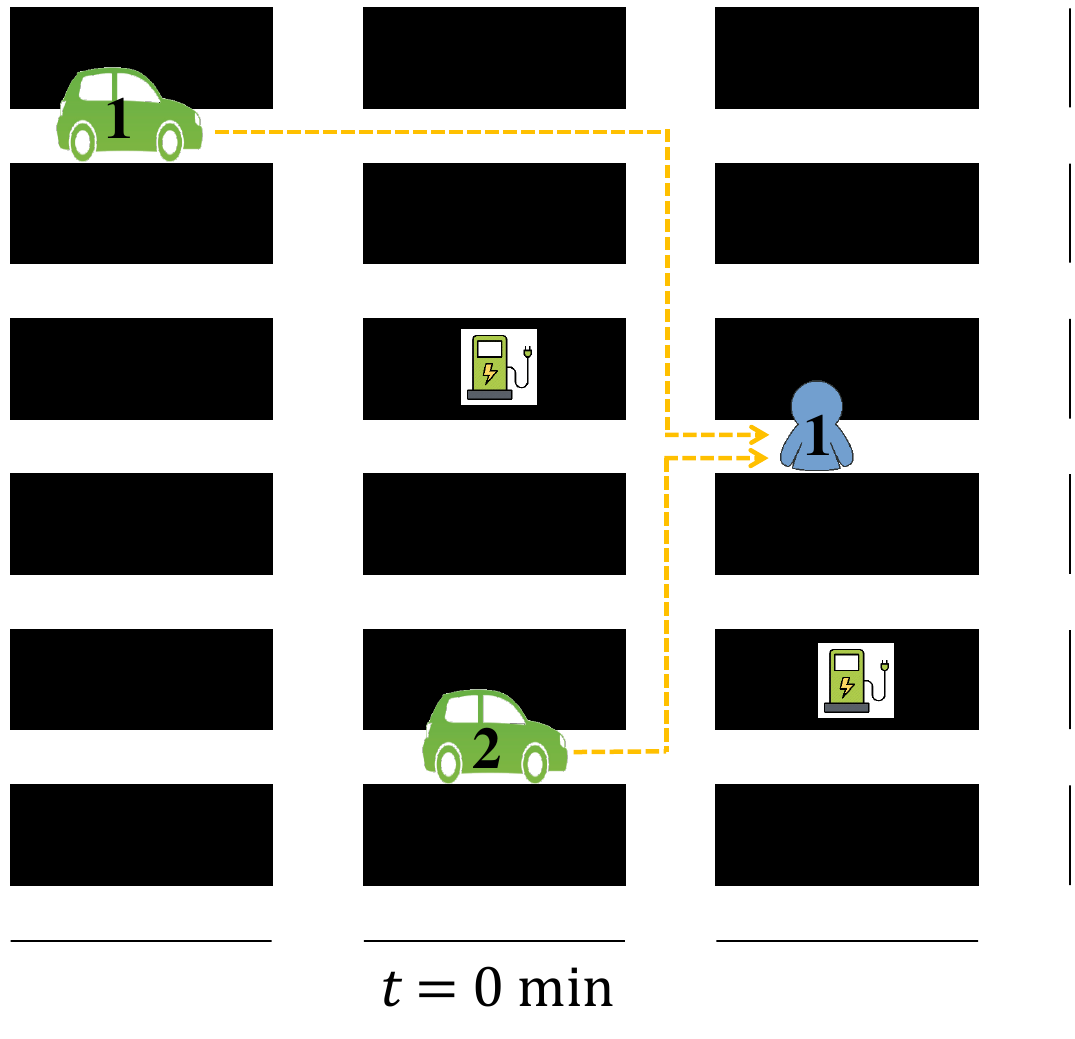}\hspace{0.2in}
		\includegraphics[scale=0.17]{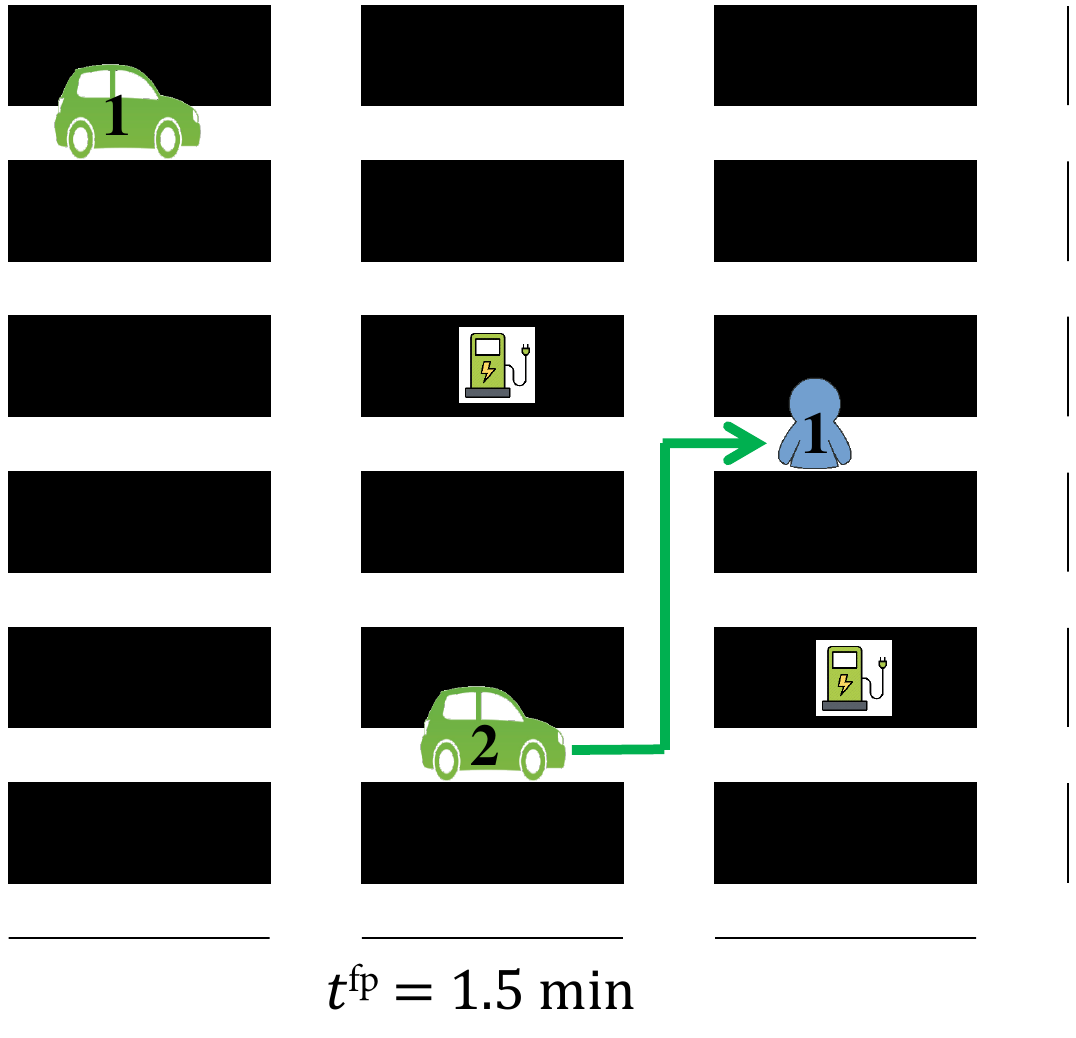}\hspace{0.2in}
		\includegraphics[scale=0.17]{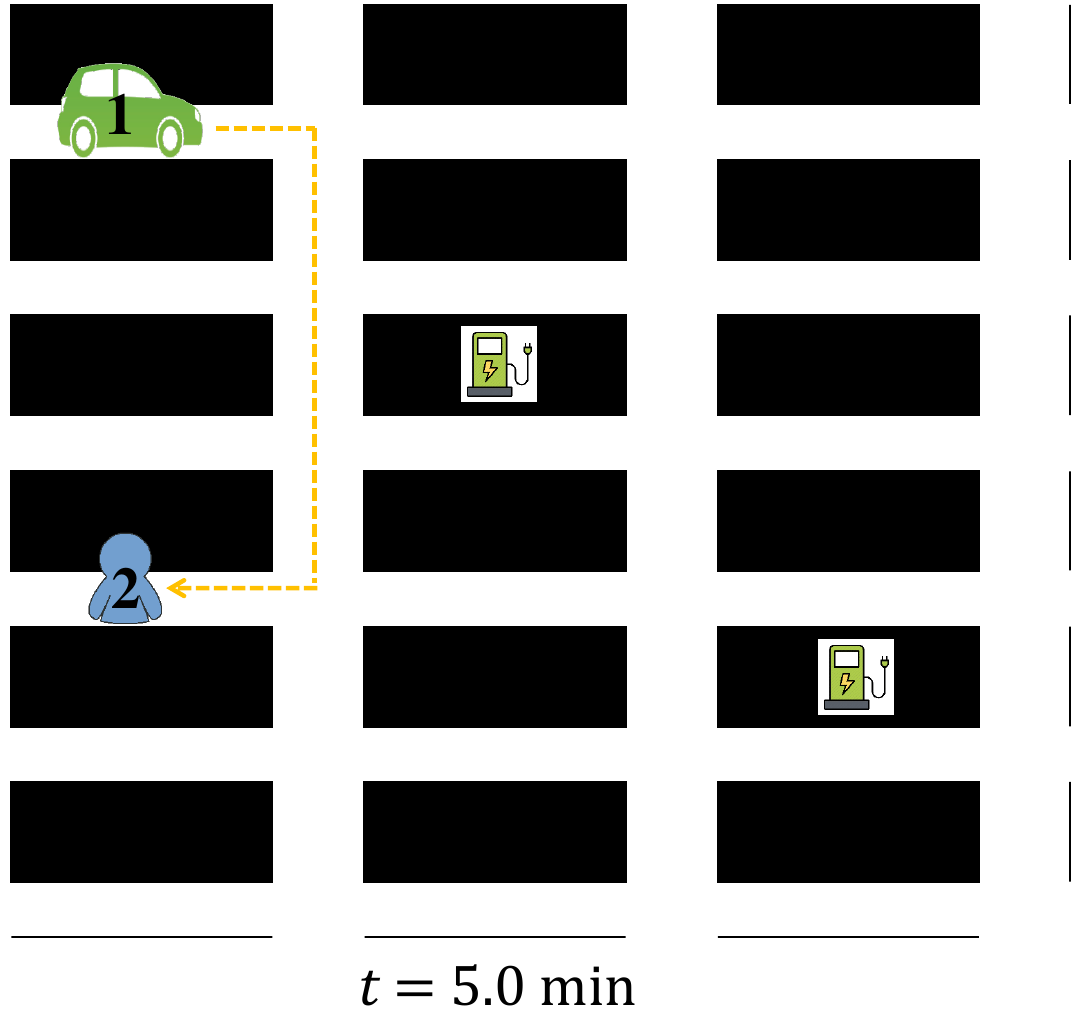}}	
	
	\resizebox{1.0\textwidth}{!}{%
		\includegraphics[scale=0.17]{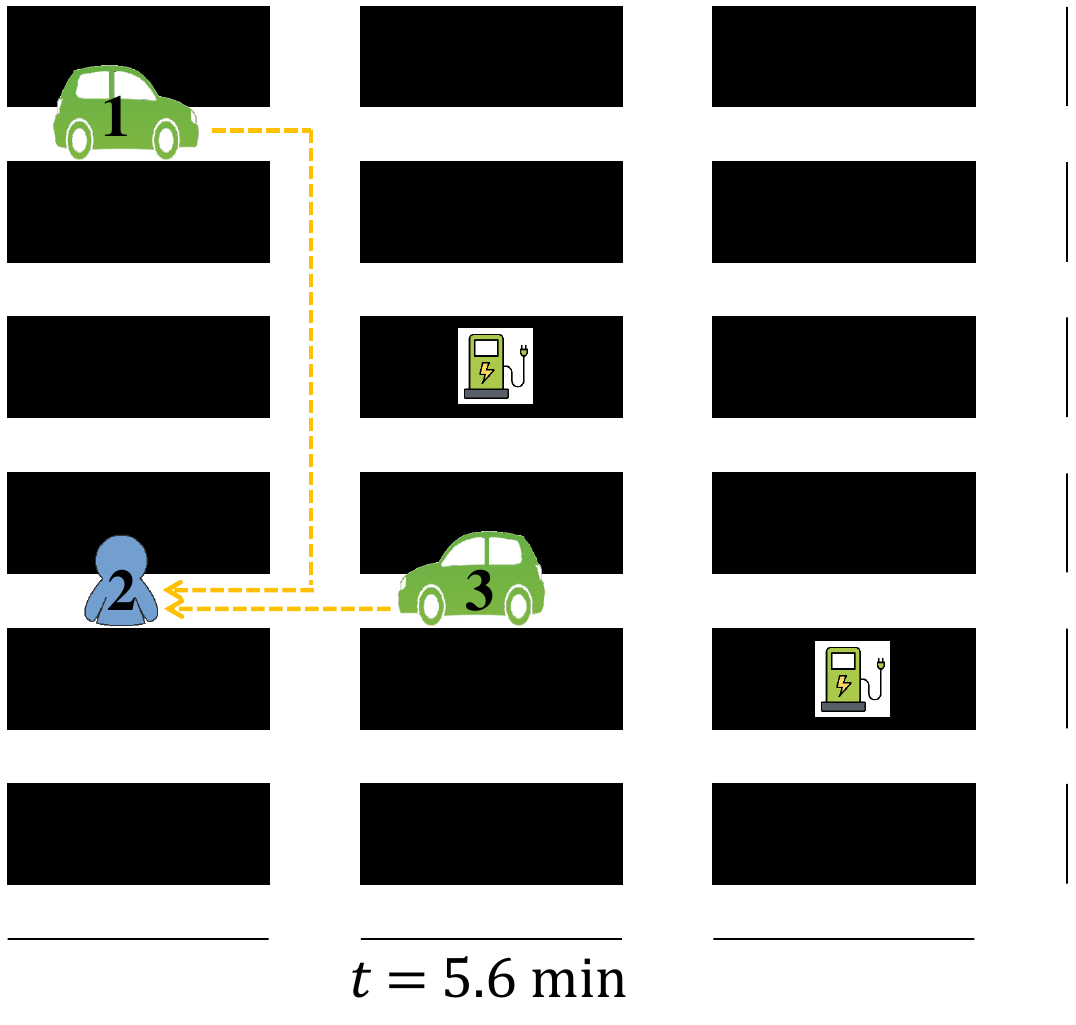}\hspace{0.2in}
		\includegraphics[scale=0.17]{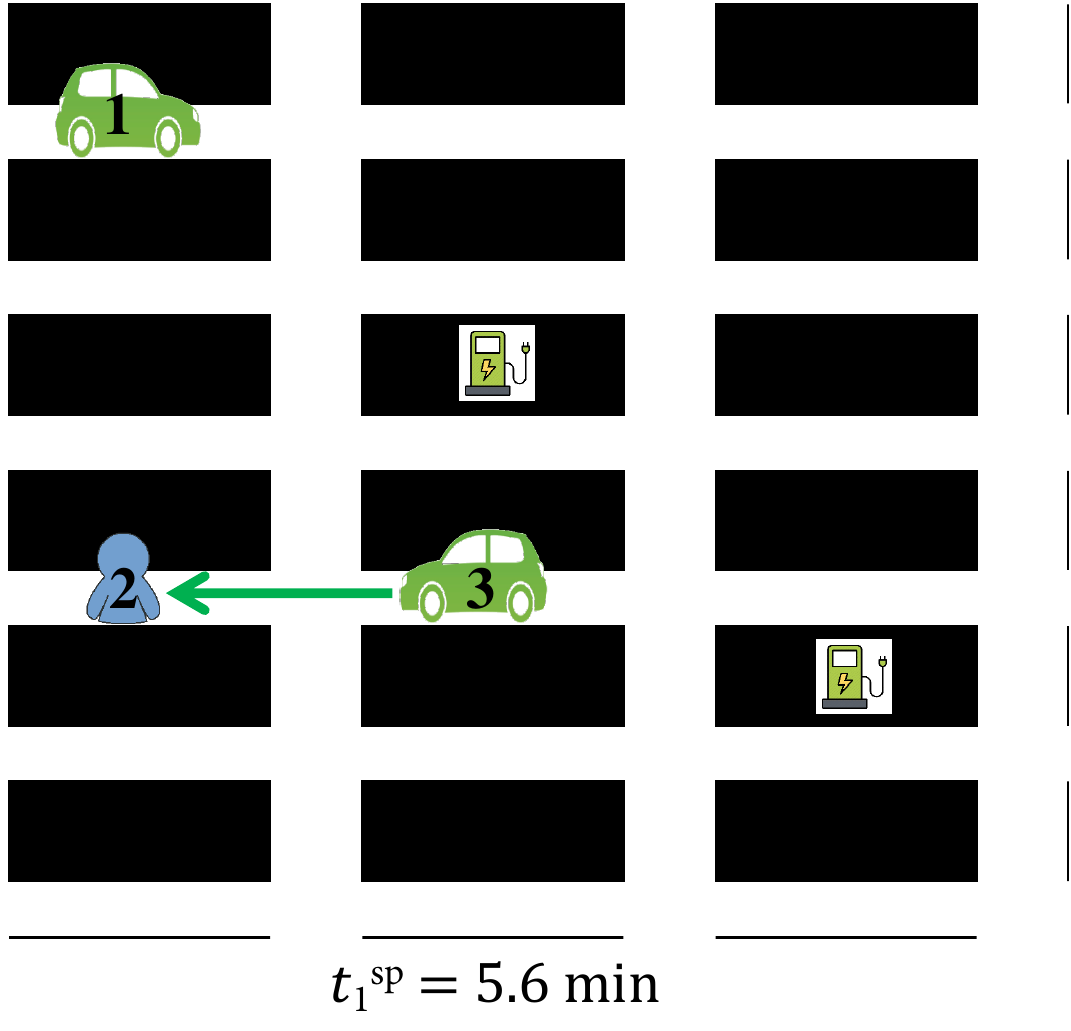}\hspace{0.2in}
		\includegraphics[scale=0.17]{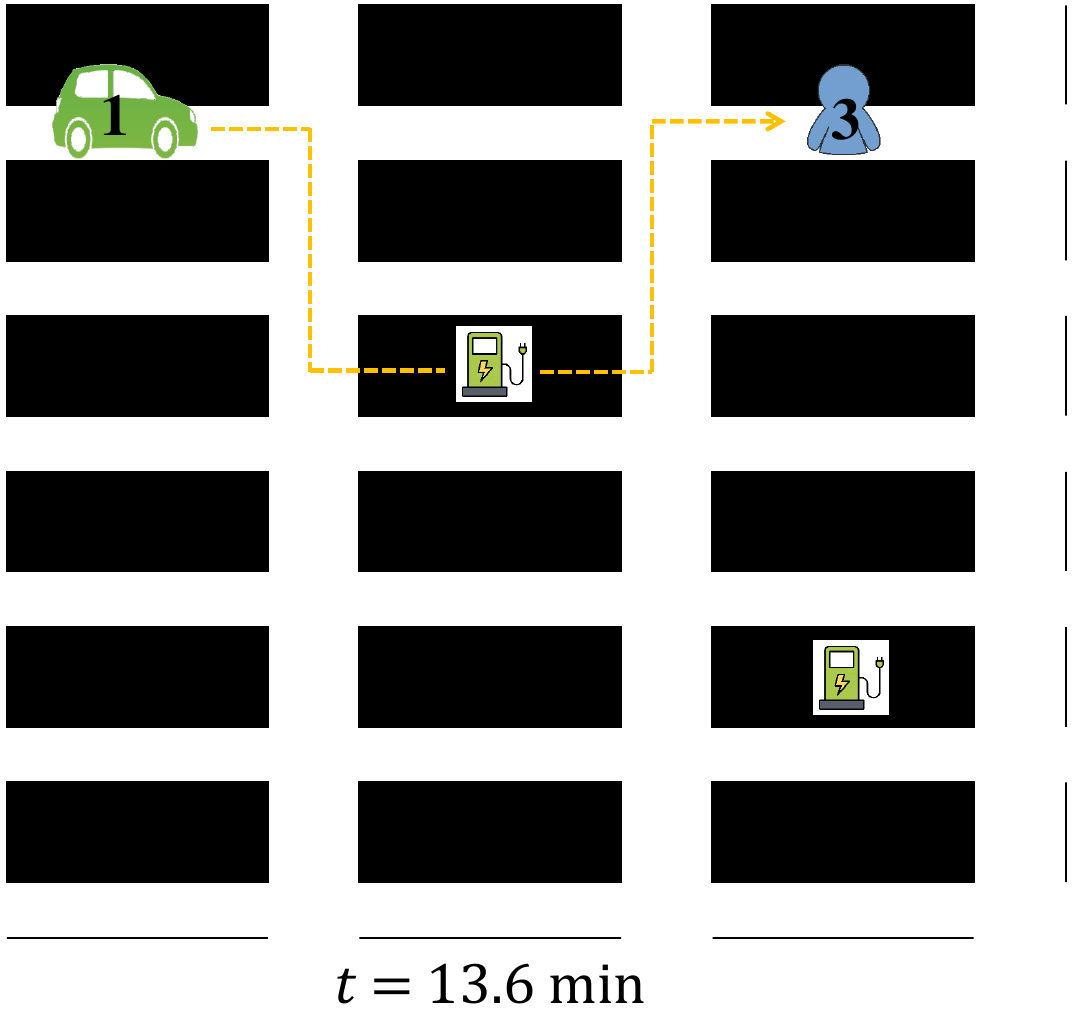}}
	
	\resizebox{1.0\textwidth}{!}{%
		\includegraphics[scale=0.17]{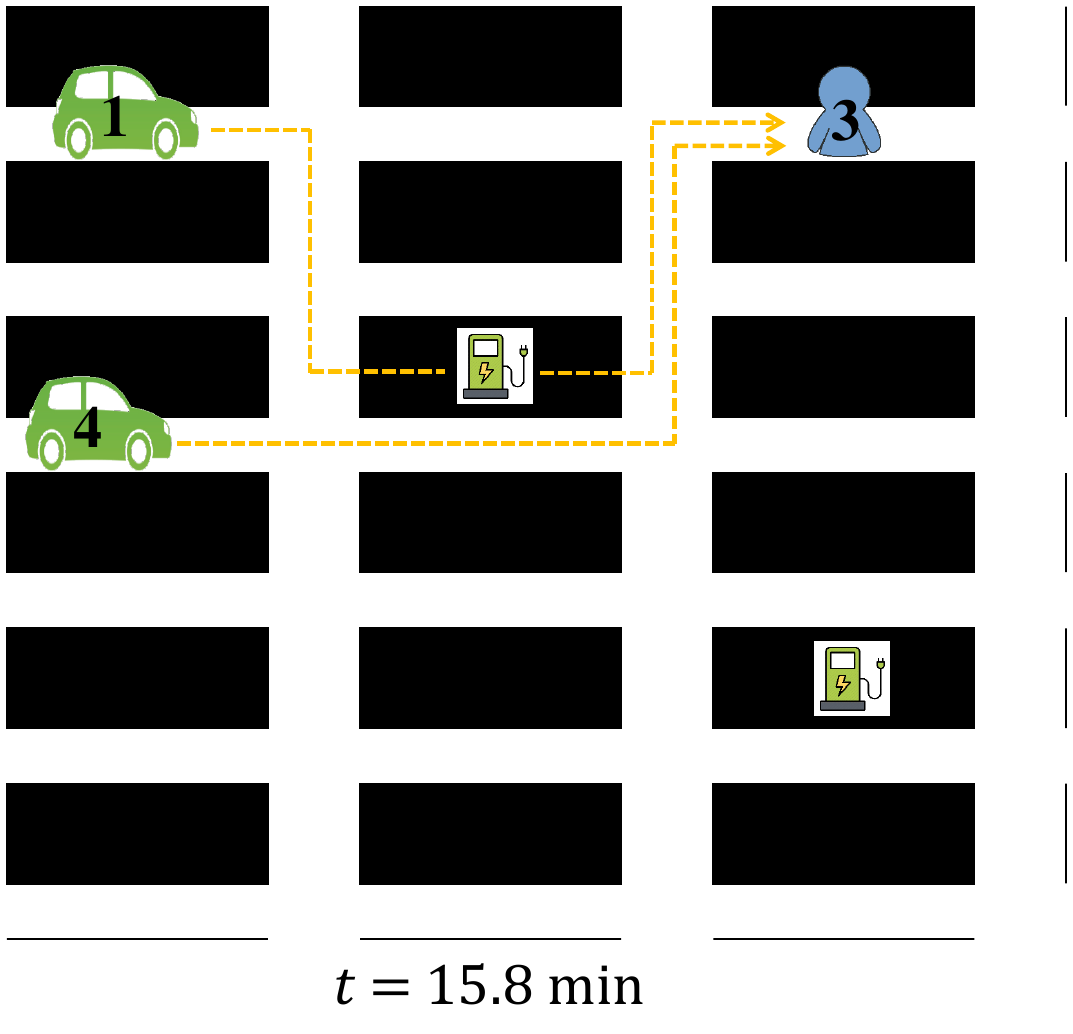}\hspace{0.2in}
		\includegraphics[scale=0.17]{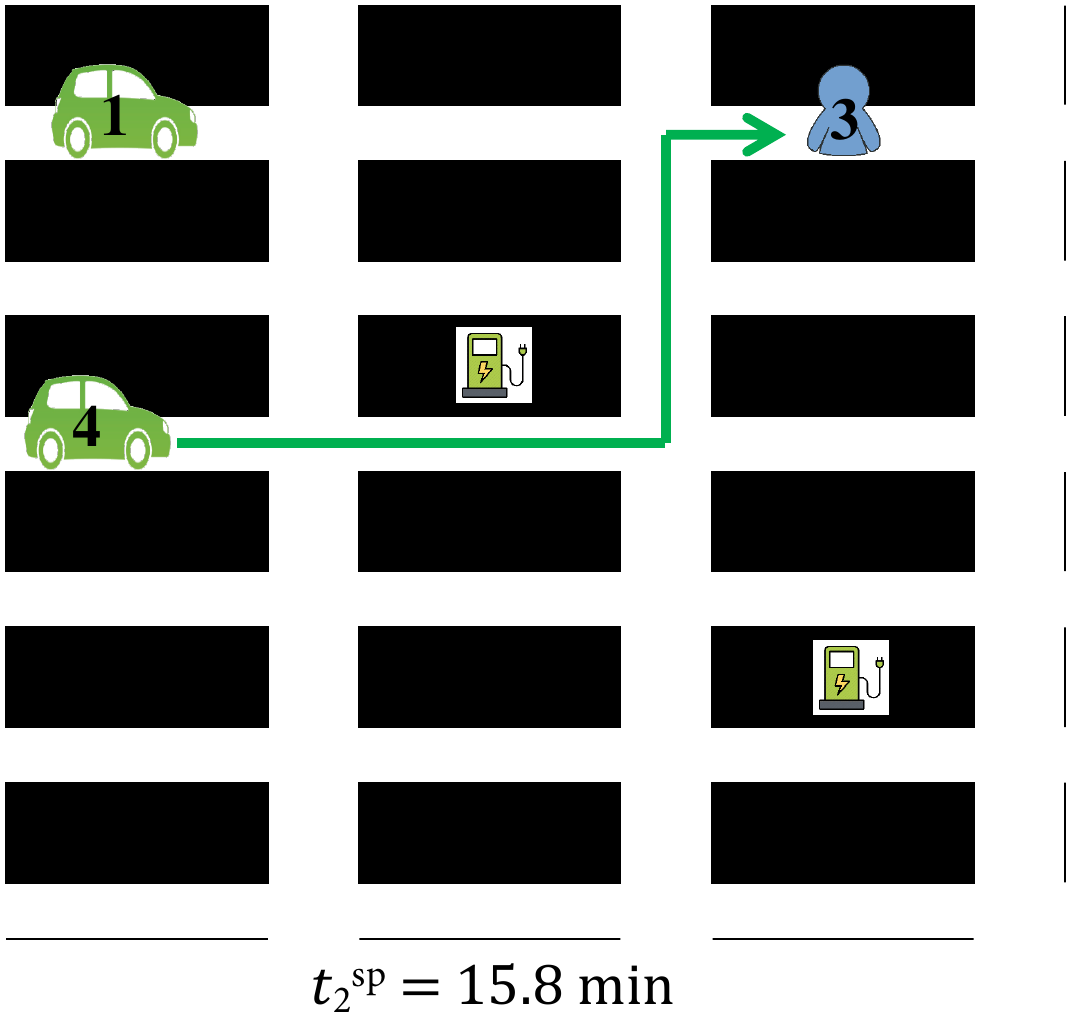}\hspace{0.2in}
		\includegraphics[scale=0.17]{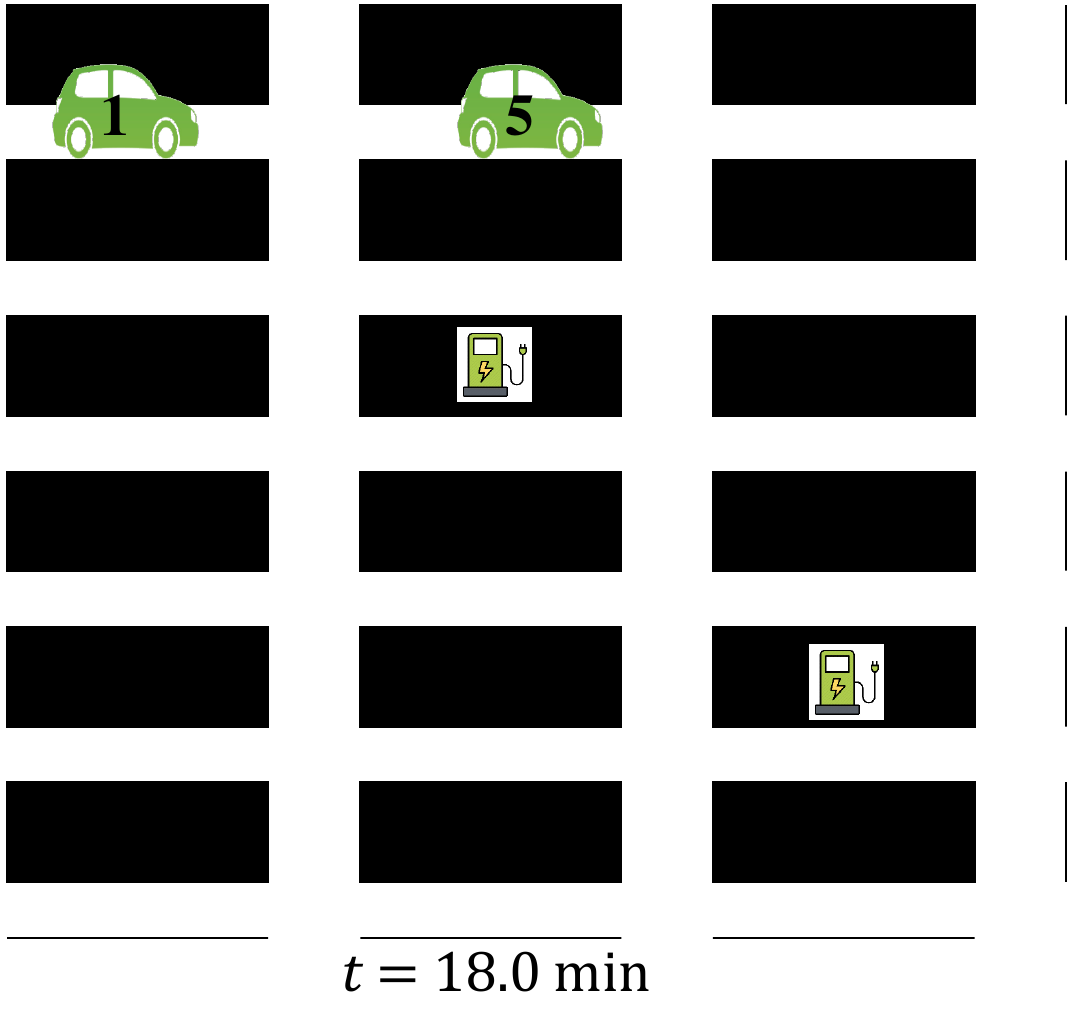}}
	
	\caption{Dynamics with \textsf{MDPP} when $V=0.1$. $t=0$: Customer 1 arrives, $t^{\mathsf{fp}}=1.5$: Vehicle 2 is assigned to Customer 1, $t=5$: Customer 2 arrives, $t=5.6$: Vehicle 3 is returned, $t=t_1^{\mathsf{sp}}=5.6$: first subsequent passage and Vehicle 3 is assigned to Customer 2, $t=13.6$: Customer 3 arrives, $t=15.8$: Vehicle 4 is returned, $t=t_2^{\mathsf{sp}}=15.8$: second subsequent passage and Vehicle 4 is assigned to Customer 3, $t=18$: Vehicle 5 is returned.}
	\label{F:ieV01}
\end{figure}
The passage times in the system are $t^{\mathsf{fp}} = 1.5$ minutes, $t_1^{\mathsf{sp}} = 5.6$ minutes, and $t_2^{\mathsf{sp}} = 15.8$ minutes.  These correspond to times that customers 1, 2, and 3 enter service, respectively.  All times outside excluding these three passage times are times for which $H_c(t) - VC_{vc}(t) < 0$ for all $(v,c)$-pairs.  Note the important role the parameter $V$ has to play in deciding these conditions. 

To further illustrate the importance of the parameter $V$, we present the resulting dynamics in Fig.~\ref{F:ieV1} with $V=1$ under the exact same customer arrivals and vehicle returns.  Clearly, the waiting times of the customers are longer in this case.
\begin{figure}[h!] 
	\centering
	
	\resizebox{1.0\textwidth}{!}{%
		\includegraphics[scale=0.17]{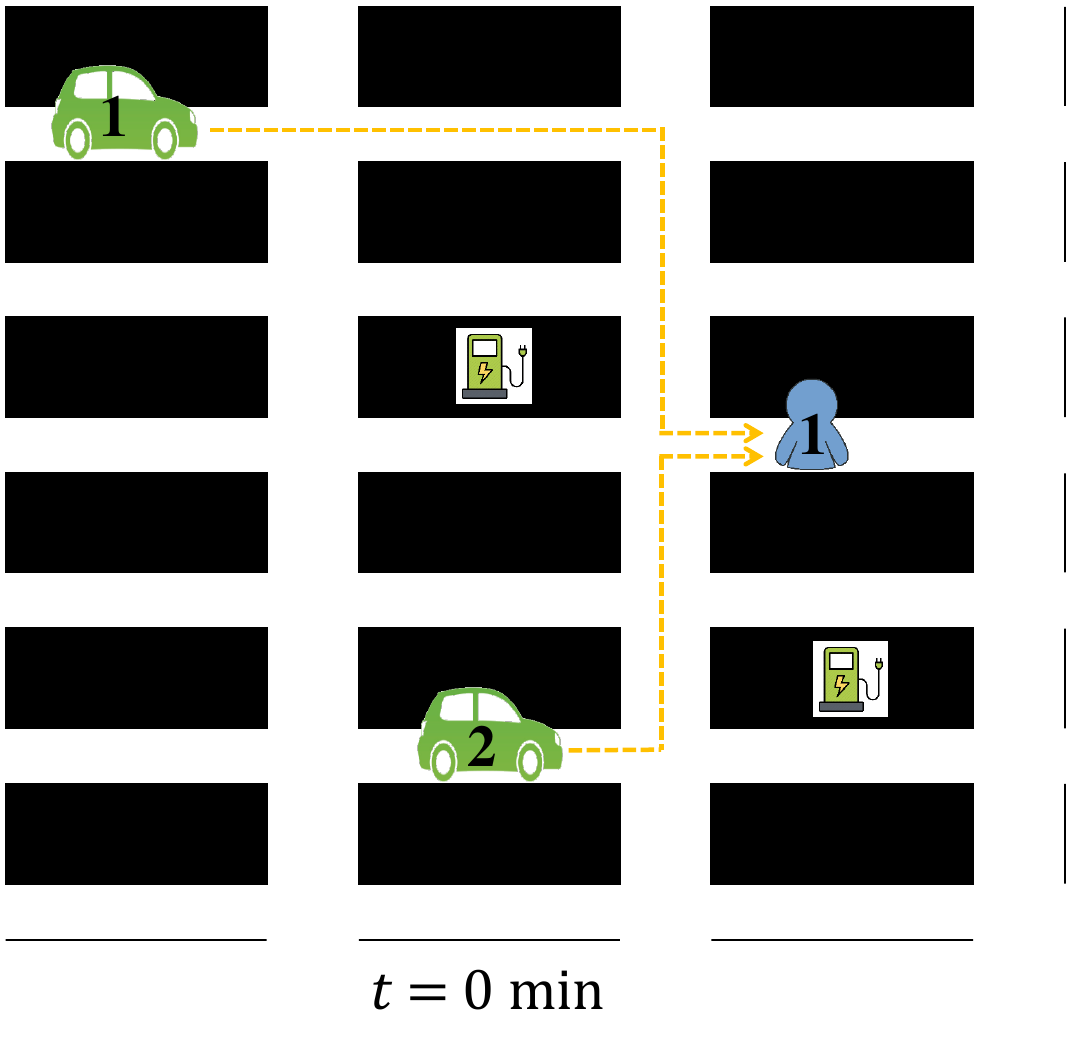}\hspace{0.2in}
		\includegraphics[scale=0.17]{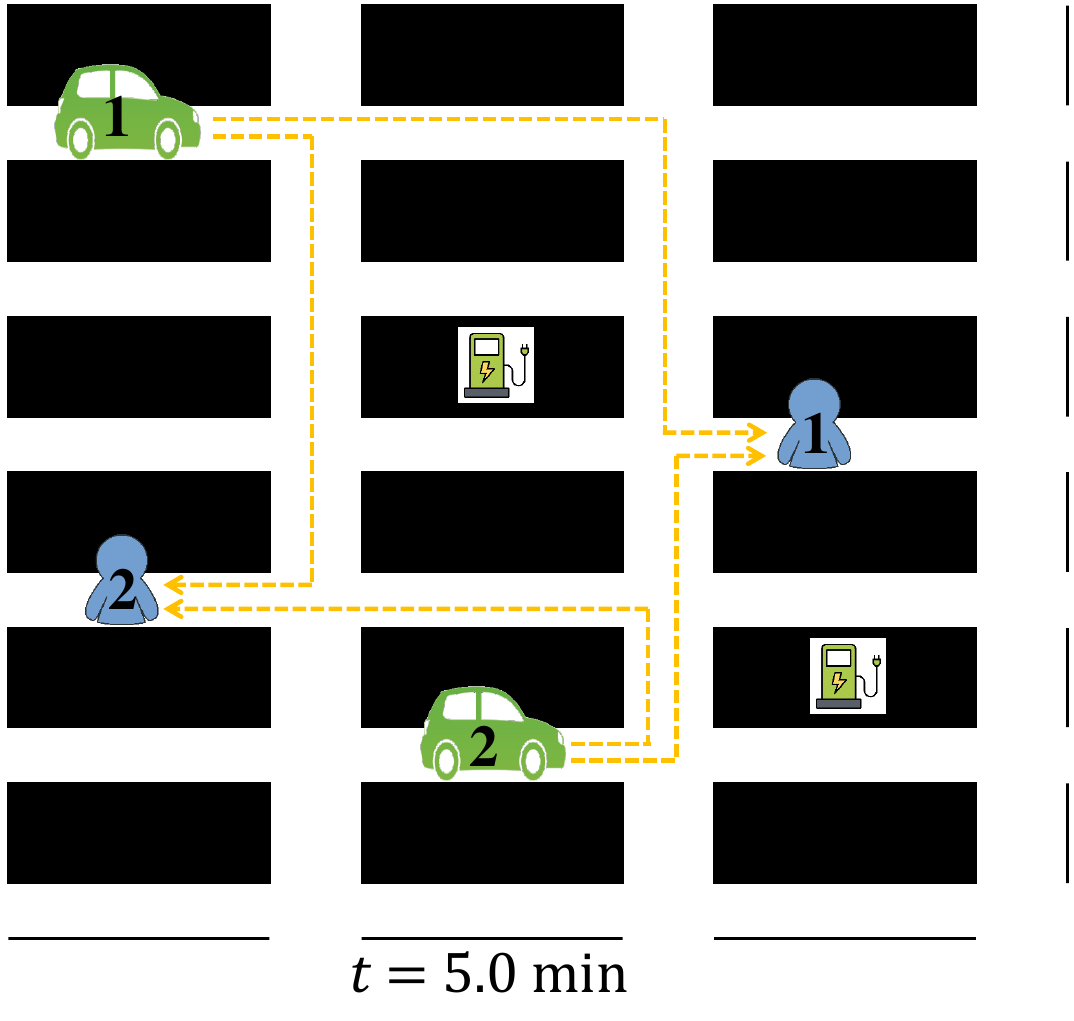}\hspace{0.2in}
		\includegraphics[scale=0.17]{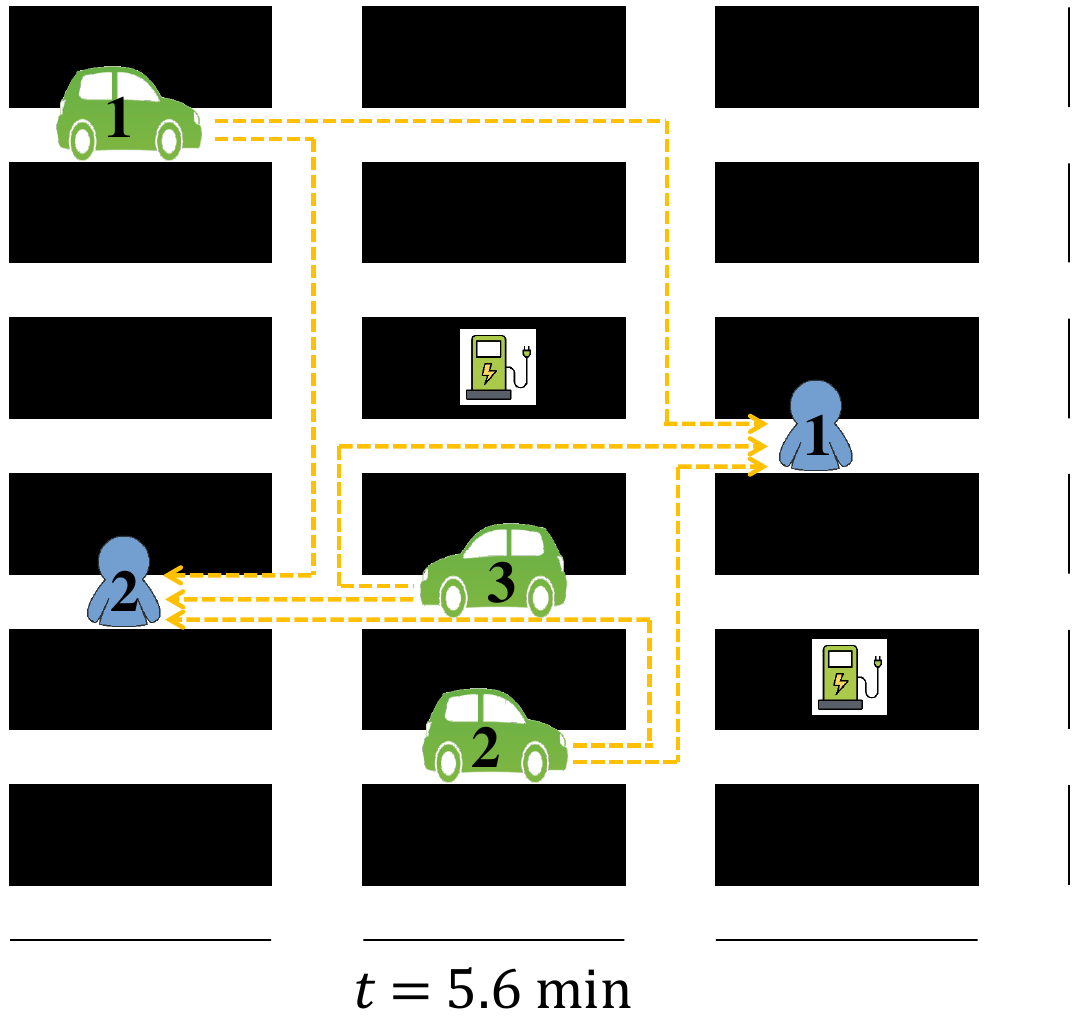}}	
	
	\resizebox{1.0\textwidth}{!}{%
		\includegraphics[scale=0.17]{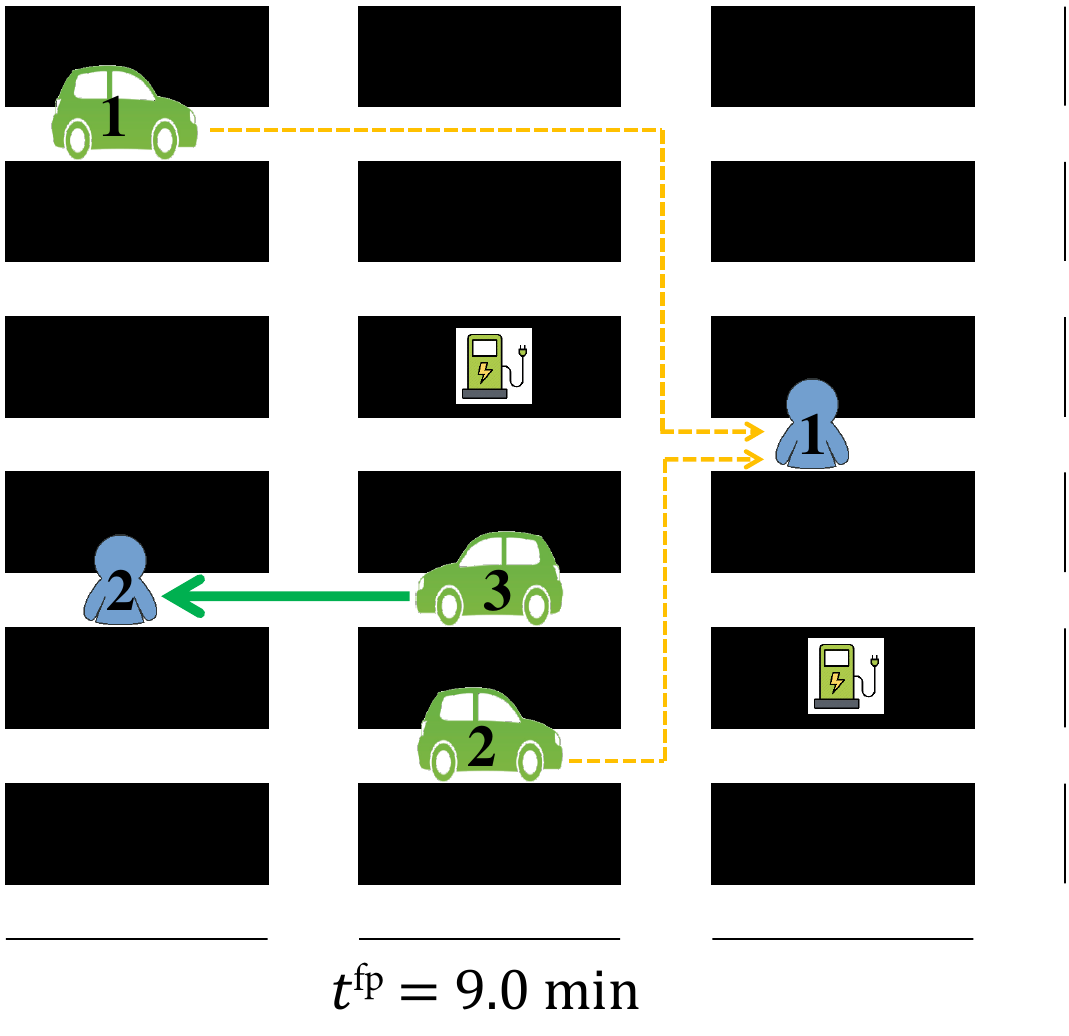}\hspace{0.2in}
		\includegraphics[scale=0.17]{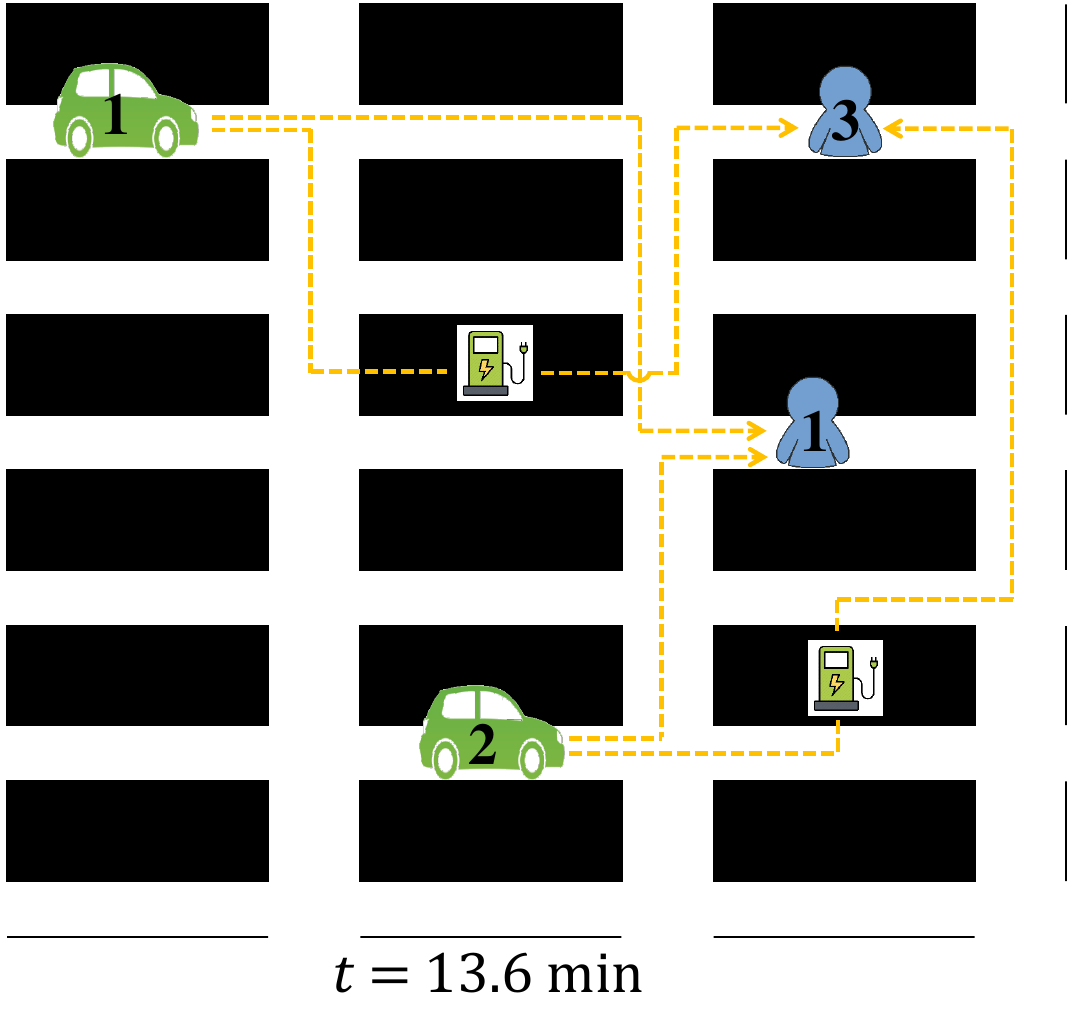}\hspace{0.2in}
		\includegraphics[scale=0.17]{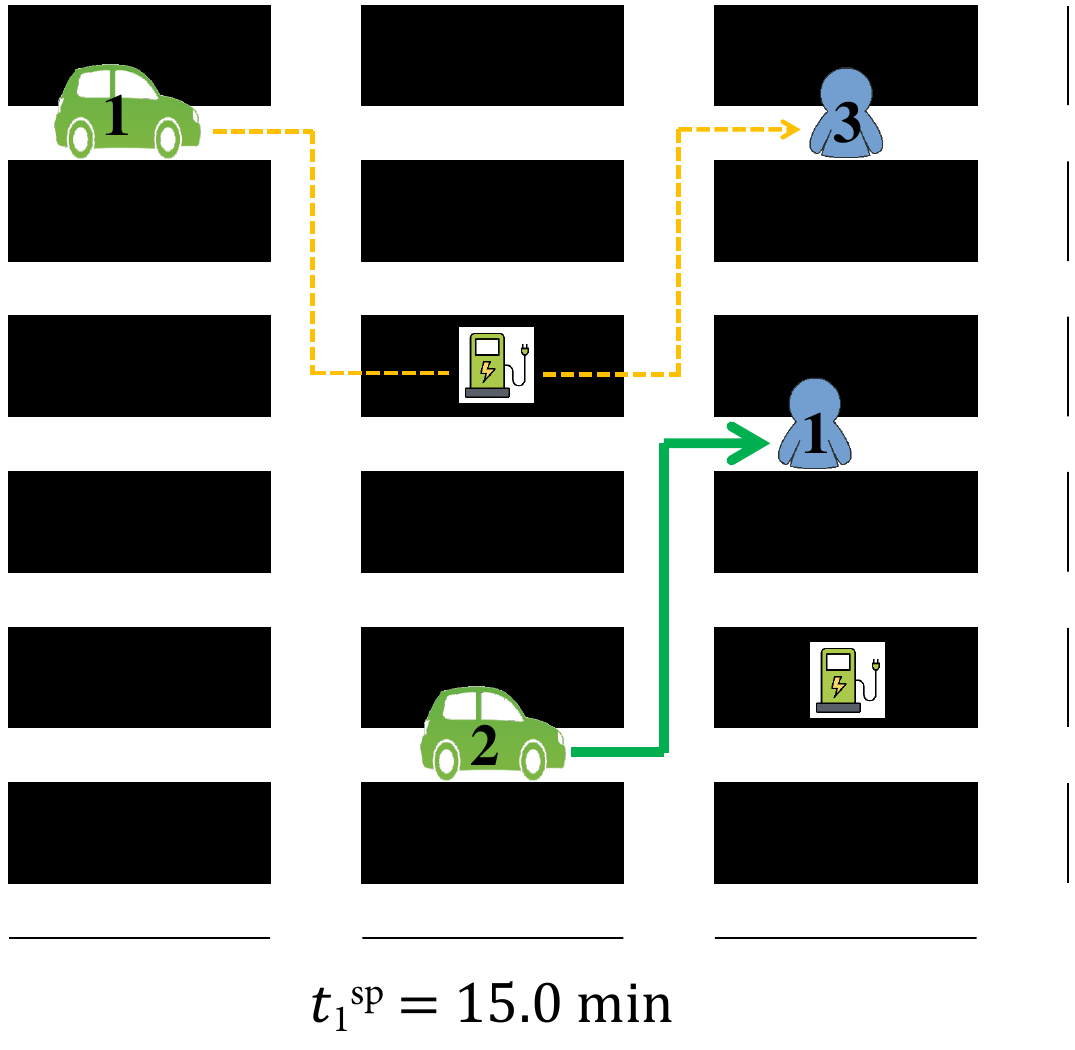}}
	
	\resizebox{1.0\textwidth}{!}{%
		\includegraphics[scale=0.17]{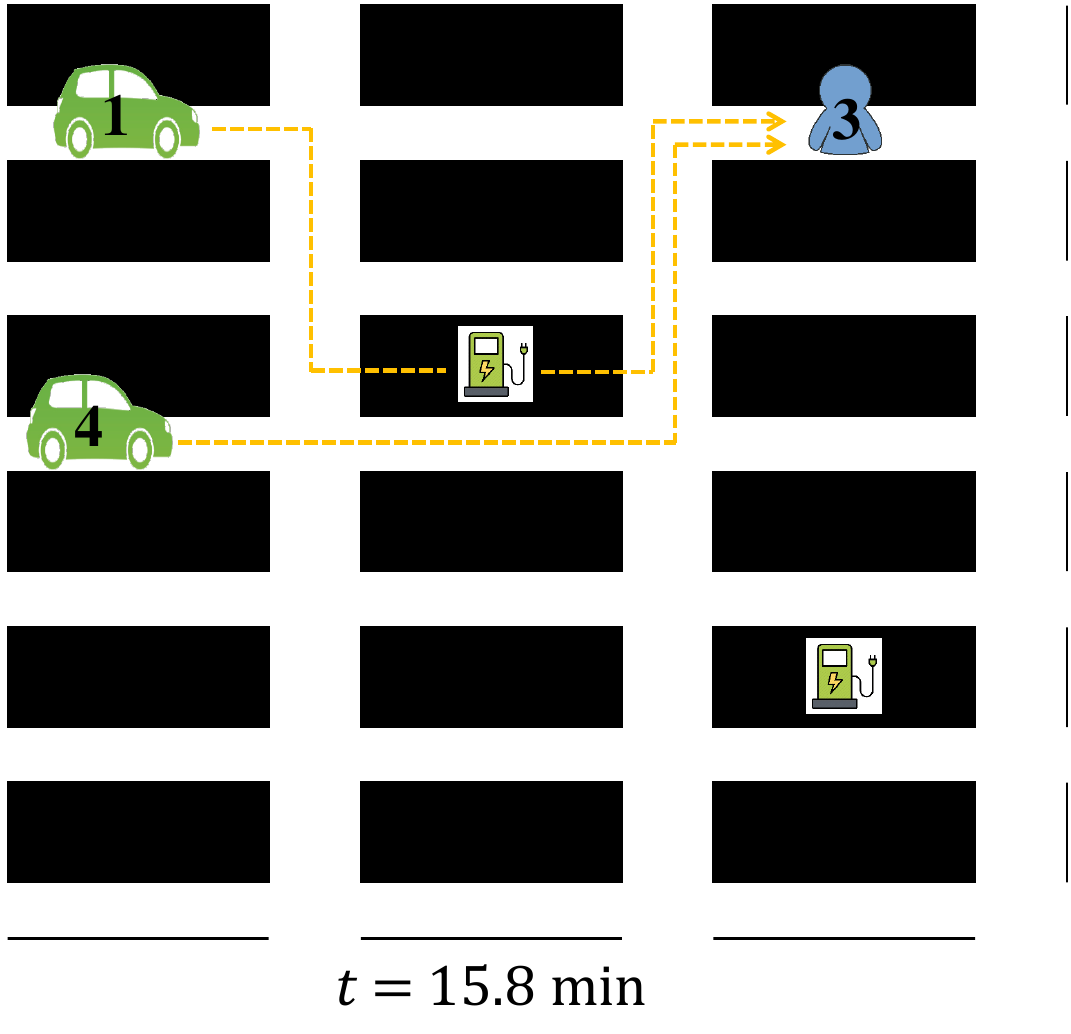}\hspace{0.2in}
		\includegraphics[scale=0.17]{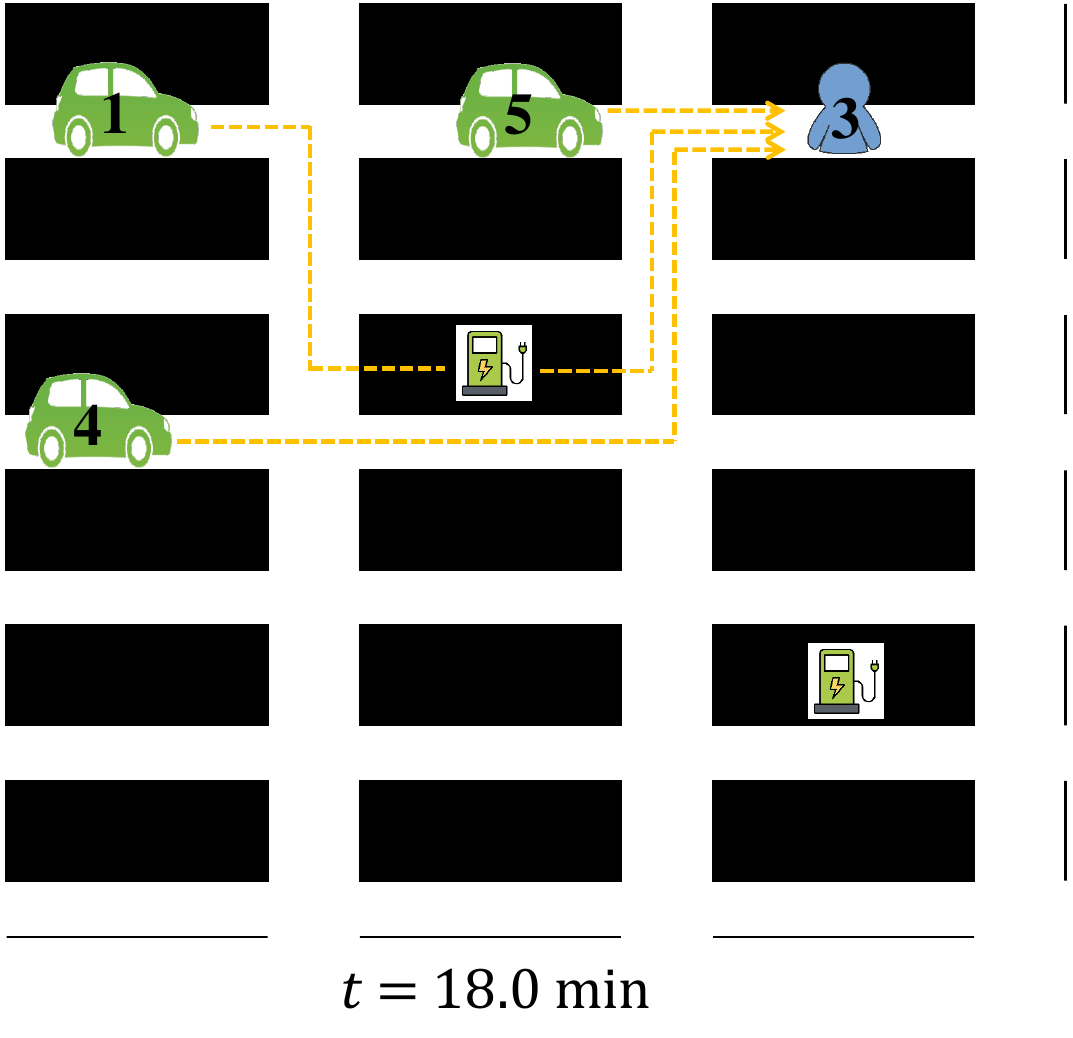}\hspace{0.2in}
		\includegraphics[scale=0.17]{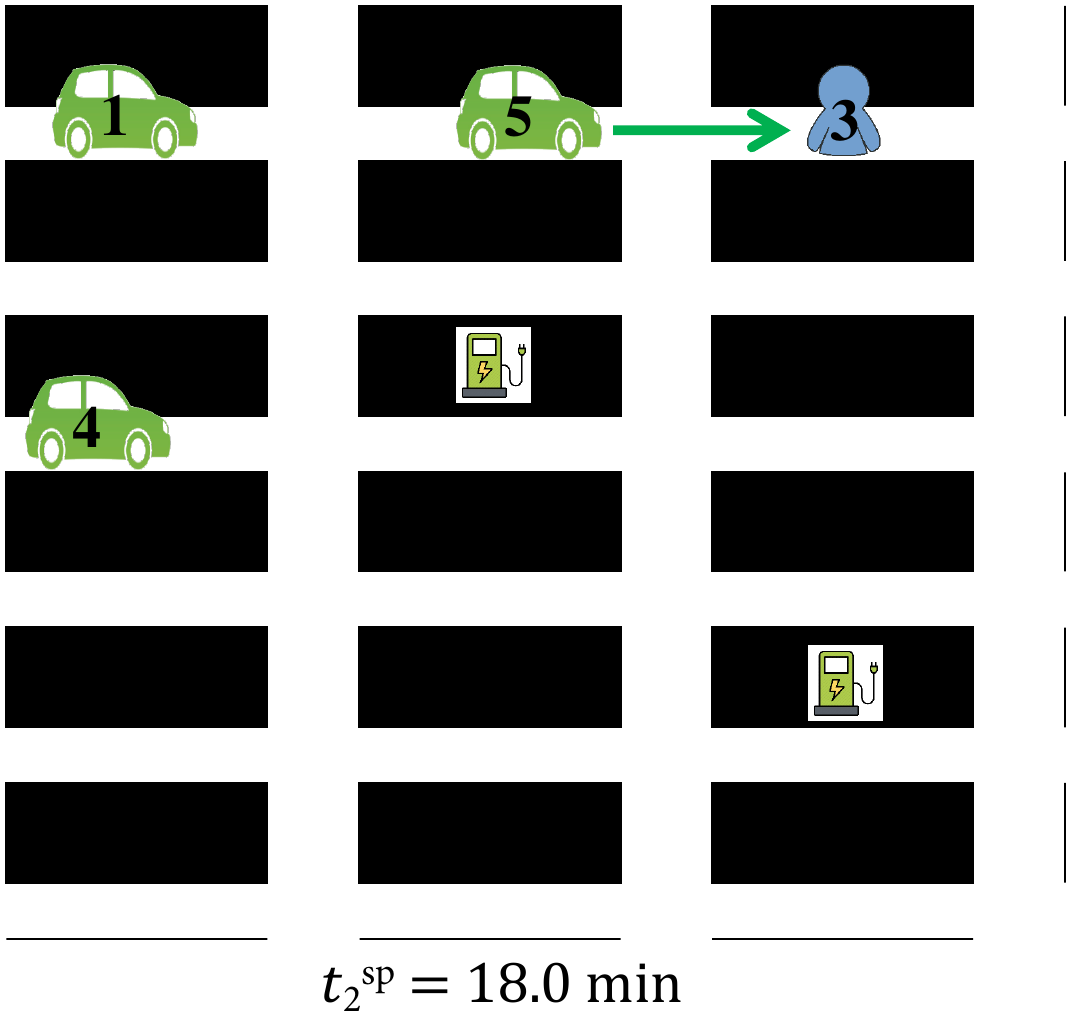}}
	
	\caption{Dynamics with \textsf{MDPP} when $V=1$. $t=0$: Customer 1 arrives, $t=5$: Customer 2 arrives, $t=5.6$: Vehicle 3 is returned, $t=t^{\mathsf{fp}}=9$: first passage and Vehicle 3 is assigned to Customer 2, $t=13.6$: Customer 3 arrives, $t=t_1^{\mathsf{sp}}=15$: first subsequent passage time and Vehicle 2 is assigned to Customer 1, $t=15.8$: Vehicle 4 is returned, $t=18$: Vehicle 5 is returned, $t=t_2^{\mathsf{sp}}=18$: second subsequent passage and Vehicle 5 is assigned to Customer 3.}
	\label{F:ieV1}
\end{figure}

We give another small example here to illustrate en-route charging in our system.  This example specifically highlights the utility of en-route charging as an additional option. Assume a customer arrives at $t = 0$ and requests a vehicle with a charge level of 80\% (or more). The only available vehicle in the system is Vehicle 1 with 60\% charge. Assuming a battery capacity of 60 kwh and a charge power of 120 kw, Vehicle 1 needs to charge for 6 minutes to reach 80\% charge and the en-route charging adds 10 minutes in travel time to reach the customer (due to the detour). At $t = 5$ minutes Vehicle 2 is returned to the system with 80\% charge, and it is 20 minutes away from the customer. Let's set $V = 0.1$. If en-route charging is allowed, then Vehicle 1 will be assigned to the customer at $t^{\mathsf{fp}} = 1.6$ minutes, and will pick up the customer at $t = 17.6$ minutes. Otherwise, if en-route charging is not allowed, then Vehicle 2 will be assigned to the customer at $t^{\mathsf{fp}} = 7$ minutes, and will pick up the customer at $t = 27$ minutes. Hence, allowing en-route charging saves the customer about 10 minutes in waiting time.

\section{Experiments}
\label{S:simulation}

An agent-based simulator was developed to evaluate different vehicle dispatching algorithms for the electric car-sharing systems, as detailed in \citep{li2019agent} and highlighted here. There are two kinds of agents: customers and vehicles. The status of customer agents and vehicle agents are updated every time step, according to the selected vehicle-to-customer assignment algorithm and vehicle recharge/rebalancing algorithm. We compare the MDPP algorithm with other algorithms, including some straightforward heuristics and some algorithms from the literature. We investigate two scenarios: a low-demand scenario with long trips, and a high-demand scenario with short trips. The mean customer waiting times, mean number of waiting customers at every time step, and total dispatch cost (from the vehicles to the customers, the charging stations, or the rebalancing stations) are selected to evaluate the performance of different algorithms.

\subsection{Low-demand scenario with long trips}
\label{ss:low_demand}

We use data obtained from the BMW ReachNow car-sharing operations in Brooklyn, NY in 2017 to simulate the low-demand-long-trips scenario. The project covers 303 Traffic Analysis Zones (TAZs) in Brooklyn, and only 18 of them include charging stations (according to data from \href{https://chargehub.com/en/charging-stations-map.html?lat=40.706109&lon=-74.01014800000002&locId=56029}{ChargeHub.com}), as shown in Fig.~\ref{F:BrooklynMaps}.  The number of chargers in each charging stations are calibrated based on the real data from \href{https://chargehub.com/en/charging-stations-map.html?lat=40.706109&lon=-74.01014800000002&locId=56029}{ChargeHub.com}. We have a total of 82 chargers in the network, with 2 to 11 Level 2 chargers in the charging stations. Each TAZ is treated as a zone in the car-sharing system. Each zone is then divided into 5 customer nodes with different charge levels, as shown in Fig.~\ref{F:network_b}.
\begin{figure}[h!]
	\centering
	\footnotesize
	\resizebox{0.85\textwidth}{!}{%
		\includegraphics[scale=0.045]{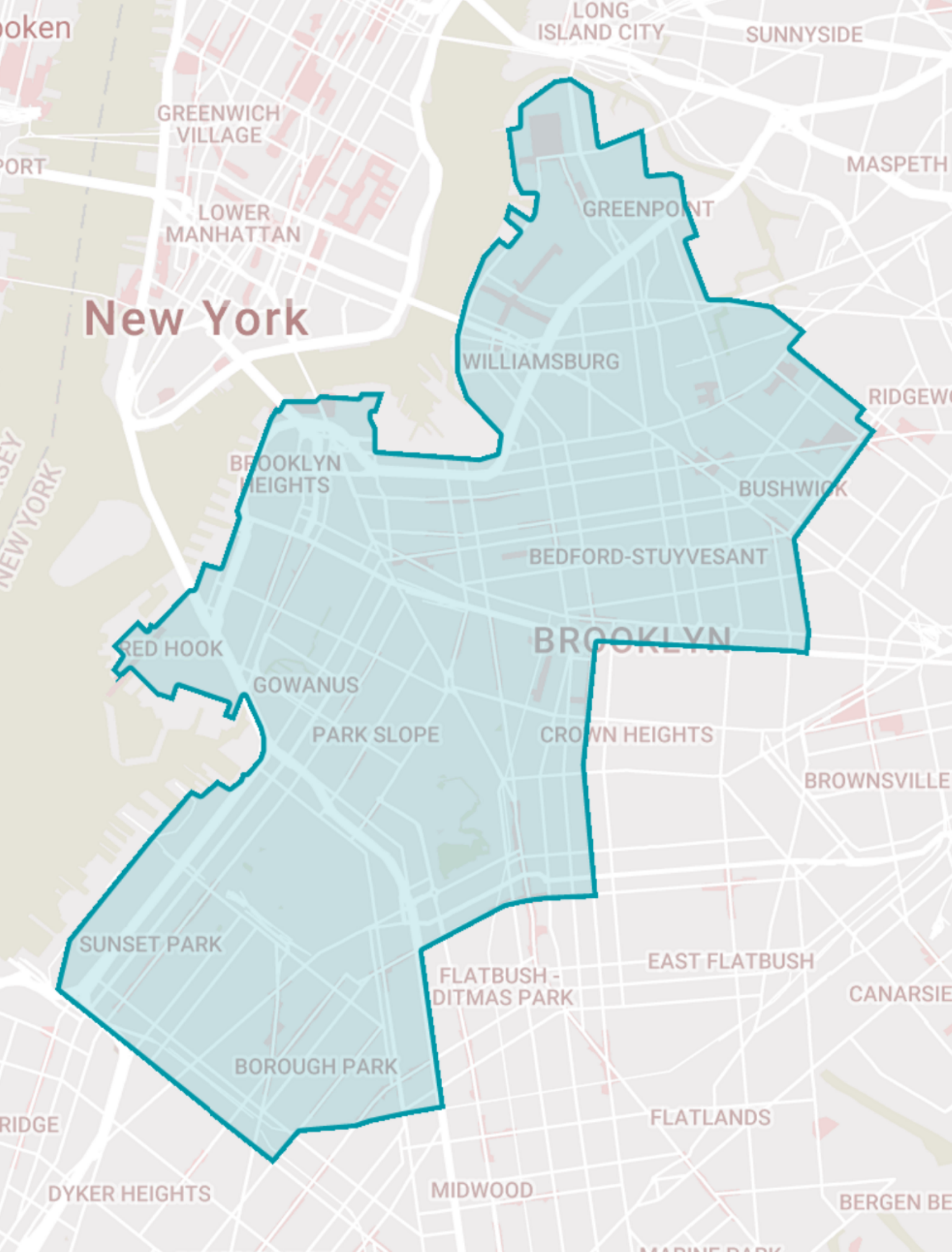} 
		\includegraphics[scale=0.112]{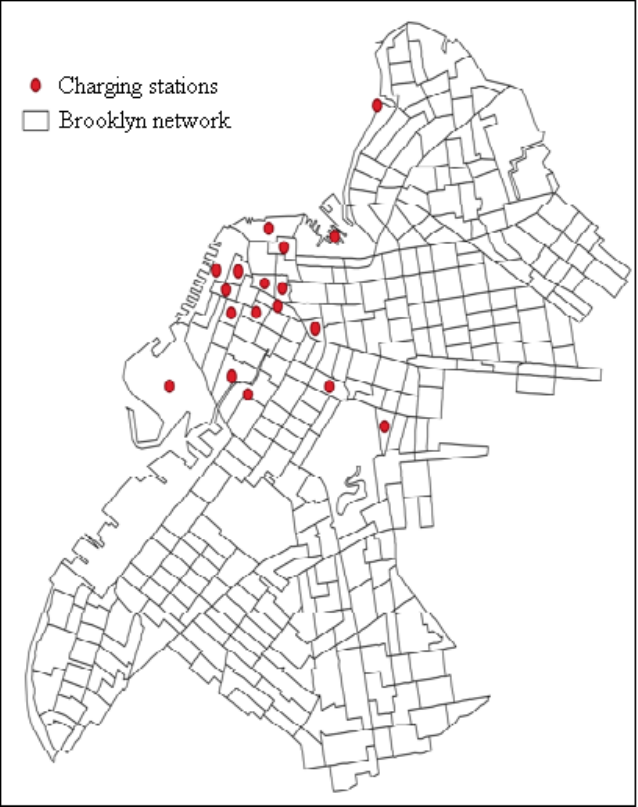}}
	
	(a) \hspace{2.5in} (b)
	\caption{(a) Map of project area, (b) charger distribution from \href{https://chargehub.com/en/charging-stations-map.html?lat=40.706109&lon=-74.01014800000002&locId=56029}{ChargeHub.com}.}
	\label{F:BrooklynMaps}
\end{figure}
The ReachNow project uses gasoline-powered vehicles, where a small number of the trips have very long distances. These trips were ignored in our simulation since they are beyond the battery ranges of the electric vehicles.

Our dataset includes all the trips in September, 2017. The average arrival rate is approximately 230 customers per day. The trip distance and trip duration distributions are shown in Fig.~\ref{F:brooklyn_demand}. 
\begin{figure}[h!]
	\centering
	\subfloat[]{\includegraphics[scale=0.38]{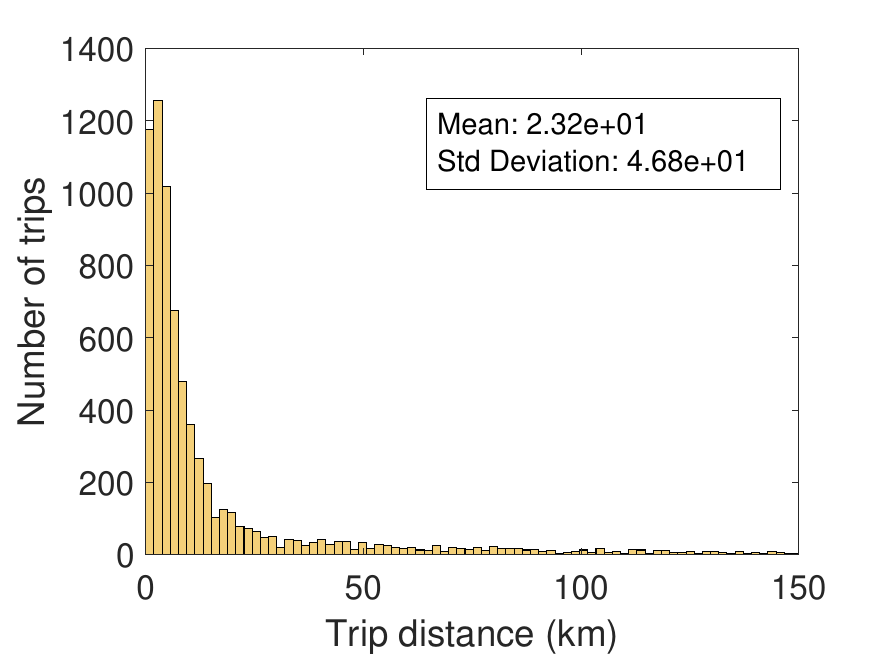}\label{F:brooklyn_demand_a}} 
	\subfloat[]{\includegraphics[scale=0.38]{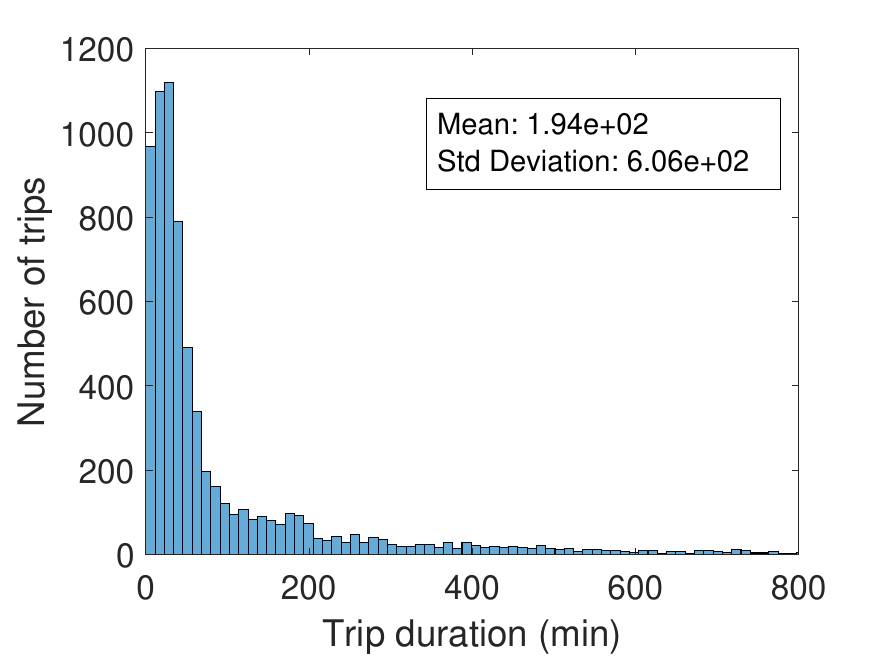}\label{F:brooklyn_demand_b}}
	\caption{BMW ReachNow project's (a) trip distance distribution, (b) trip duration distribution.\label{F:brooklyn_demand}}		
\end{figure}
The mean trip distance is 23.2 km, and the mean trip duration is 194 min. The  fleet size is 262 vehicles. We tested the following algorithms in this scenario:
\begin{enumerate}[label=(\roman*)]
	\item \textsf{MDPP} with different values of $V$, namely, $V=0$, 0.1, and 1.
	\item A na\"{i}ve ``charger chasing'' algorithm that assigns vehicles to the closest charging facility right after they drop customers off, and gives priority to vehicles with lower charge levels (to use chargers). 
	\item The vehicle recharging \& rebalancing heuristic \citep{li2019agent} developed for a SAEV system.
	\item The vehicle recharging rules \citep{loeb2018shared} designed for a SAEV system.
\end{enumerate} 
Note that the vehicle-to-customer assignment rule in \citep{marczuk2015autonomous}, which assigns the nearest available vehicle to the customer on a first-come-first-serve basis, is used when testing algorithms (ii)-(iv). \emph{For all algorithms, vehicles returned to zones with charging stations are automatically connected to chargers, if any is available.} The simulation time step length is 1 min, and the horizon is 30 days. There is no maximum waiting time, hence no customer will abandon the system. We compare the sensitivity of all the algorithms to battery capacity, charge power of the chargers, and fleet size.

Fig.~\ref{F:battery} compares the mean waiting time, mean number of waiting customers, and total dispatch cost under different battery capacities. 
\begin{figure}[h!]
	\centering
	\subfloat[]{\includegraphics[scale=0.35]{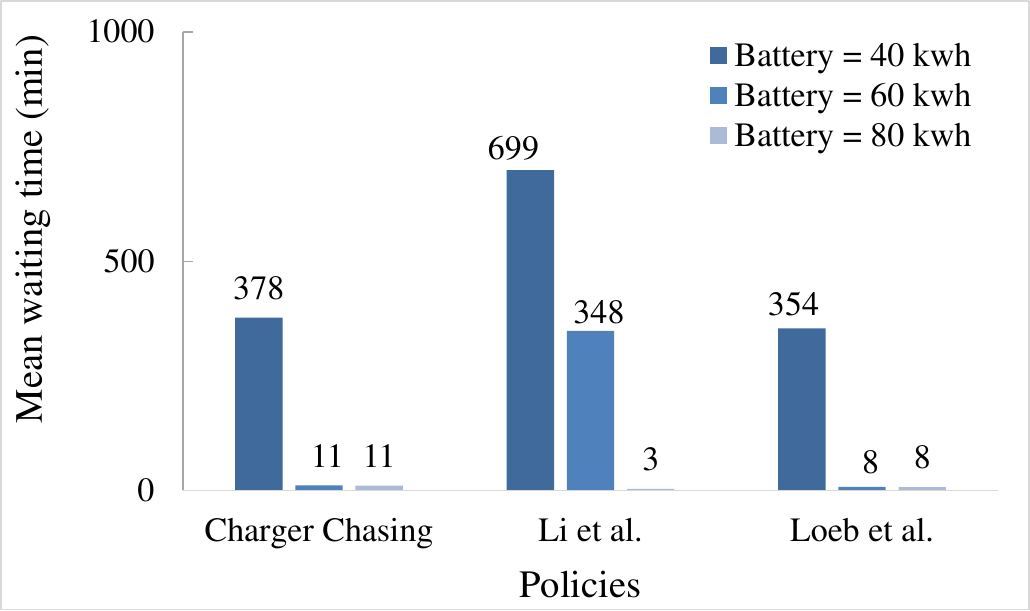}\label{F:battery_a}} 
	\subfloat[]{\includegraphics[scale=0.35]{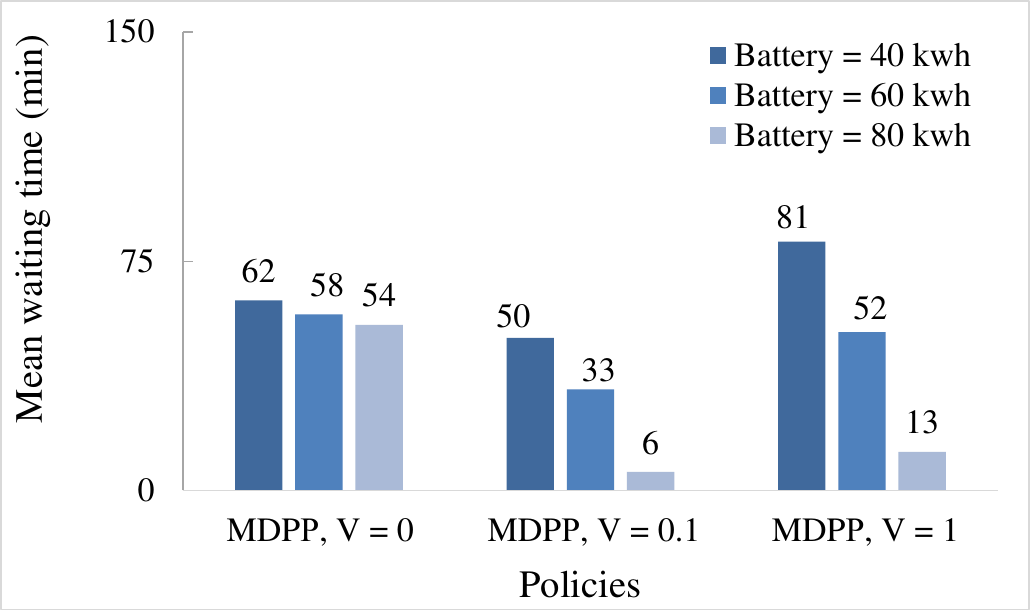}\label{F:battery_b}} \\
	
	\subfloat[]{\includegraphics[scale=0.35]{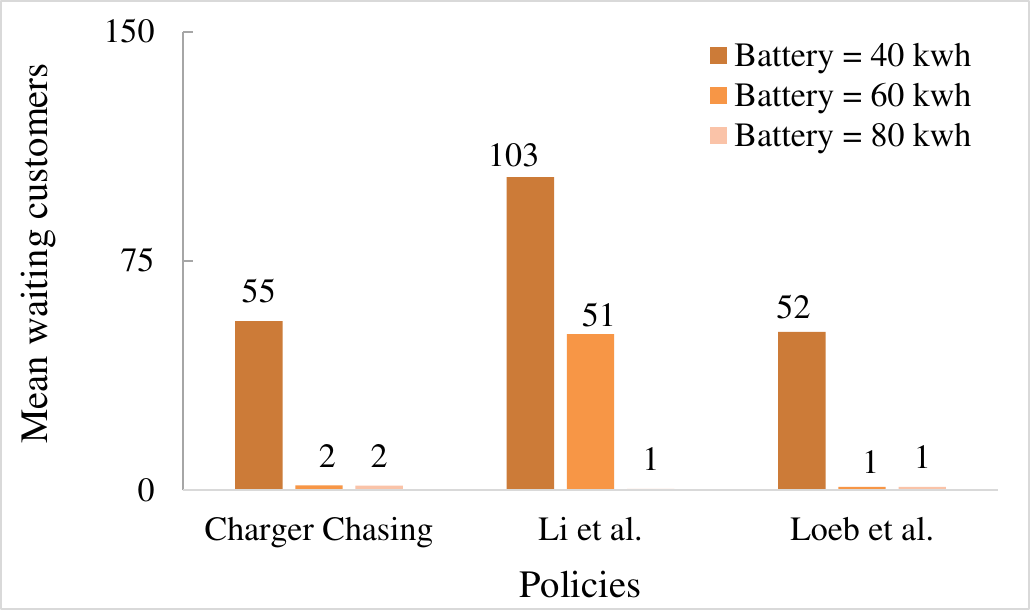}\label{F:battery_c}} 
	\subfloat[]{\includegraphics[scale=0.35]{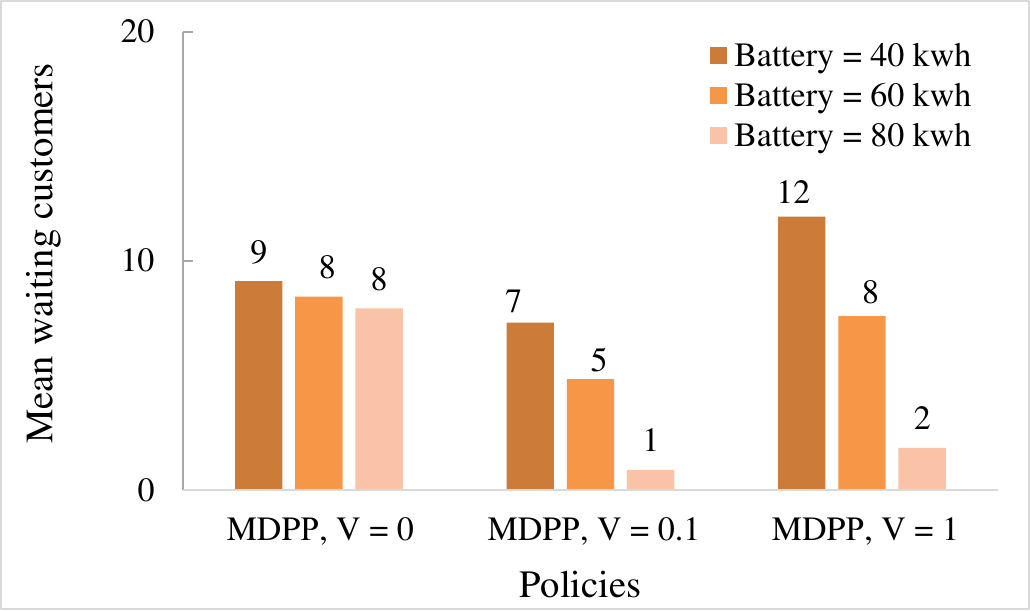}\label{F:battery_d}} \\
	
	\subfloat[]{\includegraphics[scale=0.35]{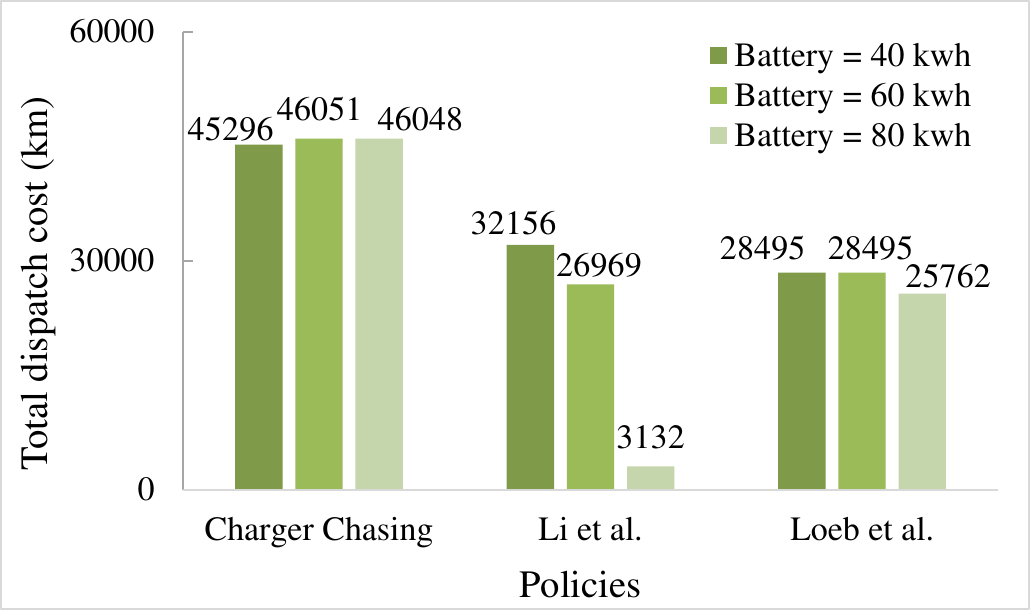}\label{F:battery_e}} 
	\subfloat[]{\includegraphics[scale=0.35]{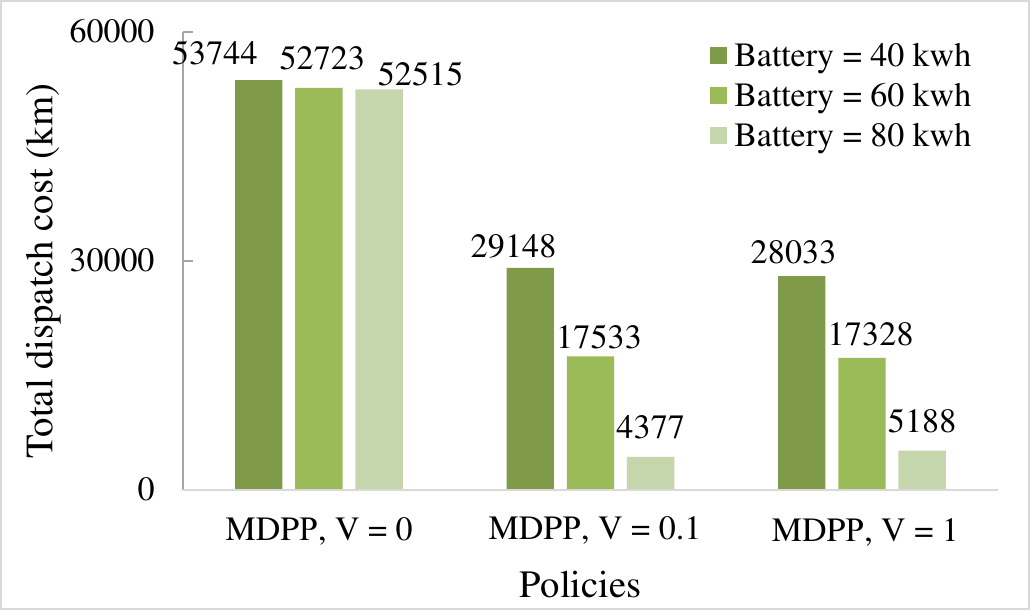}\label{F:battery_f}}
	\caption{Comparison under different battery capacities.\label{F:battery}}		
\end{figure}
The fleet size is set to 262, which is the same as the real vehicle fleet size; the chargers are all Level 2 chargers with a charge power of 7 kw (consistent with what is seen in practice). We tested three battery capacities: 40 kwh, 60 kwh, and 80 kwh. The range per kwh is assumed to be 7 km, hence, the corresponding full ranges are 280 km, 420 km and 560 km, respectively. As we can see, when the battery capacity is 40 kwh, \textsf{MDPP} performs much better than all the other approaches. The value of $V$ significantly influences the performance of \textsf{MDPP}, $V=0.1$ performs best: it achieves the shortest mean waiting time (50 min) and the smallest number of mean waiting customers (7) with a low total dispatch cost (29,128 km). When the battery capacity is increased to 60 kwh, the charger chasing policy and \citep{loeb2018shared} perform better than our \textsf{MDPP} in terms of the mean waiting time and the mean number of waiting customers, but these advantages come with much higher dispatch costs. 
When we increase battery capacity to 80 kwh, performance of the both the charger chasing algorithm and the approach in \citep{loeb2018shared} do not further improve, while the performance of the heuristic in \citep{li2019agent} improves significantly and becomes the best policy. The \textsf{MDPP} policy with $V=0.1$ has comparable performance to the heuristic in \citep{li2019agent} but with a slightly higher dispatch cost. In general, \textsf{MDPP} with $V=0.1$ is a satisfactory policy across all battery capacities.

Fig.~\ref{F:spc} shows the results of the comparisons under different charge powers. The fleet size is 262 vehicles, and the battery capacity is 40 kwh. 
\begin{figure}[h!]
	\centering
	\subfloat[]{\includegraphics[scale=0.35]{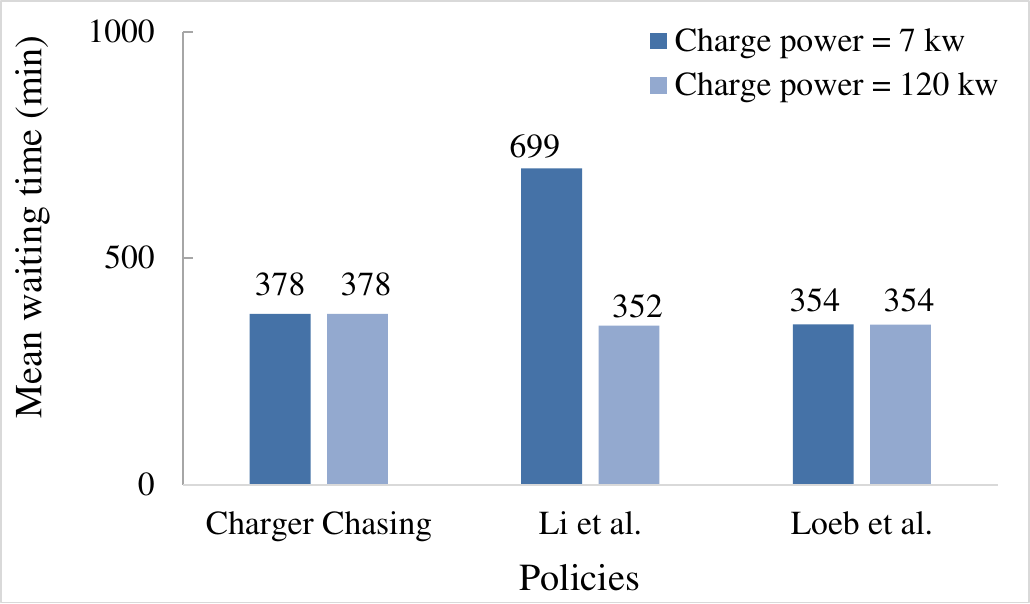}\label{F:spc_a}} 
	\subfloat[]{\includegraphics[scale=0.35]{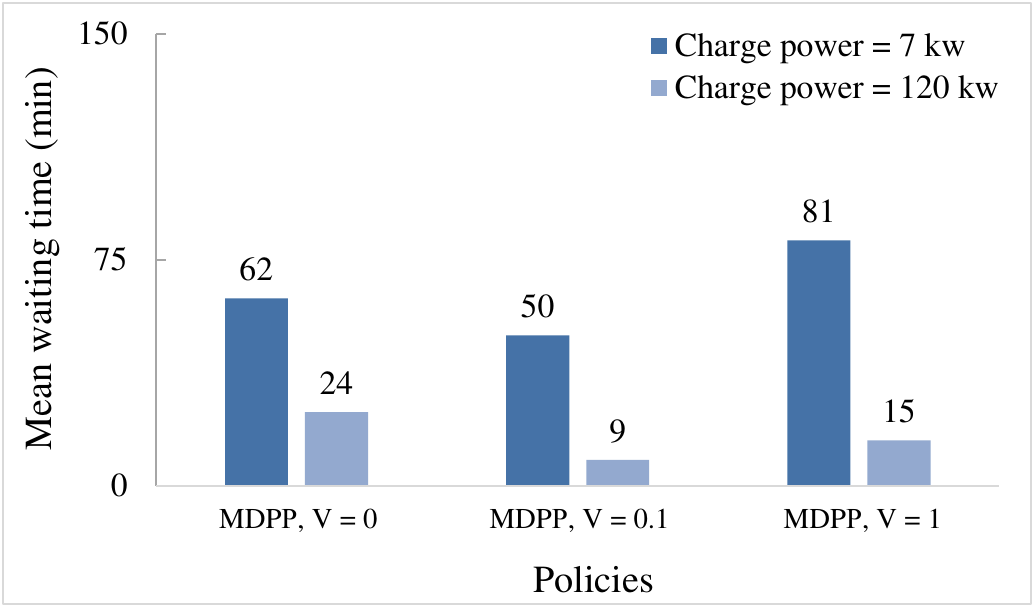}\label{F:spc_b}} \\
	
	\subfloat[]{\includegraphics[scale=0.35]{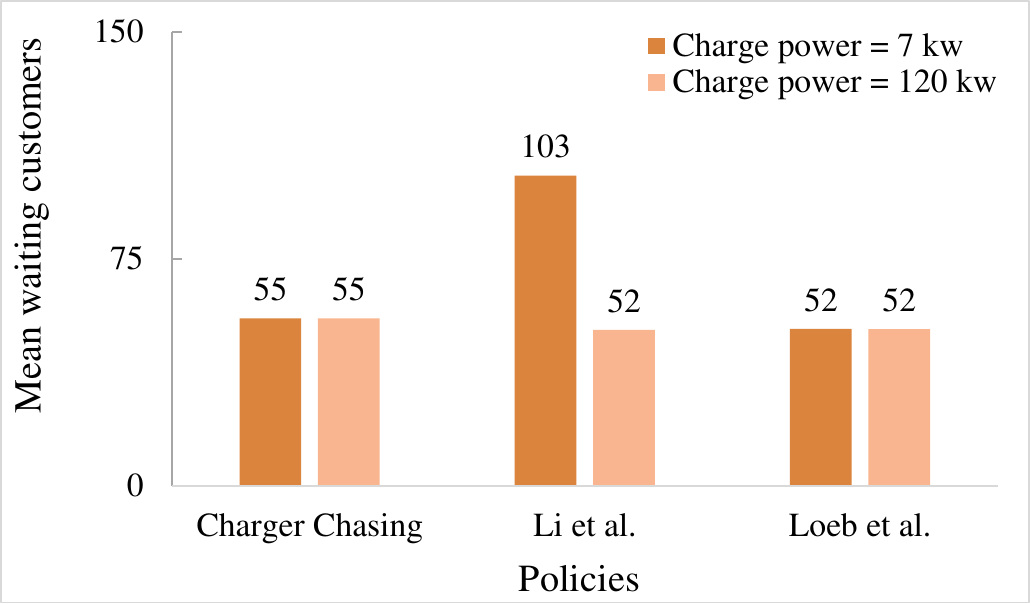}\label{F:spc_c}} 
	\subfloat[]{\includegraphics[scale=0.35]{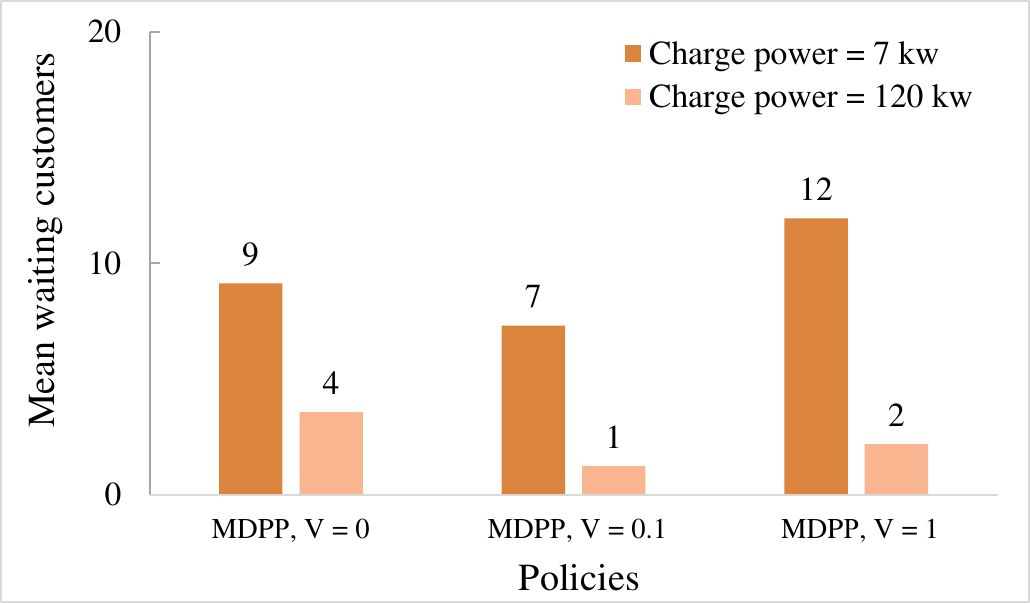}\label{F:spc_d}} \\
	
	\subfloat[]{\includegraphics[scale=0.35]{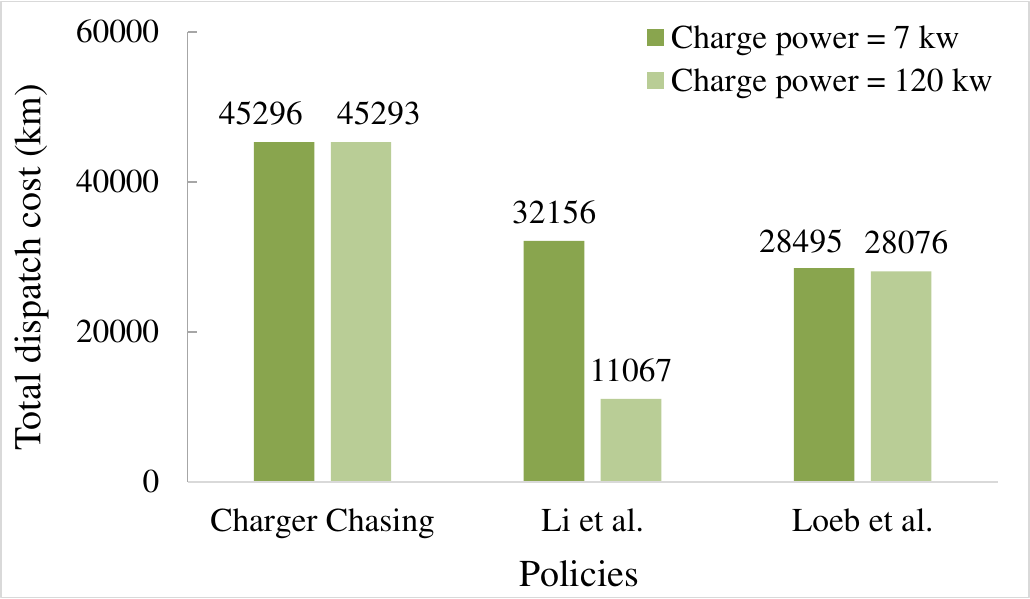}\label{F:spc_e}} 
	\subfloat[]{\includegraphics[scale=0.35]{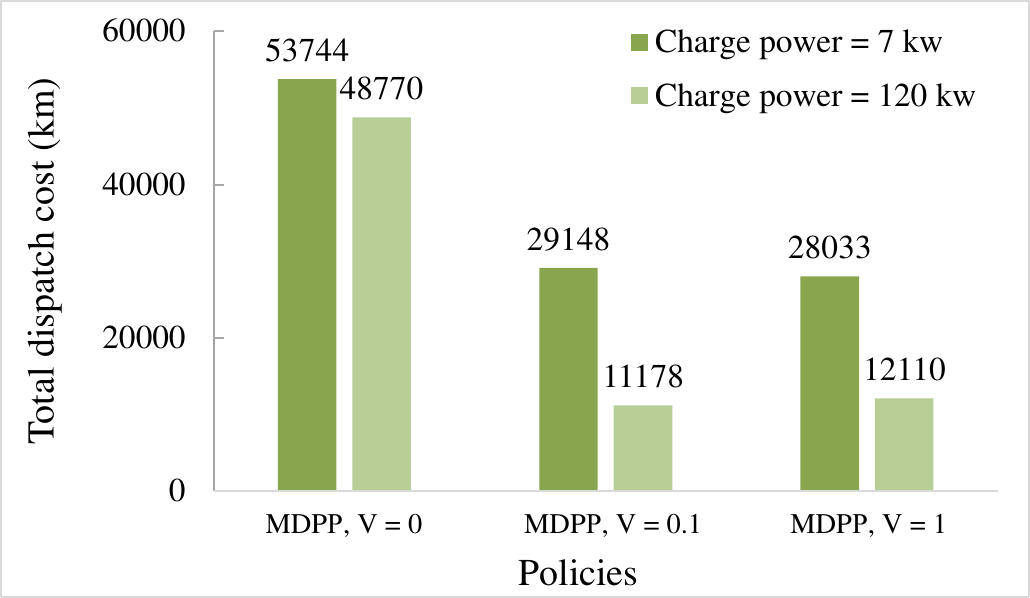}\label{F:spc_f}}
	\caption{Comparison under different charge powers.\label{F:spc}}		
\end{figure}
We have a basic setting with Level 2 chargers with 7 kw power, and a comparative setting with superchargers with 120 kw power. It turns out that charge power has little impact on the charger chasing policy and the approach proposed in \citep{loeb2018shared}, while it improves the performance of the heuristic in \citep{li2019agent} and \textsf{MDPP} substantially. Again, \textsf{MDPP} with $V=0.1$ is the best policy: it achieves a minimum mean waiting time of 9 min, and only has 1 waiting customer on average, with almost the lowest dispatch cost of 11,178 km, slightly higher than that of \citep{li2019agent}.

Fig.~\ref{F:lf} shows the results under different fleet sizes. 
\begin{figure}[h!]
	\centering
	\subfloat[]{\includegraphics[scale=0.35]{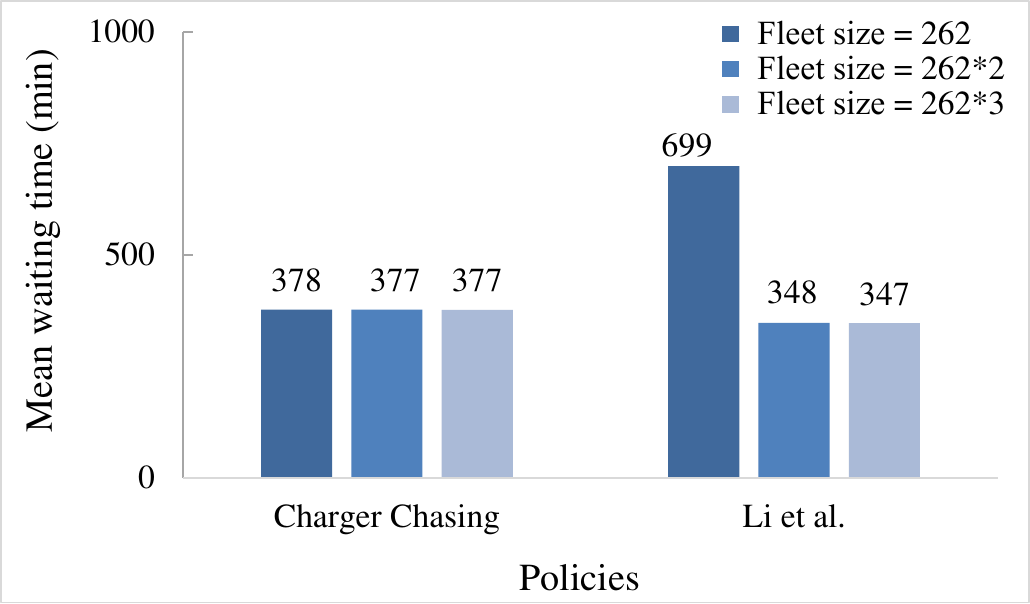}\label{F:lf_a}} 
	\subfloat[]{\includegraphics[scale=0.35]{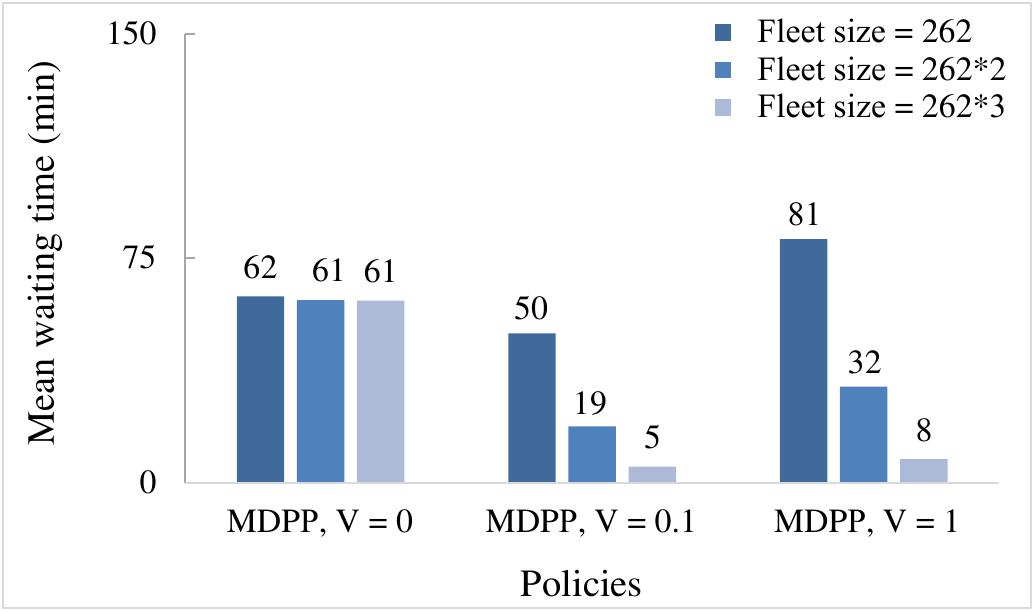}\label{F:lf_b}} \\
	
	\subfloat[]{\includegraphics[scale=0.35]{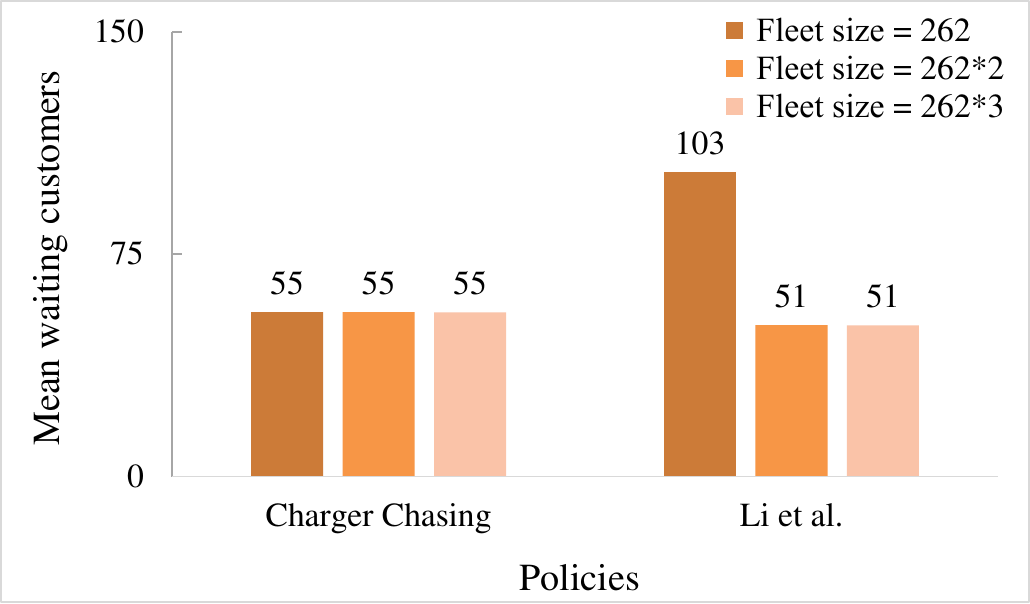}\label{F:lf_c}} 
	\subfloat[]{\includegraphics[scale=0.35]{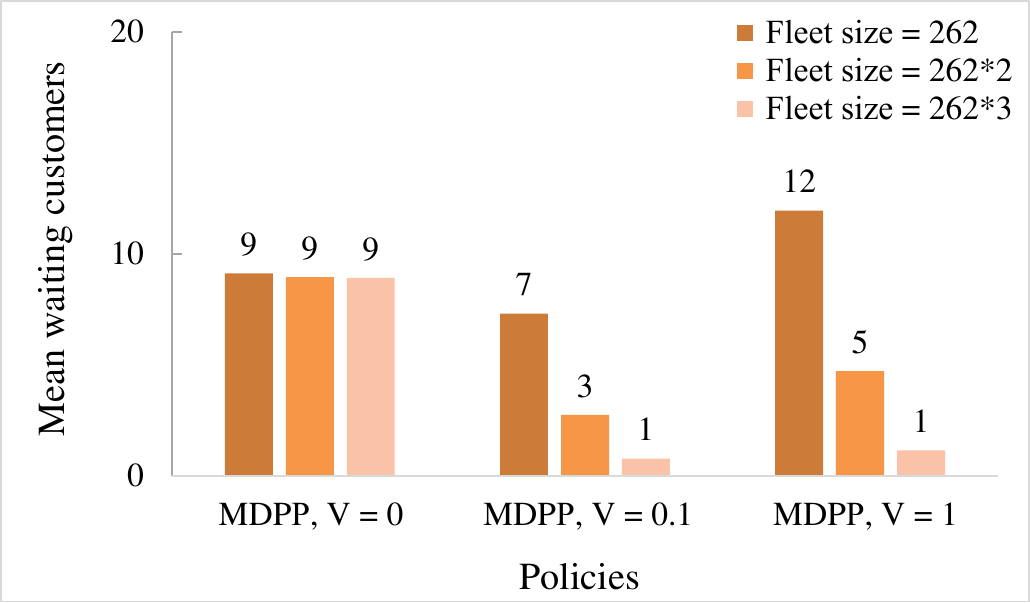}\label{F:lf_d}} \\
	
	\subfloat[]{\includegraphics[scale=0.35]{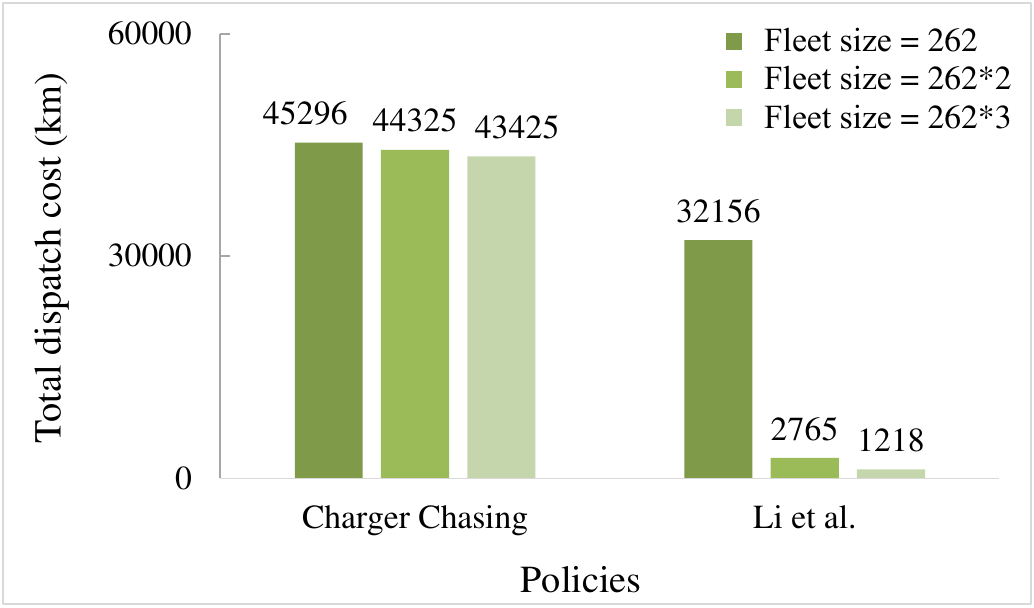}\label{F:lf_e}} 
	\subfloat[]{\includegraphics[scale=0.35]{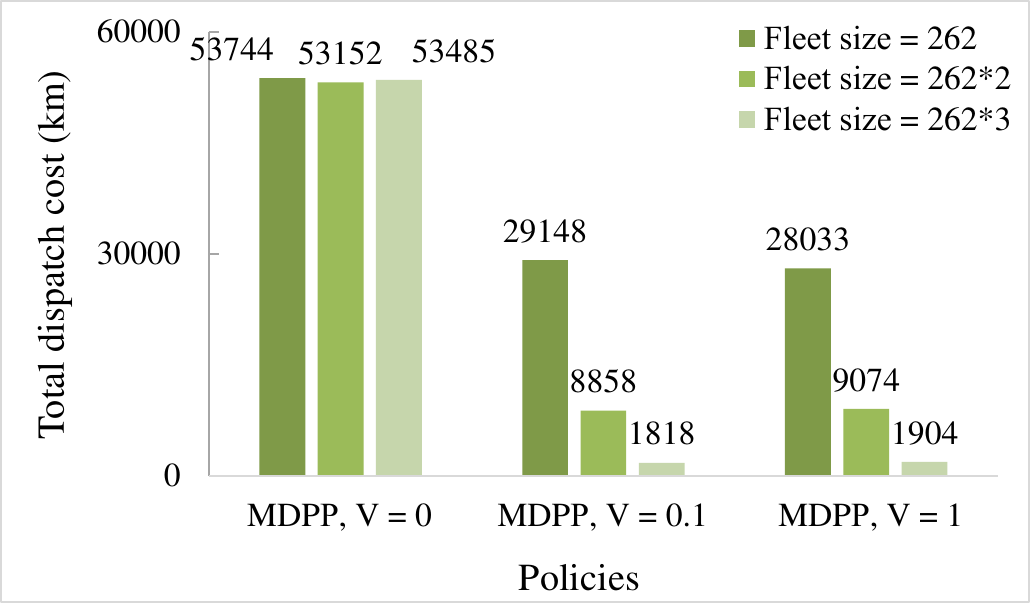}\label{F:lf_f}}
	\caption{Comparison under different fleet sizes.\label{F:lf}}		
\end{figure}
The chargers are assumed to be Level 2 chargers with charge power of 7 kw, the battery capacity is 40 kwh. We tested three fleet sizes: 262, 262$\times$2, and 262$\times$3. Note that the approach in \citep{loeb2018shared} is not included in Fig.~\ref{F:lf} because the simulation time for their approach is prohibitively long in these scenarios with more vehicles. As we can see, \textsf{MDPP} with $V=0.1$ is still the best policy as the fleet size increases. When the fleet size doubles, both the mean waiting time and the mean number of waiting customers under \textsf{MDPP} with $V=0.1$ are 60\% lower, and the total dispatch cost is reduced by 70\%. When we triple the fleet size, the mean waiting time decreases by 90\% (from 50 min to 5 min), the mean number of waiting customers decreases by 85\% (from 7 to 1), and the total dispatch cost decreases by approximately 94\% (from 29,148 km to 1,818 km).

To summarize, the value of $V$ influences the performance of \textsf{MDPP}, and among all the three tested $V$ values, $V=0.1$ delivers the best performance. \textsf{MDPP} with $V=0.1$ outperforms all other algorithms under almost all tested cases, and it is sensitive to improvements in battery capacity, charge power and fleet size. This implies that by employing the proposed \textsf{MDPP} approach with an appropriately chosen value for $V$, the operators can significantly improve system performance (including customer level of service and dispatch costs) by introducing vehicles with larger battery capacity, chargers with greater charge power, or more vehicles.

\subsection{High-demand scenario with short trips}

We use the network in Midtown Manhattan shown in Fig.~\ref{F:Manhattan_map_a} to test the high demand scenario with short trips. Yellow Cab demand data, which can be downloaded from the New
York City Taxi \& Limousine Commission website are used in the simulation. 
\begin{figure}[h!]
	\centering
	\subfloat[]{\includegraphics[scale=0.54]{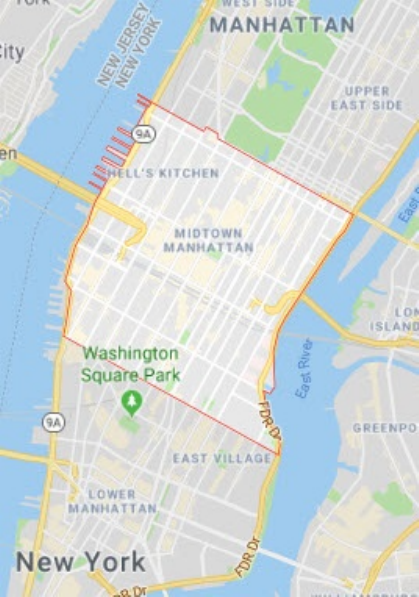}\label{F:Manhattan_map_a}} \qquad
	\subfloat[]{\includegraphics[scale=0.37]{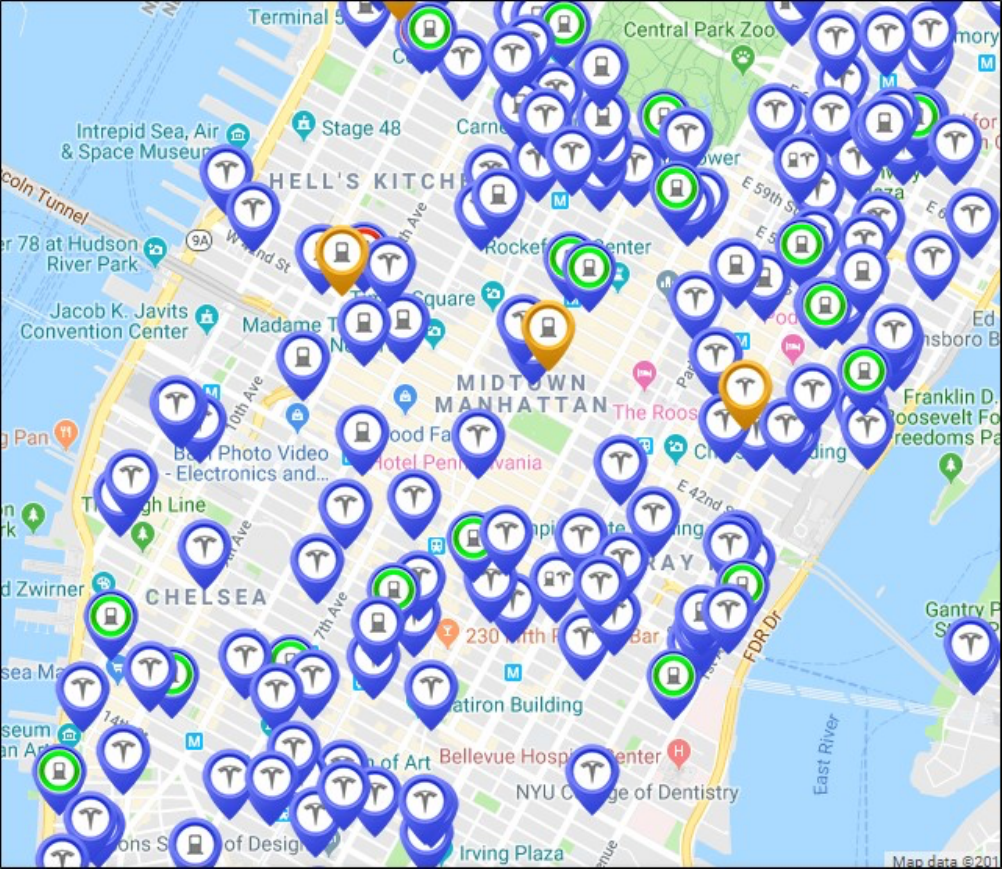}\label{F:Manhattan_map_b}}
	\caption{Midtown Manhattan: (a) map of network from \href{https://www.google.com/maps/place/Midtown+Manhattan,+New+York,+NY,+USA/@40.7575603,-73.9925782,13z/data=!4m5!3m4!1s0x89c25901a4127ca9:0xbecdcc9081d6cfdb!8m2!3d40.7549309!4d-73.9840195}{Google Maps}, and (b) charging station locations from \href{https://chargehub.com/en/charging-stations-map.html?lat=40.706109&lon=-74.01014800000002&locId=56029}{ChargeHub.com}.}
	\label{F:Manhattan_map}
\end{figure}
The charging station locations and the corresponding number of chargers are calibrated using data from \href{https://chargehub.com/en/charging-stations-map.html?lat=40.706109&lon=-74.01014800000002&locId=56029}{ChargeHub.com}. Unlike the Brooklyn network, Midtown Manhattan has a dense distribution of charging stations, with both Level 2 chargers and superchargers, as shown in Fig.~\ref{F:Manhattan_map_b}.  There are 19 TAZs in Midtown Manhattan, corresponding to 19 zones in the SAEV system. 17 of them have charging stations, and the number of chargers varies from 9 to 31, with a total number of 288 Level 2 chargers and 8 superchargers.

Data from June 1-7, 2018 are used to run a 1-week simulation. The average arrival rate is approximately 68,500 customers per day. Fig.~\ref{F:manhattan_demand} shows the distributions of trip distances and trip durations. The mean trip distance is 1.89 km, and the mean trip duration is 10.2 min. New York City had roughly 13,500 yellow cabs in 2018, and the trips in Midtown Manhattan account for approximately 10\% of all the yellow cab trips. We set 1,200 as the default fleet size in this scenario. The default battery capacity is set to be 20 kwh, since we have very short trips, the range per kwh is assumed to be 7 km.
\begin{figure}[h!]
	\centering
	\subfloat[]{\includegraphics[scale=0.38]{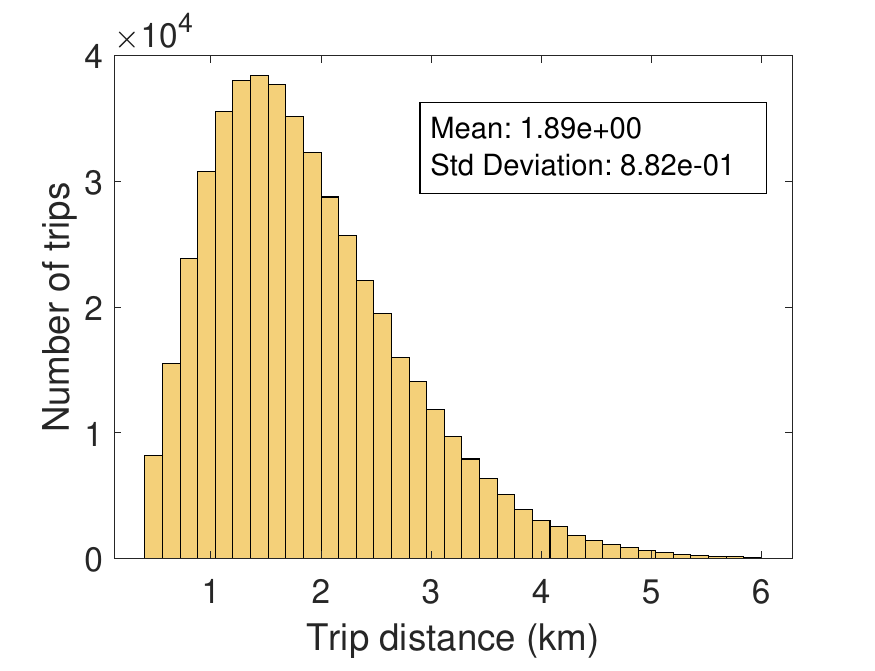}\label{F:manhattan_demand_a}} 
	\subfloat[]{\includegraphics[scale=0.38]{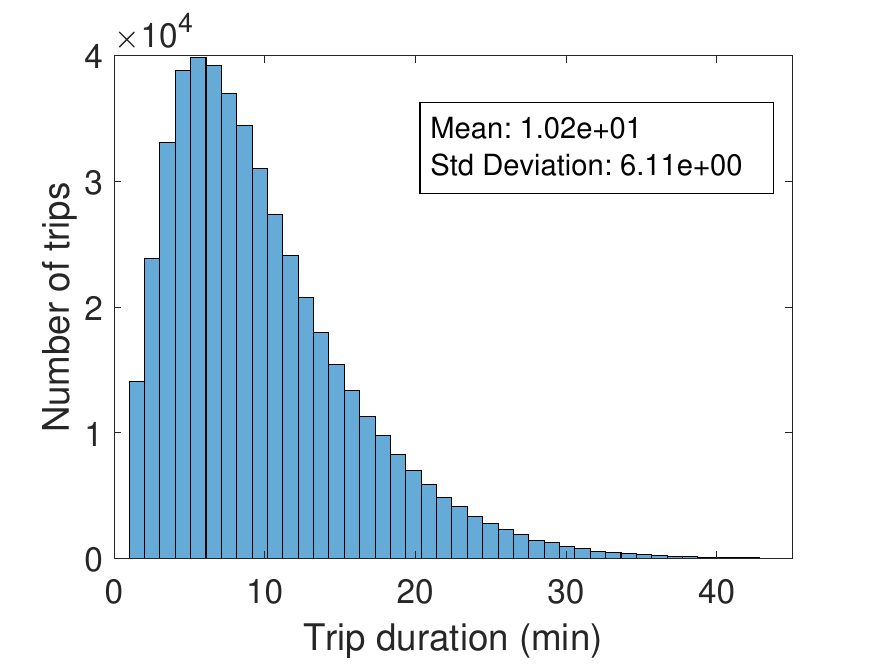}\label{F:manhattan_demand_b}}
	\caption{Midtown Manhattan taxis: (a) distribution of trip distances, (b) distribution of trip durations.\label{F:manhattan_demand}}		
\end{figure}
In addition to the algorithms that were tested in Sec.~\ref{ss:low_demand}, we also test the ``NonEV NoReb'' policy which is representative of gasoline-powered vehicles without rebalancing, since we have high arrival rates and dense charging stations in this scenario. We also test with more values for $V$ in the \textsf{MDPP} policy, namely, 0, 0.001, 0.01, 0.1, and 1. Again, the vehicle-to-customer assignment rule in \citep{marczuk2015autonomous} is used for all policies excluding the \textsf{MDPP} policy. The simulation step is 1 min, and the horizon is 7 days. A maximum waiting time of 30 min and no maximum waiting times are both tested. In the case of a maximum waiting time, customers will abandon the system when their waiting times reach 30 minutes; hence, we record the number of ``lost customers'' as well in this case.

In this scenario, we compare the performance of the scheduling policies under different fleet sizes and battery capacities, while keeping charge power constant for all tests (we use a value that is consistent with what is seen in practice). For the charging stations with both Level 2 chargers and superchargers, vehicles will give priority to using superchargers. 

Table~\ref{t1} compares the results of different policies with a fleet size of 1,200 vehicles and maximum waiting time of 30 min. 
\begin{table}[h!]
	\centering
	\caption{Comparison of different policies with a fleet size of 1,200 vehicles and a maximum waiting time of 30 minutes, Manhattan, NY.}
	\small
	\begin{tabular}{m{0.2\textwidth}>{\centering}m{0.14\textwidth}>{\centering}m{0.14\textwidth}>{\centering}m{0.14\textwidth}>{\centering\arraybackslash}m{0.15\textwidth}}
		\thickhline
		\multicolumn{1}{c}{Policies} & Mean wait time (min) & Mean no. waiting cust. & Mean no. lost cust. & Total dispatch cost (km) \\
		\hline
		\multicolumn{1}{c}{NonEV NoReb} & 21.7  & 1,034 & 24,017 & 592,869  \\
		\multicolumn{1}{c}{Charger Chasing} & 20.7 & 987 & 40,211 & 608,180 \\
		\multicolumn{1}{c}{\cite{li2019agent}} & 30.4 & 1,447 & 247,011 & 365,614 \\
		\multicolumn{1}{c}{\cite{loeb2018shared}} & 23.1 & 1,101 & 31,555 & 747,819 \\
		\multicolumn{1}{c}{\textsf{MDPP} with $V$=0} & 37.4 & 1,780 & 89,944 & 781,448\\
		\multicolumn{1}{c}{\textsf{MDPP} with $V$=0.001} & 12.4 & 591 & 1,993 & 325,850\\
		\multicolumn{1}{c}{\textsf{MDPP} with $V$=0.01} & 12.6 & 599 & 2,730 & 325,297\\
		\multicolumn{1}{c}{\textsf{MDPP} with $V$=0.1} & 12.9 & 612 & 1,958 & 324,187\\
		\multicolumn{1}{c}{\textsf{MDPP} with $V$=1} & 18.8 & 893 & 1,862 & 294,587\\
		\thickhline
	\end{tabular}
	\label{t1}
\end{table}
The battery capacity is 20 kwh. As we can see, the \textsf{MDPP} policy with $V=$0.001, 0.01, and 0.1 have similar results, and they outperform all the other policies, with the lowest mean waiting times, mean numbers of waiting customers, mean numbers of lost customers, and very low dispatch costs. When $V=0$, \textsf{MDPP} performs the worst, with the highest dispatch cost and the longest mean waiting time. This is because $V=0$ in \eqref{mdp_o} implies that \textsf{MDPP} does not care about the dispatch cost at all and only seeks to minimize the waiting time in every time step. This turns out to be counterproductive and it results in the longest mean waiting time. The total dispatch cost decreases as $V$ increases, and such reduction comes with a loss in the customers' level of service when $V$ increases from 0.1 to 1.

Table~\ref{t2} lists more results for different policies when we vary the fleet size, with a maximum waiting time of 30 min, and battery capacity of 20 kwh. 
\begin{table}[h!]
	\centering
	\caption{Comparison of different policies with different fleet sizes, maximum waiting time = 30 min, Manhattan, NY.}
	\small
	\begin{tabular}{m{0.06\textwidth}>{\centering}m{0.26\textwidth}>{\centering}m{0.12\textwidth}>{\centering}m{0.12\textwidth}>{\centering}m{0.1\textwidth}>{\centering\arraybackslash}m{0.13\textwidth}}
		\thickhline
		Fleet size & Policies & Mean wait time (min) & Mean no. waiting cust. & Mean no. lost cust. & Total dispatch cost (km) \\
		\hline
		\multicolumn{1}{c}{\multirow{6}{*}{1,400}} & NonEV NoReb & 10.3  & 491 & 0 & 609,475  \\
		& Charger Chasing & 12.0 & 570 & 7,246 & 637,335 \\
		& \cite{li2019agent} & 28.5 & 1,355 & 232,871 & 380,470 \\
		& \textsf{MDPP} with $V$=0.001 & 9.4 & 449 & 165 & 316,314\\
		& \textsf{MDPP} with $V$=0.01 & 9.0 & 430 & 357 & 316,071\\
		& \textsf{MDPP} with $V$=0.1 & 9.2 & 436 & 387 & 314,676\\
		\hline
		\multicolumn{1}{c}{\multirow{6}{*}{1,600}} & NonEV NoReb & 8.0  & 379 & 0 & 597,053  \\
		&Charger Chasing & 8.1 & 383 & 0 & 627,845 \\
		&\cite{li2019agent} & 27.3 & 1,301 & 215,064 & 399,478 \\
		&\textsf{MDPP} with $V$=0.001 & 6.8 & 324 & 0 & 307,934\\
		& \textsf{MDPP} with $V$=0.01 & 7.0 & 335 & 0 & 308,251\\
		& \textsf{MDPP} with $V$=0.1 & 7.4 & 354 & 0 & 307,861\\
		\hline
		\multicolumn{1}{c}{\multirow{6}{*}{1,800}} & NonEV NoReb & 7.9  & 377 & 0 & 594,885  \\
		& Charger Chasing & 7.9 & 377 & 0 & 622,733 \\
		& \cite{li2019agent} & 26.2 & 1,249 & 197,318 & 417,238 \\
		& \textsf{MDPP} with $V$=0.001 & 6.4 & 305 & 0 & 302,259\\
		& \textsf{MDPP} with $V$=0.01 & 6.4 & 304 & 0 & 301,835\\
		& \textsf{MDPP} with $V$=0.1 & 6.4 & 306 & 0 & 301,316\\
		\hline
		\multicolumn{1}{c}{\multirow{6}{*}{2,000}} & NonEV NoReb & 7.9  & 377 & 0 & 593,778  \\
		& Charger Chasing & 7.9 & 376 & 0 & 620,827 \\
		& \cite{li2019agent} & 24.8 & 1,180 & 178,695 & 436,694 \\
		& \textsf{MDPP} with $V$=0.001 & 6.2 & 297 & 0 & 296,638\\
		& \textsf{MDPP} with $V$=0.01 & 6.2 & 297 & 0 & 295,600\\
		& \textsf{MDPP} with $V$=0.1 & 6.3 & 298 & 0 & 296,765\\
		\thickhline
	\end{tabular}
	\label{t2}
\end{table}
Note that the method of \citep{loeb2018shared} is not included as it becomes computationally prohibitive for fleet sizes of 1400 vehicles and more. We also do not include the extreme cases of \textsf{MDPP} with $V=0$ and 1, their performance is not good as can be seen in Table~\ref{t1}. As shown in Table~\ref{t2}, all policies except for the heuristic in \citep{li2019agent} have no lost customers with fleet sizes of 1,600 vehicles or more. The charger chasing policy achieves better mean waiting times and mean numbers of waiting customers than the heuristic of \citep{li2019agent} but with a much higher dispatch cost. In contrast, \textsf{MDPP} with $V=0.001$, 0.01, and 0.1 have similar performance, and they all outperform other policies in terms of mean waiting times, mean numbers of waiting customers, numbers of lost customer, and also the dispatch costs. In other words, compared with other policies, the \textsf{MDPP}s offer better service to the customers with a lower costs to the operator. On the other hand, Table~\ref{t2} shows that after the fleet size reaches 1,800 vehicles, a further increase in fleet size brings little improvement to the system performance for all policies except the heuristic in \citep{li2019agent}. This implies that more investment in fleet expansion beyond 1,800 vehicles is not cost-effective for the operator.

To investigate the optimal fleet configuration in this scenario (under real-world charging station distributions and customer demands), we take \textsf{MDPP} with $V=0.001$ as a representative policy, and simulate the system under different battery capacities and fleet sizes. We tested battery capacities of 20 kwh, 40 kwh, 60 kwh, 80 kwh, while $\infty$ kwh represents gasoline-powered vehicles. Fig.~\ref{F:manhattan_mdp0001} shows the results of different performance indices. 
\begin{figure}[h!]
	\centering
	\subfloat[]{\includegraphics[scale=0.28]{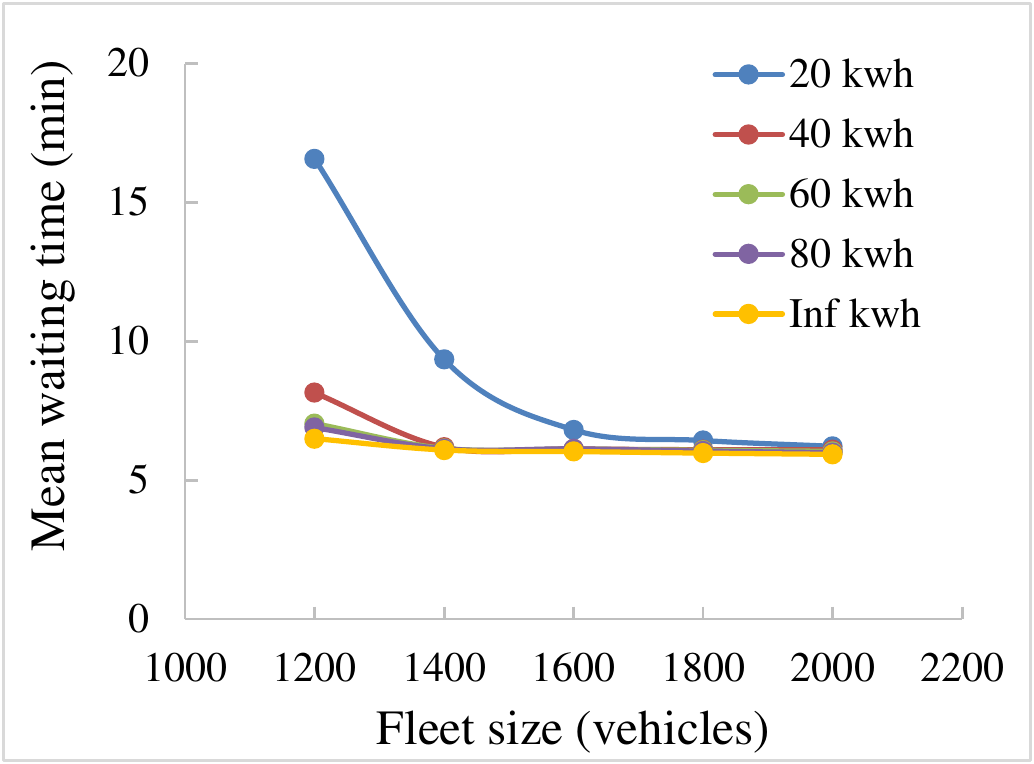}\label{F:manhattan_mdp0001_a}} 
	\subfloat[]{\includegraphics[scale=0.28]{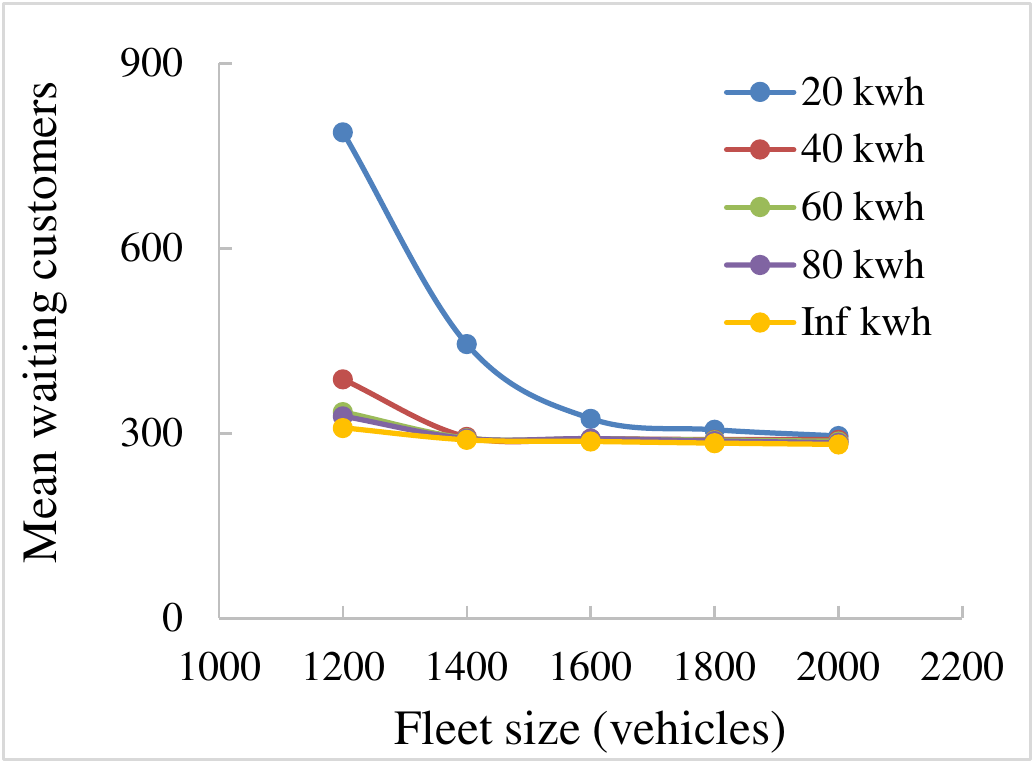}\label{F:manhattan_mdp0001_b}} \\
	
	\subfloat[]{\includegraphics[scale=0.28]{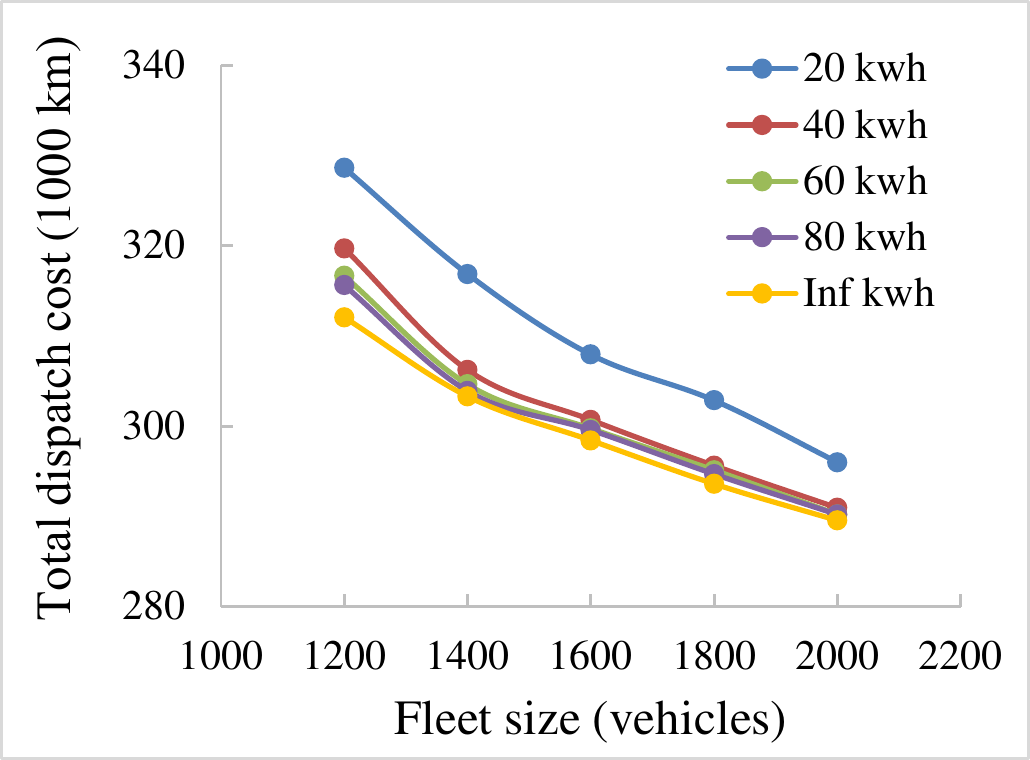}\label{F:manhattan_mdp0001_c}} 
	\subfloat[]{\includegraphics[scale=0.28]{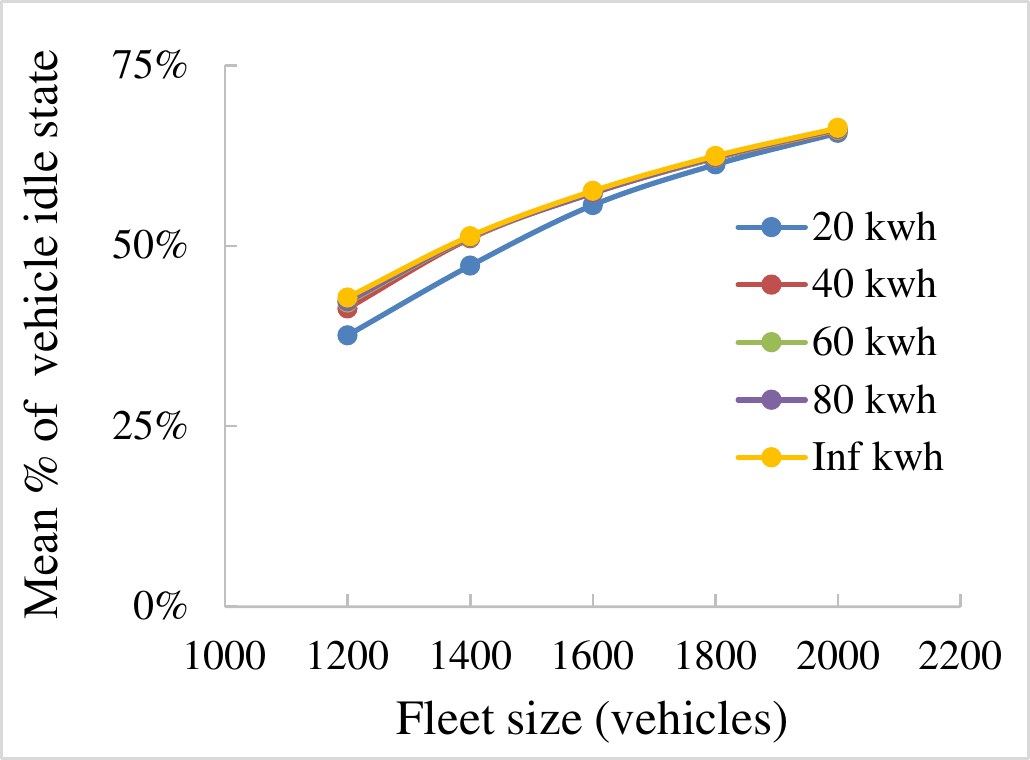}\label{F:manhattan_mdp0001_d}}
	\caption{System performances under different battery capacities and fleet sizes, \textsf{MDPP} with V=0.001, Manhattan, NY.\label{F:manhattan_mdp0001}}		
\end{figure}
Since we have already seen the results for maximum waiting time of 30 min in Tables~\ref{t1} and \ref{t2}, Fig.~\ref{F:manhattan_mdp0001} shows the results for case of no maximum waiting time (i.e., no customer abandonment). We find that when battery capacities increase from 20 kwh to 40 kwh, the total dispatch costs see dramatic reductions for all tested fleet sizes from 1,200 to 2,000 vehicles. The corresponding reductions in mean waiting times and mean numbers of waiting customers are obvious with fleet sizes of 1,400 vehicles or less. When the battery capacities 40 kwh or more, introducing larger battery capacities brings small benefit to the operator for all tested fleet sizes. This may be because all trips are short in this scenario. Hence, the key bottleneck that influences the system's performance is not battery capacity since a capacity of 40 kwh is sufficient.

Fig.~\ref{F:manhattan_mdp0001_d} shows the mean percentage idle time for all vehicles. Here, a vehicle is considered to be idle when it is not assigned to any customer. When the fleet size is 1,600 vehicles or more, vehicles are idle for more than half of the time on average, implying a waste of resources. Fig.~\ref{F:vd} further shows the vehicle dynamics (corresponding to partial simulations in Fig.~\ref{F:manhattan_mdp0001}) on a typical working day (Wednesday, June 6, 2018). 
\begin{figure}[h!] 
	\centering
	
	\resizebox{1.0\textwidth}{!}{%
		\includegraphics[scale=0.18]{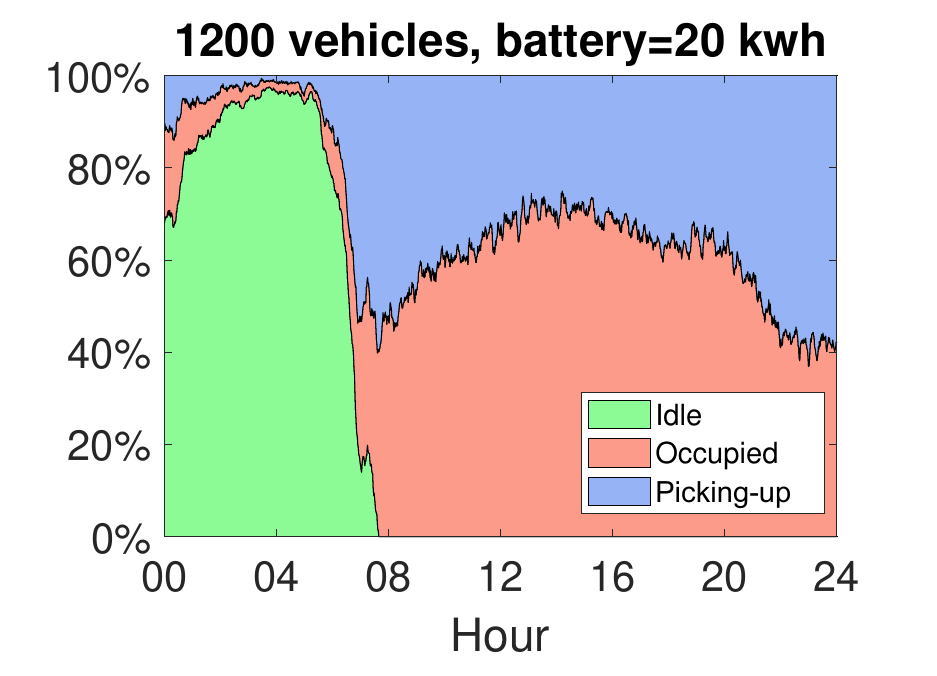}
		\includegraphics[scale=0.18]{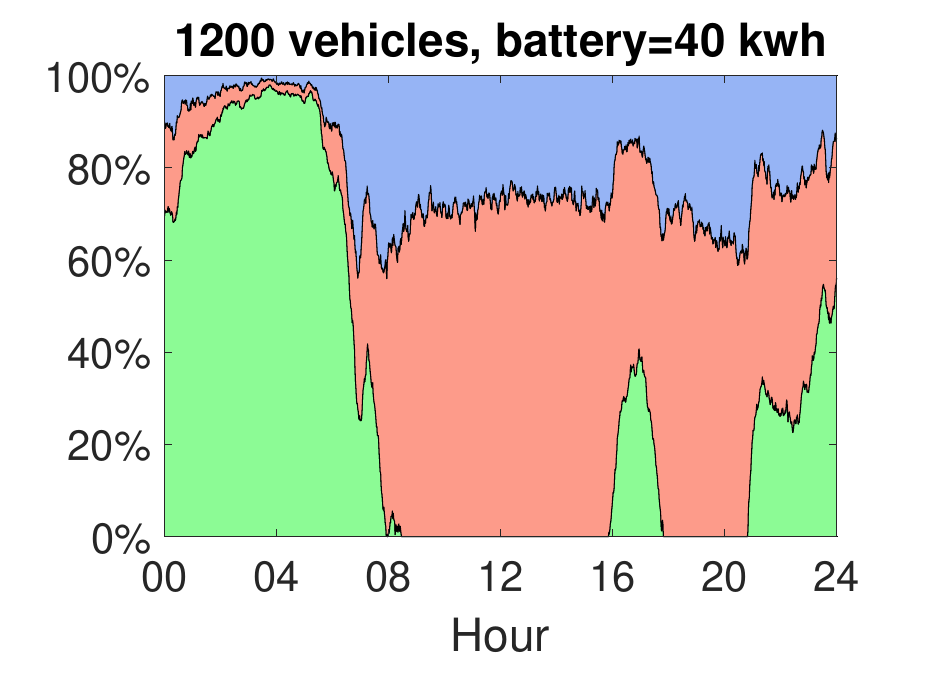}
		\includegraphics[scale=0.18]{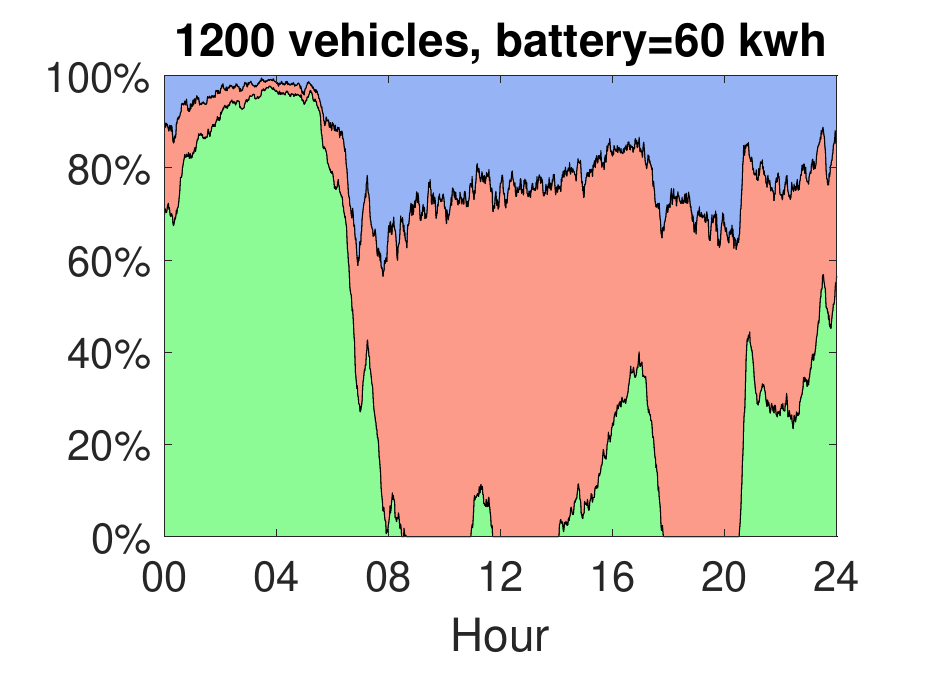}}	
	
	\resizebox{1.0\textwidth}{!}{%
		\includegraphics[scale=0.18]{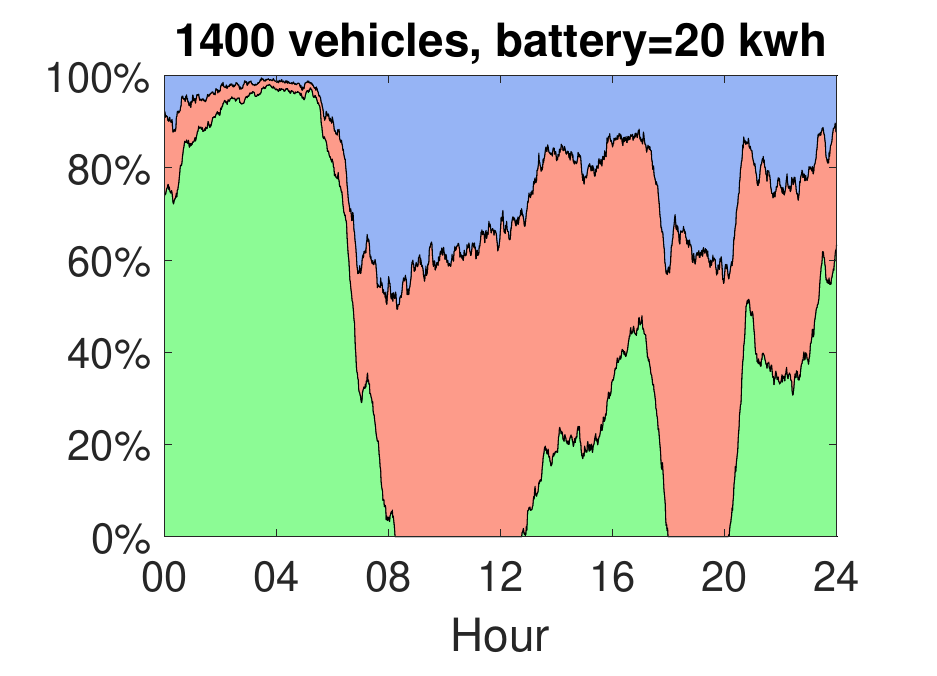}
		\includegraphics[scale=0.18]{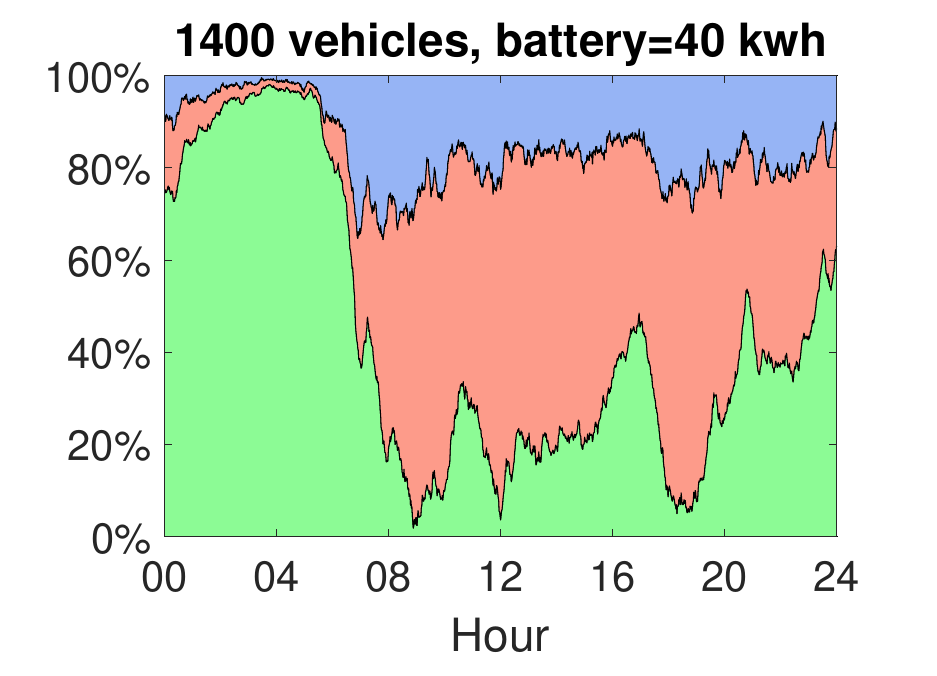}
		\includegraphics[scale=0.18]{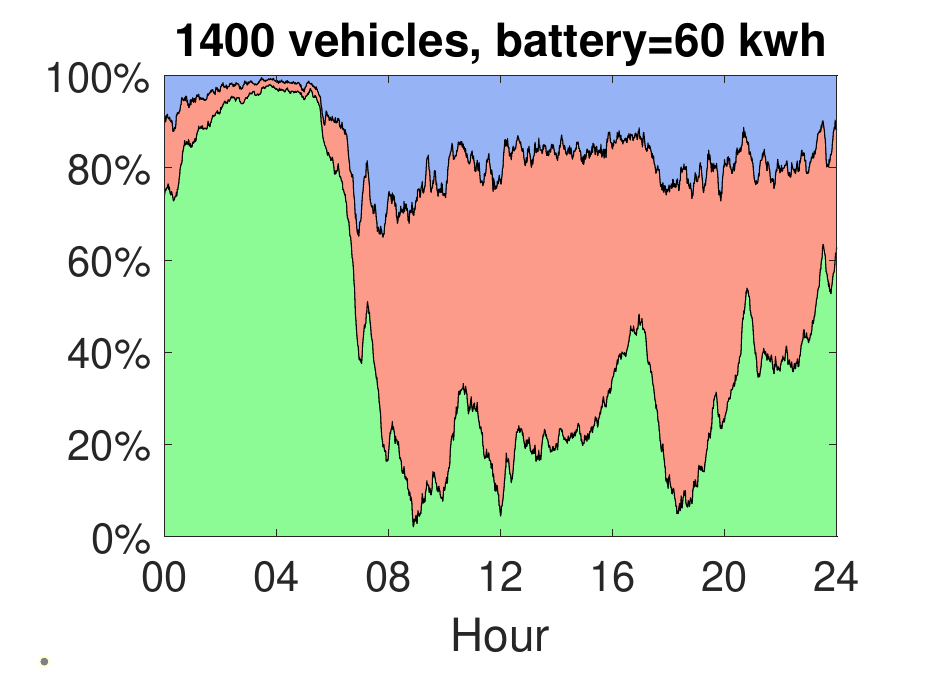}}
	
	\resizebox{1.0\textwidth}{!}{%
		\includegraphics[scale=0.18]{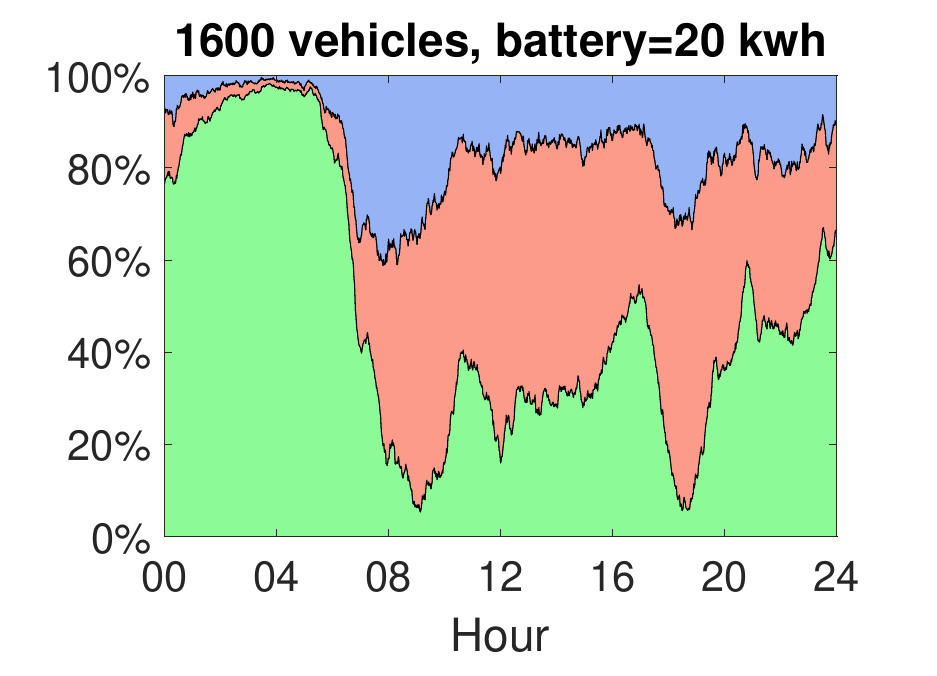}
		\includegraphics[scale=0.18]{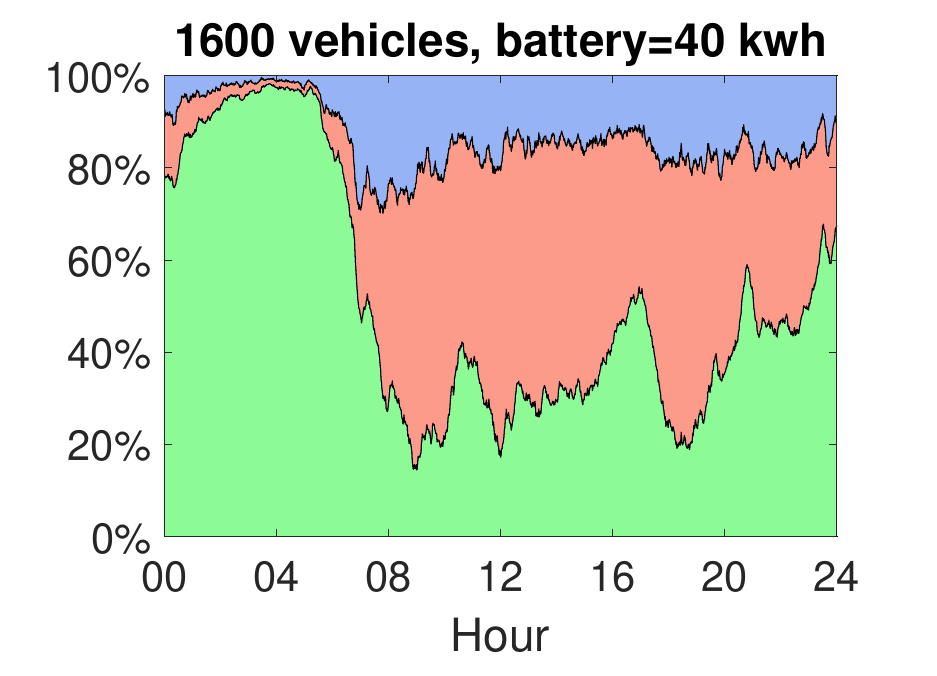}
		\includegraphics[scale=0.18]{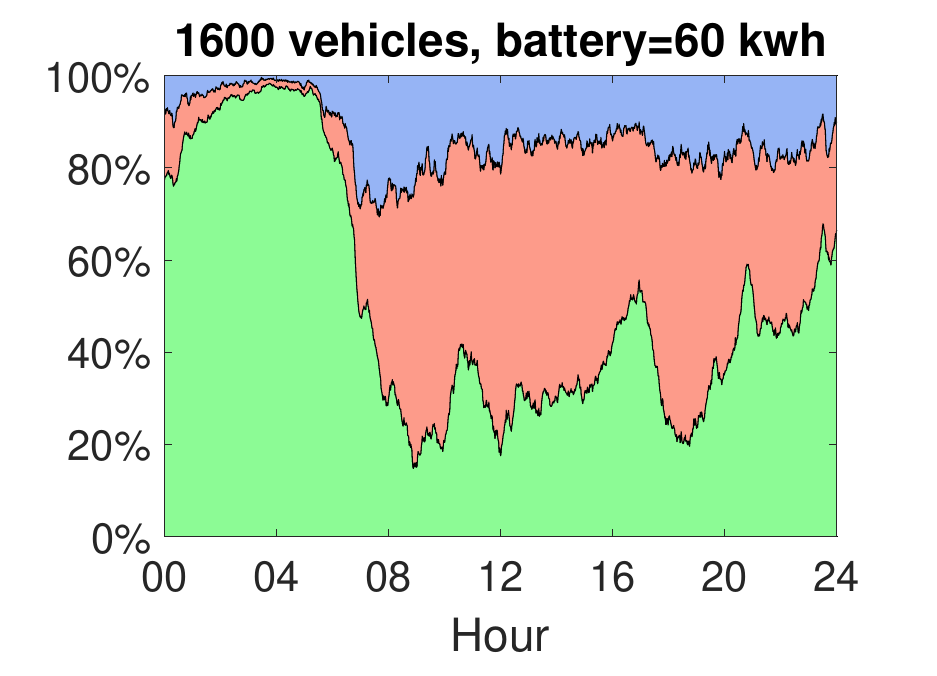}}
	
	\caption{Vehicle dynamics under different battery capacities and fleet sizes, \textsf{MDPP} with V=0.001, Manhattan, NY.}
	\label{F:vd}
\end{figure}
We can always find times when the system has no idle vehicles when the fleet size is 1,200 vehicles. This is not the case with a fleet size of 1,400 vehicles and battery capacity of 40 kwh. A higher battery capacity of 60 kwh makes little difference when the fleet size is 1,400 vehicles. When we have 1,600 vehicles, battery improvement from 20 kwh to 40 kwh makes a small difference, while an increase of battery capacity from 40 kwh or 60 kwh brings little benefit.

Fig.~\ref{F:wcd} shows the dynamics of waiting customers against time, corresponding to simulations in Fig.~\ref{F:manhattan_mdp0001}, with a comparison to the NonEV NoReb policy. We find that with a battery capacity of 20 kwh, even a fleet size of 1,600 will result in a surge in waiting customers on the last day of the simulated week. Combining the information from Fig.~\ref{F:manhattan_mdp0001} and Fig.~\ref{F:vd}, we recommend an optimal fleet configuration of fleet size = 1,400 vehicles and battery capacity = 40 kwh for this scenario when implementing \textsf{MDPP} with $V=0.001$.

\begin{figure}[h!]
	\centering
	
	\resizebox{1.0\textwidth}{!}{%
		\includegraphics[scale=0.22]{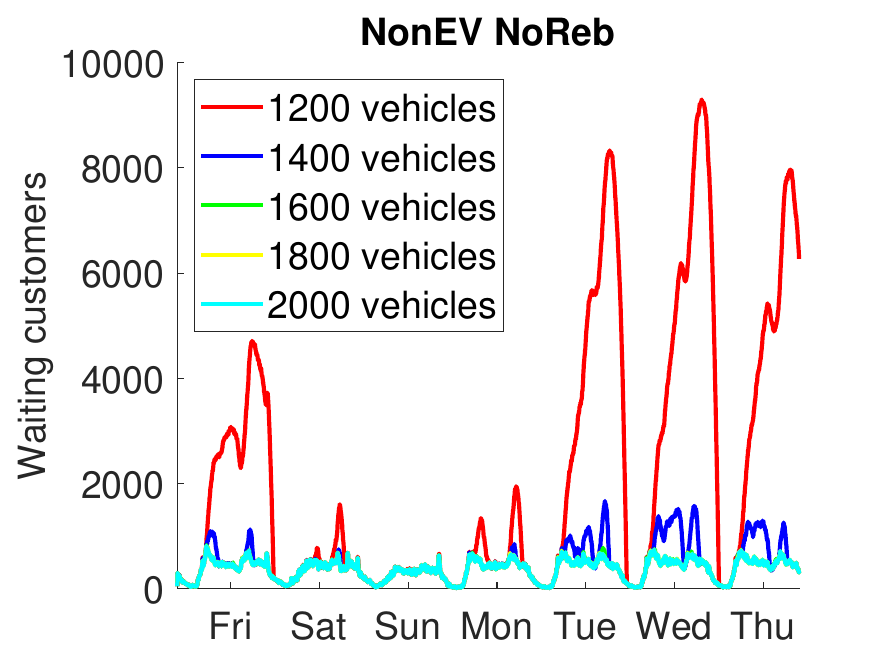}
		\includegraphics[scale=0.22]{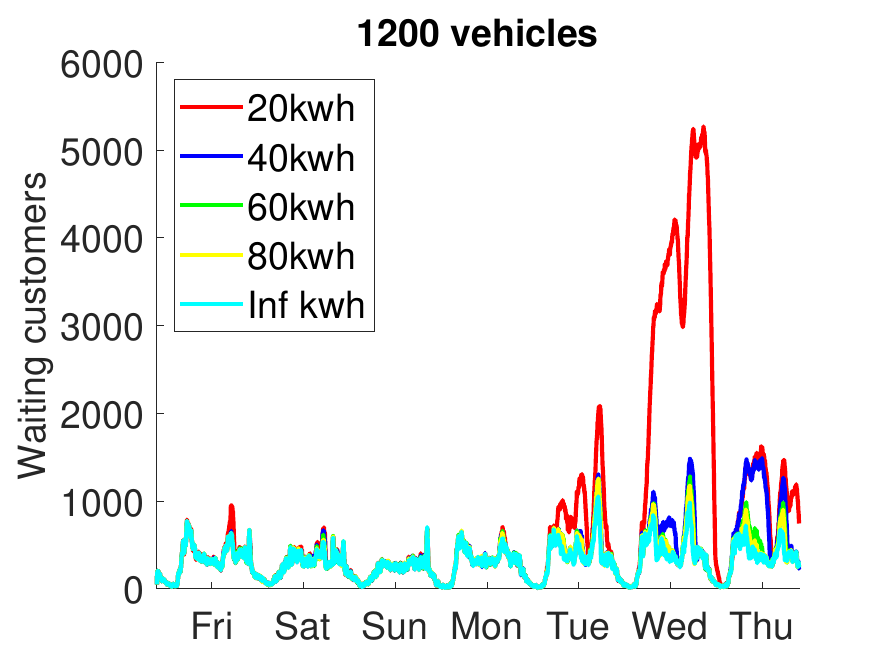}
		\includegraphics[scale=0.22]{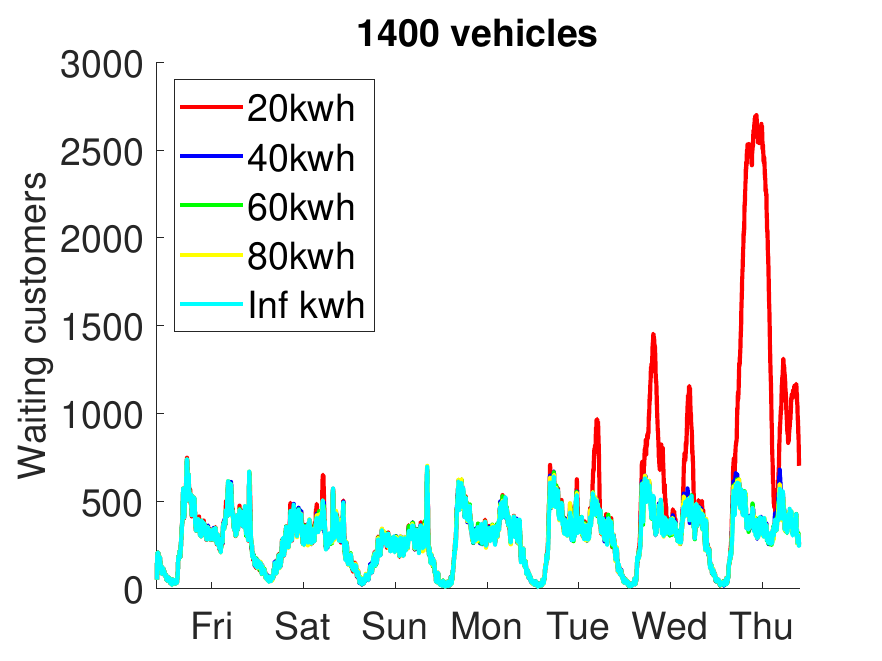}}
	
	\resizebox{1.0\textwidth}{!}{%
		\includegraphics[scale=0.22]{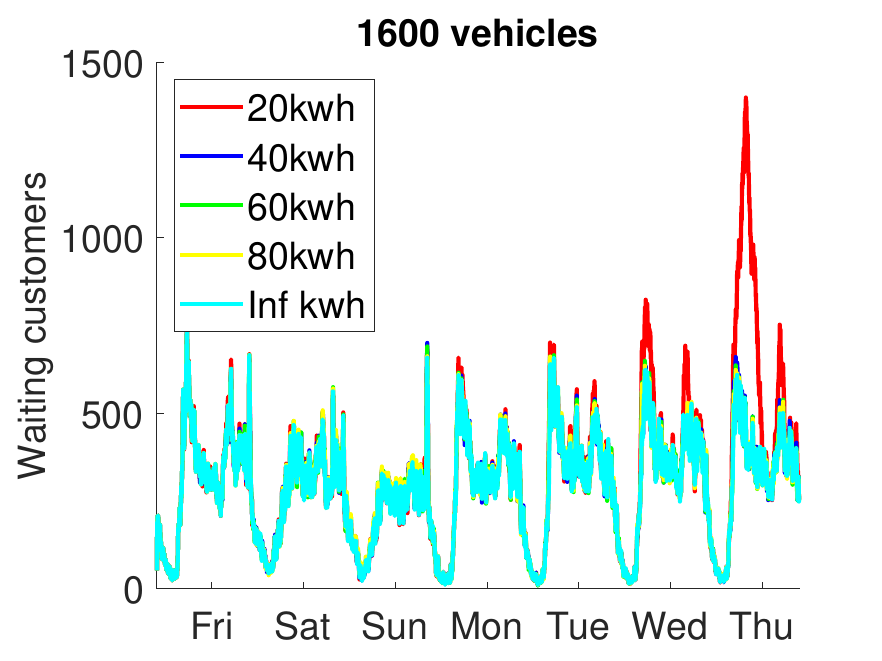}
		\includegraphics[scale=0.22]{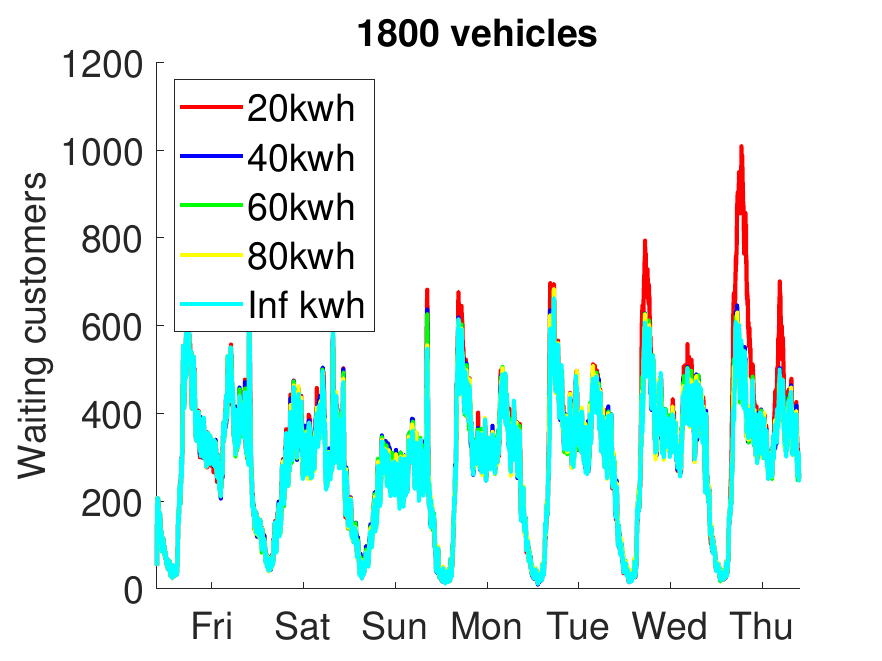}
		\includegraphics[scale=0.22]{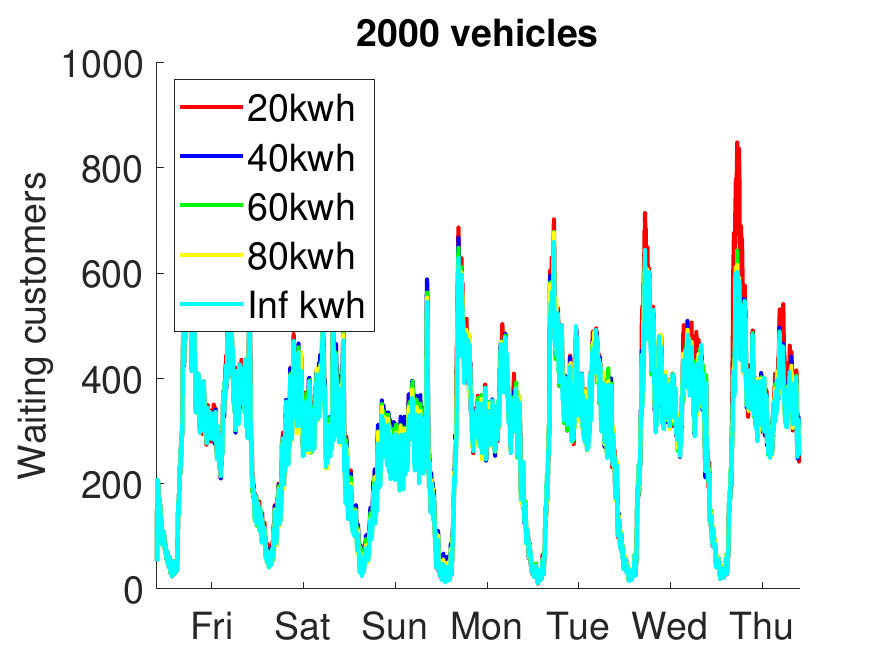}}
	
	\caption{Numbers of waiting customers against time under different battery capacities and fleet sizes, \textsf{MDPP} with V=0.001, Manhattan, NY.}
	\label{F:wcd}
\end{figure}

\section{Conclusion and outlook}
\label{S:Conc}
We propose a minimum drift plus penalty (\textsf{MDPP}) scheduling policy that can be implemented in real-time for large networks for vehicle dispatching in Shared Automated Electric Vehicle (SAEV) systems. The proposed approach has four main merits: (i) It does not require a priori knowledge of customer arrival rates to the different parts of the system.  The algorithm only requires knowledge of the waiting times of head-of-line customers and dispatch costs of vehicles at the time of assignment.  In other words, the algorithm does not need to anticipate customer arrivals or vehicle returns, both very difficult to gauge in practice. (ii) The approach ensures the stability of customer waiting times: we analytically demonstrated in Sec.~\ref{sec:stability} that as long as there is a way to ensure that customer waiting times do not explode, our real-time algorithm will find it. (iii) The algorithm ensures that the deviation of dispatch costs from a target dispatch cost can be controlled.  We use the dispatch cost that can be achieved with an \textsf{S-only} algorithm as our target, noting that \textsf{S-only} algorithms are known to be network stabilizing, but require information that is not available in practice. (iv) The proposed solution technique has a computational time-complexity that allows for real-time implementation.  By frequently updating the assignment solutions, we are able to achieve a time complexity that grows linearly with the number of occupied customer nodes in the system in the worst case.

Based on real demand from the BMW ReachNow car-sharing project in Brooklyn, NY and the Yellow Cab data in Manhattan, NY we test for both low and high demand scenarios, with long trips and short trips, respectively. The charging station locations are based on a real world distribution. Comparisons with other policies under different settings (battery capacities, charge powers, and fleet sizes) indicate that \textsf{MDPP} with appropriately chosen values for $V$ outperform all other algorithms in terms of waiting time, numbers of waiting customers, and vehicle dispatch cost.

The vehicle-to-customer assignment and vehicle recharging problem are considered together in \textsf{MDPP}, while vehicle rebalancing is not included. Future research can include an improved \textsf{MDPP} that considers vehicle relocation. One feature of the proposed \textsf{MDPP}, which may be considered a limitation, is that it does not provide service in a first-come first served way.  For example, customers at the same location but with different charging requirements are served in parallel, a customer that arrives later may enter service faster.  Also, as vehicles are returned to the system, it is possible that a customer that enters service later completes service earlier as a result of being assigned a newly returned vehicle.  Allowing for vehicle re-assignment while customers are in service (to provide faster service) is one possible way to overcome this. Such considerations could be of practical importance to operators and customers and, thus, deserve to be addressed in future research as well. 
Other improvements to the present approach would include more guidance into choosing the penalty constant $V$.  We observed that it plays a critical role in the performance of our method.

\section*{Acknowledgments} \label{Ack}
This work was supported in part by the New York University Abu Dhabi (NYUAD) Center for Interacting Urban Networks (CITIES), funded by Tamkeen, through the New York University Abu Dhabi (NYUAD) Research Institute Award under Grant CG001, and in part by the Swiss Re Institute through the Quantum Cities\textsuperscript{TM} Initiative. This research was also supported by the C2SMART University Transportation Center. Data was provided by BMW ReachNow car-sharing operations in Brooklyn, New York, USA.  The views expressed in this article are those of the authors and do not reflect the opinions of the sponsors, the funders, or supporting bodies.

\bibliography{refs_R1}
	
\end{document}